\newtheorem{theorem}{Theorem}
\newtheorem{definition}{Definition}
\newtheorem{proposition}[theorem]{Proposition}
\newtheorem{lemma}[theorem]{Lemma}
\newtheorem{corollary}[theorem]{Corollary}
\newtheorem{example}{Example}
\newcommand\ld{\lambda}
\newcommand\Sig{\underline{\Sigma}}	
\newcommand\join{\vee}							
\newcommand\bjoin{\bigvee}					
\newcommand\meet{\wedge}						
\newcommand\bmeet{\bigwedge}				
\newcommand\cH{\mathcal{H}}					
\newcommand\PH{\mathcal{P}(\cH)}		
\newcommand\CSet{\mathbf{Set}}			
\newcommand\op{\operatorname{op}}		
\newcommand\tphi{\tilde\phi}
\newcommand\mc[1]{\mathcal{#1}}
\newcommand\on[1]{\operatorname{#1}}
\newcommand\BH{\mc{B}(\cH)}
\newcommand\sa{\on{sa}}
\newcommand\BHsa{\BH_{\sa}}
\newcommand\cN{\mc{N}}
\newcommand\cNsa{\cN_{\sa}}
\newcommand\bbR{\mathbb{R}}
\newcommand\bbC{\mathbb{C}}
\newcommand\PN{\mc{P}(\cN)}
\newcommand\PV{\mc{P}(V)}
\newcommand\cJ{\mc{J}}
\newcommand\VN{\mc{V}(\cN)}
\newcommand\VH{\mc{V}(\cH)}
\newcommand\tr{\operatorname{tr}}
\newcommand\Vsa{V_{\sa}}
\newcommand\PPi{\underline{\Pi}}
\newcommand\tPPi{\underline{\tilde\Pi}}
\newcommand\ga{\gamma}
\newcommand\PpH{\mc{P}_1(\cH)}
\newcommand\Aut[1]{\on{Aut}(#1)}
\newcommand\POne{\underline{1}}
\newcommand\PP{\underline{P}}
\newcommand\spec[1]{\on{sp}(#1)}
\newcommand\ra{\rightarrow}
\newcommand\lra{\longrightarrow}
\newcommand\lmt{\longmapsto}
\newcommand\Lra{\Leftrightarrow}
\newcommand\hra{\hookrightarrow}
\begin{document}
\title{\textbf{Contextuality and the fundamental theorems of quantum mechanics}}
\author{Andreas D\"oring\thanks{info@andreas-doering.net} \hspace{0.025cm} and Markus Frembs\thanks{markus.frembs13@imperial.ac.uk}\hspace{0.025cm} $^{1}$}
\date{\small{\textit{$^1$Department of Physics, Imperial College, Prince Consort Road, SW7 2AZ, United Kingdom}}}
\maketitle

\abstract{Contextuality is a key feature of quantum mechanics, as was first brought to light by Bohr \cite{Bohr98} and later realised more technically by Kochen and Specker \cite{KocSpe67}. Isham and Butterfield put contextuality at the heart of their topos-based formalism and gave a reformulation of the Kochen-Specker theorem in the language of presheaves in \cite{IshBut98}. Here, we broaden this perspective considerably (partly drawing on existing, but scattered results) and show that apart from the Kochen-Specker theorem, also Wigner's theorem, Gleason's theorem, and Bell's theorem relate fundamentally to contextuality. We provide reformulations of the theorems using the language of presheaves over contexts and give general versions valid for von Neumann algebras. This shows that a very substantial part of the structure of quantum theory is encoded by contextuality.

\tableofcontents
\section{Introduction}
\textbf{Structural theorems of quantum theory.} There is a small number of key theorems in the foundations of quantum theory which throw the differences between classical and quantum in sharp relief. The first and oldest of these is Wigner's theorem \cite{Wig31} from 1931, which shows that each transformation on the set of pure states of a quantum mechanical system that preserves transition probabilities is given by conjugation with a unitary or anti-unitary operator. Hence, the pure state space of a quantum system has a very specific structure.

In 1957, Gleason \cite{Gle57} proved that any assignment of probabilities to projection operators such that probabilities assigned to orthogonal projections add up must be given by a quantum state already. Since projections represent propositions about the values of physical quantities\footnote{We use `physical quantity' synonymously with `observable'.} by the spectral theorem, Gleason's result justified the use of the Born rule---30 years after its introduction---when calculating expectation values in quantum mechanics.

Bell \cite{Bell1964} proved in 1964 that in local hidden variable theories, there is an upper bound on correlations that can exist between spatially separated subsystems of a composite system. Quantum theory violates this upper bound and hence cannot be (or be replaced by) a local hidden variable theory. There are a number of fine interpretational points, but it is largely accepted today that the violation of Bell's inequality has been confirmed experimentally.

Finally, in 1967 Kochen and Specker \cite{KocSpe67} showed that under mild and natural conditions, it is mathematically impossible to assign values to all physical quantities simultaneously. Usually, this is phrased as saying that there are no non-contextual value assignments. Since in classical physics, states do assign values to all physical quantities at once, this is a rather strong result and a severe obstacle to any realist interpretation of the quantum formalism.

Each of these landmark theorems singles out a central aspect of quantum theory that distinguishes it from classical physics. The contents of the theorems are very distinct; each one concerns a different structural aspect of quantum theory. Yet, in this article we will show that in fact all these theorems have a common source, which is \emph{contextuality}. This strongly suggests that contextuality is at the heart of quantum theory and is largely responsible for the structural differences between classical and quantum.

\textbf{Contextuality and its mathematical formalisation.} Contextuality, which was introduced by Bohr \cite{Bohr98}, is a deep concept, and like many other deep concepts in physics, it has taken on a range of meanings and interpretations in the literature \cite{Griffiths2019}. This has led to a certain danger of talking vaguely and at cross-purposes. In order to avoid this, we define precisely what we mean by contextuality and we give a rather minimal and conservative definition that comes with little interpretational baggage.

We say that a physical system has \emph{physical contextuality} if it has some incompatible physical quantities, that is, quantities that cannot be measured simultaneously in an arbitrary state. For all we know today, physical contextuality in this sense is a characteristic feature of all quantum systems, but not of classical systems, and the restrictions on co-measurability are fundamental and not just due to a lack of experimenters' (or theoreticians') finesse.

Even in a physical system with physical contextuality, there are families of compatible, co-measurable physical quantities. (In an extreme case, such a family could consist of a single physical quantity, although this does not occur for quantum systems.) A family of compatible physical quantities is called a \emph{physical context}. We remark that
\begin{itemize}
	\item Our definition of physical contextuality does not refer to actual measurement setups, but we could, as is often done, identify a measurement setup with a physical context, i.e., with the family of physical quantities that is measured by the setup.

	\item Our definition does not refer to values measured and/or possessed by physical quantities, nor to probabilities. This is not necessary for our purposes, and it avoids many interpretational issues that often cloud the discussion. In particular, we can consider whether non-contextual assignments of values, probabilities etc. are possible in our kind of contextual theory.
	
	\item In order to be compatible according to our definition, two physical quantities must be co-measurable in \emph{all} states. Hence, two physical quantities that are co-measurable in some states can still be incompatible and lie in different contexts. We are not concerned with weak measurements, weak values etc.
	
\end{itemize}

The definition of physical contextuality is given in a somewhat intuitive manner, since no precise mathematical formalisation of `physical quantities' and `states' is provided so far. We remark---following Kochen and Specker \cite{KocSpe67} and especially Conway and Kochen \cite{ConKoc06}---that much less than the full Hilbert space formalism needs to be given: as long as the physical system under consideration (or some subsystem of it) has the physical quantity usually called `spin-$1$', which can be measured in different directions in space, the system is contextual, since measurements in different directions cannot be performed simultaneously. The mathematical formalism required to describe this situation is a modest part of projective geometry in $3$ dimensions.

Yet, following standard practice, we will assume the usual Hilbert space formalism in which the set of (bounded) physical quantities is mathematically represented by the set $\BHsa$ of bounded self-adjoint operators on the Hilbert space $\cH$ of the system. The self-adjoint operators form the real part of the complex, noncommutative algebra of bounded operators on the Hilbert space of the system. A context is mathematically formalised as a commutative subalgebra of this noncommutative algebra. This can be generalised to von Neumann algebras of physical quantities. We will quickly recall the necessary mathematical background in Sec.~\ref{Sec_MathPrelims}. Moreover, we will introduce the context category and some minimal background on presheaves in order to make this article largely self-contained.

\textbf{Structural theorems and contextuality.} We will show that each of the key theorems mentioned above has an \emph{equivalent} reformulation in the language of presheaves over contexts, thus showing the close and sometimes surprising connections between these theorems and contextuality. The prototype of such results is found in the work by Isham, Butterfield, and Hamilton \cite{IshBut98,HIB00} on the Kochen-Specker theorem. We will extend their results to the other fundamental theorems by Wigner, Gleason, and Bell and will provide a bigger, more coherent picture of the role of contextuality in the foundations of quantum mechanics.

In Sec.~\ref{Sec_Wigner}, Wigner's theorem is treated. The relevant presheaf is trivial, since Wigner's theorem is based on the mere order of contexts, as we will show. The Kochen-Specker theorem and its reformulation are presented in Sec.~\ref{Sec_KS}. Here, the so-called spectral presheaf plays the key role. It can be seen as a generalised state space for a quantum system, and the Kochen-Specker theorem is equivalent to the fact that this space has no points in a suitable sense. Gleason's theorem is treated in Sec.~\ref{Sec_Gleason}. Its reformulation is based on the so-called probabilistic presheaf, which does have points, i.e., global sections, and these correspond exactly with quantum states. Finally, we consider Bell's theorem and its relations to contextuality in Sec.~\ref{Sec_Bell}. The relevant presheaf is a bipartite (or multipartite) version of the probabilistic presheaf. This presheaf is based on a simple way of composition via contexts, yet it is rich enough to encode all quantum correlations (and not more). In fact, by adding a notion of time orientation in subsystems it singles out quantum states unambiguously. Sec.~\ref{Sec_Conclusion} concludes.

\section{Mathematical preliminaries}	\label{Sec_MathPrelims}
\subsection{Algebras of physical quantities}	\label{Subsec_Algebra}
Throughout, we will take the perspective of \emph{algebraic quantum mechanics}, that is, we emphasise the role of the physical quantities, or observables, and the algebra they form. This means no departure from standard textbook quantum mechanics, just a slightly different perspective that allows for substantial generalisations. We will assume that the physical quantities generate a von Neumann algebra (more details below; standard references are e.g. \cite{KadRin83,KadRin86,Tak79}). This allows to encode symmetries of the quantum system directly at the algebraic level by picking a von Neumann algebra that has a non-trivial commutant. Superselection rules can be modelled algebraically by algebras with non-trivial center.

\textbf{The algebra of bounded operators, weak operator topology and norm topology.} Let $\cH$ be the complex Hilbert space of the quantum system under consideration. $\cH$ may be finite- or infinite-dimensional. The set of all bounded linear operators on $\cH$ is denoted $\BH$. Linear operators can be added, multiplied (by composition), and multiplied by complex numbers; $\BH$ is an algebra over $\bbC$. Of course, multiplication of operators is not commutative in general.

If $\cH$ is of finite dimension $n$, then $\cH = \bbC^n$ and $\BH = M_n(\bbC)$, the algebra of all $n\times n$-matrices with complex entries. If $\cH$ is infinite-dimensional, then $\BH$ carries several interesting topologies (which all coincide in finite dimensions). We will only consider the weak operator topology and the norm topology. For details and many more results on topologies on $\BH$ and how they relate to each other, see e.g. \cite{Tak79}. 

Let $\langle\psi,\eta\rangle$ denote the inner product of $\psi,\eta\in\cH$. Let $(a_i)_{i\in I}$ be a family of bounded operators on $\cH$, labeled by a directed set $I$. The \emph{weak operator topology} on $\BH$ is the topology of pointwise weak convergence, i.e., $a_i\ra a$ weakly if $\langle\psi,a_i\eta\rangle\ra\langle\psi,a\eta\rangle$ for all $\psi,\eta\in\cH$. We will simply say \emph{weak topology} from now on.

Let $a\in\BH$. The \emph{norm of $a$} is defined as
\[
			||a|| := \inf\{c\in\bbR \mid ||a\psi||\leq c||\psi|| \text{ for all } \psi\in\cH\},
\]
where $||\psi||=\sqrt{\langle\psi,\psi\rangle}$ is the norm of $\psi$. The topology on $\BH$ defined by the operator norm is called the \emph{norm topology} (or \emph{uniform topology}).

\textbf{Physical quantities, self-adjoint operators and von Neumann algebras.} Let $\cH$ be the Hilbert space of the quantum system under consideration. The physical quantities of the quantum system are represented by the bounded \emph{self-adjoint operators} on $\cH$. The fact that we only consider \emph{bounded} operators is not a severe restriction, since in the operator-algebraic framework there are ways of dealing with unbounded operators (by affiliating them to the von Neumann algebra of physical quantities, see e.g. section 5.6 in vol. 1 of \cite{KadRin83,KadRin86}).

We assume as usual that the self-adjoint operators representing physical quantities form a \emph{real} vector space under addition, denoted $O_{\sa}$. Multiplication (composition) of self-adjoint operators is not commutative in general, $ab\neq ba$, and the product $ab$ is self-adjoint if and only if $a$ and $b$ commute. The complexification of the set of self-adjoint operators,
\begin{align*}
			O_{\sa}+iO_{\sa} = \{a_1+ia_2 \mid a_i\in O_{\sa}\}\; ,
\end{align*}
is a complex vector space that is closed under multiplication, moreover it is a complex \emph{algebra}. By construction, it is a subalgebra of $\BH$, the algebra of all bounded operators on the Hilbert space $\cH$. We will always assume that this complex algebra is closed in the weak topology, i.e., the algebra is a \emph{von Neumann algebra} (see e.g. \cite{KadRin83}).\footnote{For finite-dimensional $\cH$, weak closure is automatic.} We denote this algebra of physical quantities by $\cN$. The real vector space of self-adjoint operators in $\cN$ is denoted $\cNsa$.

\textbf{Projections and the spectral theorem.} Recall that a \emph{projection operator} is a self-adjoint operator $p$ such that $p^2=p$. The \emph{rank} of a projection $p$ is the dimension of the subspace that $p$ projects onto (which may be infinite). By the \emph{spectral theorem}, each self-adjoint operator $a$ is the norm limit of finite real linear combinations of projections, i.e., $a$ can be approximated by operators of the form $\sum_{i=1}^n A_i p_i$, where the $A_i$ are real numbers and the $p_i$ are mutually orthogonal projections, that is, $p_i p_j=\delta_{ij}p_i$. In finite dimensions, $a$ is exactly of the form $a=\sum_{i=1}^n A_i p_i$, where the projection $p_i$ projects onto the eigenspace of the eigenvalue $A_i$ of $a$. In infinite dimensions, $a$ may fail to have any eigenvalues, but it has a non-empty spectrum.

We had assumed that the algebra $\cN$ is weakly closed. This guarantees that $\cN$ is generated by the projection operators that it contains in the following sense: let $a$ be a self-adjoint operator in $\cN$. All the projection operators $p_i$ in the approximations to $a$ of the form $\sum_{i=1}^n A_i p_i$ can be chosen from $\cN$. Moreover, any operator $b$ in a von Neumann algebra $\cN$ can be decomposed uniquely as $b=a_1+ia_2$, where $a_1, a_2$ are self-adjoint operators in $\cN$. Since both $a_1$ and $a_2$ can be approximated by real linear combinations of projections in $\cN$, also $b$ can be approximated by projections in $\cN$.

\textbf{The double commutant and weak closure.} There is another, more algebraic way of expressing weak closure. Let $S$ be a family of bounded operators on $\cH$. The \emph{commutant of $S$}, denoted $S'$, is the set of all operators in $\BH$ that commute with all operators in $S$,
\begin{align*}
			S'=\{b\in\BH \mid \forall a\in S:[a,b]=0\}\; .
\end{align*}
Von Neumann's \emph{double commutant theorem} (see e.g. \cite{KadRin83}) shows that a subalgebra $\cN$ of $\BH$ is weakly closed, i.e., a von Neumann algebra if and only if $\cN=\cN''$. Here $\cN''=(\cN')'$ is the commutant of the commutant of $\cN$. Moreover, if $\PN$ denotes the projections in a von Neumann algebra $\cN$, then one can show $\PN''=\cN$.

\begin{example}
The canonical example of a von Neumann algebra on a given Hilbert space is $\BH$, the algebra of \emph{all} bounded operators on $\cN$. In fact, this is the algebra (implicitly) used in most standard textbooks on quantum mechanics.

Clearly, $\BH$ is the largest von Neumann algebra on the given Hilbert space $\cH$. Any other von Neumann algebra $\cN$ formed by bounded operators operators on $\cH$ is a weakly closed subalgebra of $\BH$.

If $\cH$ is finite-dimensional, $\cH=\bbC^n$, the algebra $\BH$ simply is the algebra of all $n\times n$-matrices. Self-adjoint operators are given by Hermitian matrices. In finite dimensions, every operator is bounded and every subalgebra of $\BH$ is weakly closed.
\end{example}

\textbf{The lattice of projections.} The projections in a von Neumann algebra $\cN$ form a \emph{complete orthomodular lattice}, denoted $\PN$. The lattice operations can most easily be understood in geometric terms: let $S_p$ denote the closed subspace of Hilbert space $\cH$ onto which $p$ projects. (Clearly, there is a one-to-one correspondence between projection operators and closed subspaces of $\cH$.) The \emph{meet $p\meet q$} of two projections is the projection onto the subspace $S_p\cap S_q$, and the \emph{join $p\join q$} is the projection onto the closure of the subspace generated by $S_p$ and $S_q$. The largest projection is $1$, the identity operator, which projects onto the whole of $\cH$. The smallest projection is $0$, the zero projection, which projects onto the null subspace. Note that $\PH$ is \emph{not distributive}, that is, in general we have
\begin{align*}
			p\meet(q\join r) \neq (p\meet q)\join(p\meet r)\; .
\end{align*}
It is easy to find examples in any Hilbert space of dimension $2$ or greater: just take $p,q,r$ to be projections onto one-dimensional subspaces lying in a plane. 

Every family of projections $(p_i)_{i\in I}$ has a least upper bound (join) in $\PN$, denoted $\bjoin_{i\in I}p_i$, and also has a greatest lower bound (meet) in $\PN$, denoted $\bmeet_{i\in I}p_i$, so $\PN$ is a \emph{complete} lattice. Moreover, there is a \emph{complement} defined on $\PN$, given by $1-p$ for $p\in\PH$. Geometrically, $1-p$ is the projection onto the orthogonal complement of the subspace $S_p$ that $p$ projects onto. 

In quantum logic, the projection operators are interpreted as propositions about the values of physical quantities. Let $a$ be a self-adjoint operator representing some physical quantity. For simplicity, assume that $a=\sum_{i=1}^n A_i p_i$. Then the projection $p_i$ represents the proposition ``the physical quantity (represented by) $a$ has the value $A_i$''. The lattice operations $\meet,\join$ and the complement are interpreted logically as And, Or, and Not. Non-distributivity and other conceptual problems make it difficult to interpret quantum logic in any straightforward way, though.


\textbf{States on a von Neumann algebra.} A \emph{positive operator} $a$ in a von Neumann algebra is an operator that is of the form $a=b^2$ for some self-adjoint operator $b\in\cNsa$. Clearly, positive operators are self-adjoint. A \emph{state} on a von Neumann algebra $\cN$ is a linear map $\rho:\cN\ra\bbC$ such that
\begin{itemize}
	\item [(a)] $\rho(1)=1$,
	\item [(b)] for all positive operators $a\in\cNsa$, $\rho(a)\geq 0$.
\end{itemize}
Hence, a state is a positive linear functional of norm $1$. The states on a von Neumann algebra $\cN$ form a convex set $\mc S(\cN)$. The extreme points of $\mc S(\cN)$ are called \emph{pure states}. In the case of $\cN=\BH$, the pure states are the familiar vector states.

A state $\rho$ is called \emph{multiplicative} if $\rho(ab)=\rho(a)\rho(b)$ for all $a,b\in\cNsa$. This is a strong condition: multiplicative states exist only on commutative von Neumann algebras. For these, they are exactly the pure states.

A state $\rho$ is called \emph{normal} if $\rho(a_i) \rightarrow \rho(a)$ for every monotone increasing net of operators $a_i \in \cN$ with least upper bound $a$. Equivalently, $\rho$ is normal if it is completely additive, $\rho(\sum_{i\in I}p_i) = \sum_{i \in I} \rho(p_i)$, for every orthogonal family $(p_i)_{i\in I}$ of projections in $\PN$. Normal states correspond to density matrices (or, more precisely, to positive operators which have trace $1$), that is, every normal state $\rho:\cN\ra\bbC$ is of the form
\begin{align*}
			\forall a\in\cN: \rho(a)=\tr(\varrho a)\; ,
\end{align*}
where $\varrho$ is a density matrix (in finite dimensions), or more generally a positive operator of trace $1$. We will often use $\rho$ for both the state and its corresponding positive operator. Note that in finite dimensions, every self-adjoint operator has a finite trace, but in infinite dimensions, having a finite trace is a proper condition. Those operators that have a finite trace are called \emph{trace-class}.

Just as in standard quantum mechanics, in algebraic quantum theory (mathematical) states on a von Neumann algebra are interpreted as physical states of the quantum system, assigning expectation values to physical quantities.

\textbf{Jordan algebras and associativity.} Instead of the usual multiplication (by composition) of self-adjoint operators, one can also use another product, defined by
\begin{align*}
			\forall a,b\in\BHsa: a\cdot b := \frac{1}{2}(ab+ba)=\frac{1}{2}\{a,b\}\; .
\end{align*}
This is the \emph{Jordan product}, given by the anti-commutator of $a$ and $b$ up to the conventional factor $\frac{1}{2}$. In contrast to the usual product $ab$, the Jordan product $a\cdot b$ is always self-adjoint, even if $a$ and $b$ do not commute. Hence, there is a \emph{real Jordan algebra} $(\BHsa,\cdot)$. Its complexification is $(\BH,\cdot)$.

Clearly, the Jordan product is commutative, $a\cdot b=b \cdot a$ for all $a,b\in\BHsa$. To every von Neumann algebra $\cN$, we can associate a weakly closed Jordan algebra $\cJ(\cN)$ by replacing the generally noncommutative product in $\cN$ by the commutative Jordan product. There is a `shadow' of noncommutativity left: it is easy to show that $\cJ(\cN)$ is associative if and only if $\cN$ is commutative.

\subsection{Contexts and the context category}	\label{Subsec_Contexts}
\textbf{Mathematical contexts.} We begin with the basic definitions of a (mathematical) context and the partially ordered set of all contexts of a quantum system.
\begin{definition}
Let $\cH$ be the Hilbert space of the quantum system under consideration and let $\cN\subseteq\BH$ denote the von Neumann algebra of physical quantities of the system. A \emph{context} is a commutative von Neumann subalgebra of $\cN$.
\end{definition}
This definition of a (mathematical) context simply encodes the idea that within a (physical) context, which can be identified with a chosen experimental setup, all the physical quantities are compatible, co-measurable, and hence are represented mathematically by commuting self-adjoint operators. We denote contexts as $V,\tilde{V},\dot{V},...$

\textbf{The context category.} There is a natural partial order on contexts: some contexts are maximal, that is, one cannot add any further self-adjoint operators to them without destroying commutativity. Other contexts are non-maximal, they are commutative von Neumann subalgebras that are properly contained in a larger context. In fact, if $V$ is a non-maximal context, there are many different maximal contexts containing $V$. This can already be seen in finite dimensions:

\begin{example}
Consider a spin-$1$ system with Hilbert space $\cH=\bbC^3$ and algebra of physical quantities $\BH = M_3(\bbC)$. Recall that the physical quantities of the system are represented by self-adjoint operators in $\BH$. Let $p_1$ be a projection onto a one-dimensional subspace, and let
\begin{align*}
			V_1:=\{p_1,1\}''=\bbC p_1+\bbC 1
\end{align*}
be the non-maximal context generated by $p_1$. The projections in $V$ are $0,p_1,1-p_1$ and $1$. Let $p_2,p_3$ be two mutually orthogonal rank-$1$ projections such that $p_2+p_3=1-p_1$ (that is, $p_2$ and $p_3$ are also orthogonal to $p_1$). Then the context
\begin{align*}
			V_{123}:=\{p_1,p_2,p_3\}''=\bbC p_1+\bbC p_2+\bbC p_3
\end{align*}
is maximal and contains $V_1$. In fact, the context $V_1$ is contained in continuously many maximal contexts: there are infinitely many pairs of mutually orthogonal rank-$1$ projections $p'_2,p'_3$ such that $p'_2+p'_3=1-p_1$, and each set $\{p_1,p'_2,p'_3\}$ generates a maximal context $V_{12'3'}=\{p_1,p'_2,p'_3\}''$ that contains $V_1$. Geometrically, one rotates in the plane that $1-p_1$ projects onto to obtain continuously many pairs $p'_2,p'_3$.
\end{example}

Hence, any quantum system has many different contexts, with smaller ones contained in larger ones. The key idea for all that follows is that one should not just consider a single context of a quantum system, or a small number, but \emph{all} of them simultaneously. This idea goes back to Chris Isham \cite{IshBut98} and has become a fruitful perspective and powerful tool over the last 20 years. The topos approach to quantum theory and many subsequent developments are based on this idea, for an introduction see e.g. \cite{DoeIsh11}. The key definition is:

\begin{definition}
Let $\cH$ be the Hilbert space of the quantum system under consideration and let $\cN\subseteq\BH$ denote the von Neumann algebra of physical quantities. The \emph{context category} of the system is the set of all contexts, i.e., the set of all commutative von Neumann subalgebras of $\cN$, equipped with inclusion as partial order. The context category is denoted $\VN$. If $\cN=\BH$, then we will simply write $\VH$ (instead of $\mc{V}(\BH)$).
\end{definition}

The name \emph{context category} comes from the fact that every partially ordered set (or poset, for short) can also be regarded as a category \cite{McL71}. The objects are the elements of the poset, and there is an arrow $a\ra b$ if and only if $a\leq b$. Hence, in a poset seen as a category, arrows express the order, so there is at most one arrow between any two objects. We will actually need very little category theory in the following, but we will feel free to use some simple and well-established categorical notions where appropriate. The reader not familiar with category theory can equally well read `context poset' for `context category' throughout. If a context $\tilde{V}$ is contained in another context $V$, we will write $i_{\tilde{V} V}:\tilde{V}\hra V$ for the inclusion map. Alternatively, one could just write $\tilde{V}\subset V$.

The following, powerful result is due to Harding and Navara \cite{HarNav11}:
\begin{theorem}
Let $\cN$ be a von Neumann algebra not isomorphic to $\bbC\oplus\bbC$ or to $M_2(\bbC)$. Then the context category $\VN$ of $\cN$ determines the projection lattice $\PN$ as an orthomodular lattice up to isomorphism. Conversely, the projection lattice $\PN$ determines the poset $\VN$ up to isomorphism.
\end{theorem}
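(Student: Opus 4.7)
The plan is to handle the two directions separately, the converse being the easier. For the converse (that $\PN$ determines $\VN$), I would use the spectral theorem together with weak closure: every context $V \in \VN$ is generated by its projections, which form a complete Boolean subalgebra $\mc{B}(V) \subseteq \PN$; conversely, every complete Boolean subalgebra $B \subseteq \PN$ generates a commutative von Neumann algebra $B''$ whose projections are exactly $B$. This gives an order-isomorphism between $\VN$ and the poset of complete Boolean subalgebras of $\PN$ ordered by inclusion, so $\PN$ as an orthomodular lattice determines $\VN$ as a poset.

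For the harder direction (that $\VN$ determines $\PN$), I would first reconstruct each individual Boolean algebra $\mc{B}(V)$ from the poset $\VN$ alone. The down-set $\downarrow V$ in $\VN$ is precisely the poset of commutative von Neumann subalgebras of $V$, and by the converse direction applied inside $V$ this is isomorphic to the poset of complete Boolean subalgebras of $\mc{B}(V)$. A classical theorem of Sachs asserts that a Boolean algebra with more than four elements is determined up to isomorphism by the poset of its subalgebras, with an analogous statement in the complete setting; hence each $\mc{B}(V)$ can be recovered from $\downarrow V$ when $V$ is sufficiently large. One then glues these Boolean algebras along the poset: whenever $\tilde V \leq V$, the inclusion $\mc{B}(\tilde V) \hra \mc{B}(V)$ is determined by the order, and forming the colimit of the diagram $V \mapsto \mc{B}(V)$ yields the underlying set of projections. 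The orthocomplement $1-p$ of a projection $p$ is recovered as the unique non-trivial element of $\mc{B}(\{0,p,1\}'')$ other than $p$ itself, where the identification of $p$ is pinned down by looking at larger contexts in which $p$ lies. Meets and joins of commuting projections are computed inside any common Boolean context, and the full orthomodular structure on non-commuting projections is then determined from the order relation and these commuting meets.

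The main obstacle, and the reason for the exclusions of $\bbC \oplus \bbC$ and $M_2(\bbC)$, lies precisely in this gluing and orthocomplement step. For $\bbC \oplus \bbC$ there is only one non-trivial context and $\PN$ has only four elements, so Sachs' theorem fails to apply. For $M_2(\bbC)$ every rank-one projection $p$ has only $1-p$ as a commuting partner, so the unique non-trivial context containing $p$ is $\{0,p,1-p,1\}$ itself and there is no larger context from which to distinguish $p$ from $1-p$ by means of the poset structure alone. Outside these two cases every non-trivial projection lies in sufficiently many contexts that both its identity and its relations (commutation, orthogonality, meets and joins) to every other projection become detectable from $\VN$, which is exactly what makes the reconstruction well-defined.
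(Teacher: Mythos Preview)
The paper does not supply its own proof of this theorem; it simply attributes the result to Harding and Navara \cite{HarNav11} and remarks that their argument works for any orthomodular lattice with no maximal four-element Boolean sublattice, which is precisely why $\bbC\oplus\bbC$ and $M_2(\bbC)$ are excluded. Your converse direction (contexts $\leftrightarrow$ complete Boolean subalgebras of $\PN$) is correct and is indeed the easy half, and your explanation of the two excluded cases matches the paper's remark exactly.

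For the forward direction your outline is in the same spirit as Harding--Navara, but there is a real gap in the step ``the inclusion $\mc{B}(\tilde V)\hookrightarrow\mc{B}(V)$ is determined by the order''. Sachs' theorem only recovers each $\mc{B}(V)$ from $\downarrow V$ \emph{up to a non-canonical isomorphism}; at every atom of $\VN$ there is a residual $p\leftrightarrow 1-p$ ambiguity, and nothing you have said forces the reconstructions for different $V$ to be compatible. Consequently the ``diagram'' $V\mapsto\mc{B}(V)$ is not yet a functor from which one could form a colimit, and in any case a colimit of Boolean algebras would not by itself produce an orthomodular (non-distributive) lattice. The substantive content of Harding--Navara's proof is exactly to show that these local $\{p,1-p\}$ ambiguities can be resolved coherently across all contexts, and this is where the hypothesis ``no maximal four-element block'' is genuinely used (every atom must sit below a strictly larger context so that $p$ and $1-p$ can be told apart relative to other projections). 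Your sketch identifies the right ingredients but skips over the place where the actual work lies.
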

In fact, Harding and Navara's proof holds more generally for orthomodular lattices with no maximal Boolean sublattices with only $4$ elements (this is why we exclude the trivial cases $\cN=\bbC\oplus\bbC$ and $\cN=M_2(\bbC)$). The result shows that the context category, i.e., the set of contexts together with the information of how contexts are contained within each other, encodes exactly the same amount of information as the projection lattice. In this sense,
\begin{center}
			\textbf{Contextuality determines quantum logic and vice versa.}
\end{center}
Note that the context category $\VN$ is just a poset. Its elements, the contexts $V\subseteq\cN$, are just `points' within $\VN$, without inner structure. In particular, from the perspective of $\VN$, we do not have access to the commutative von Neumann subalgebras, much less to the operators contained within each context. All the structure of $\VN$ lies in the order, that is, in the information of how some contexts are contained within others. This makes the Harding-Navara result quite remarkable, since the mere order structure on contexts determines the full structure of the projection lattice.

\textbf{Contexts without (non)commutativity.} The context category $\VN$ of a von Neumann algebra $\cN$ can be defined without any reference to (non)commutativity: every weakly closed associative Jordan subalgebra of $\cN$ is a commutative von Neumann subalgebra and vice versa. Hence, we can regard $\cN$ as a weakly closed Jordan algebra and consider the set of its weakly closed associative Jordan subalgebras, partially ordered by inclusion. This poset is (isomorphic to) $\VN$.

\subsection{Presheaves over the context category}
\textbf{The concept of a presheaf: local data glued together.} We saw in the previous subsection that the context category $\VN$ already encodes a lot of information about a quantum system. Now we will build further structures upon the context category in order to make it an even more useful tool. Concretely, given the context category $\VN$, we are interested in assigning data to each context. Moreover, since $\VN$ is a poset, we want to relate the data assigned to a context $V$ to the data assigned to a smaller context $\tilde{V}\subset V$.

For example, for each context $V\in\VN$, one may consider the set $\Sigma(V)$ of all pure states of $V$. If $\tilde{V}\subset V$, then every pure state of $V$ gives a pure state of $\tilde{V}$ simply by restriction,
\begin{align*}
			\Sigma(V) &\lra \Sigma(\tilde{V})\\
			\ld &\lmt \ld|_{\tilde{V}}\; .
\end{align*}
(Pure states of commutative von Neumann algebras are traditionally denoted $\ld$.) In this way, we obtain a natural map from the pure states of $V$ to the pure states of $\tilde{V}$, hence, relating the data assigned to $V$ to the data assigned to $\tilde{V}$.

This is an example of a general construction, viz. a \emph{presheaf}. This naming is traditional and has no particular meaning for us. The general definition of a presheaf over the context category $\VN$ is:
\begin{definition}
Let $\cN$ be a von Neumann algebra and let $\VN$ be its context category. A \emph{presheaf over $\VN$} is a contravariant functor $\PP:\VN\ra\CSet$. That is, $\PP$ is given
\begin{itemize}
	\item [(i)] on objects: for all $V\in\VN$, $\PP_V$, \emph{the component of $\PP$ at $V$}, is some set,
	\item [(ii)] on arrows: for all inclusions $i_{\tilde{V}V}:\tilde{V}\hra V$, there is a \emph{restriction map}
	\begin{align*}
				\PP(i_{\tilde{V}V}): \PP_V &\lra \PP_{\tilde{V}}\\
				x &\lmt \PP(i_{\tilde{V}V})(x)\; .
	\end{align*}
\end{itemize}
\end{definition}
Of course, this is a very general notion. The idea is that the `local' data assigned to each context can vary from context to context, but there are `connecting maps' relating the local data at $V$ and $\tilde{V}$ whenever $\tilde{V}\subset V$. The mathematical language of presheaves is a convenient tool for book-keeping.

\begin{example}\textbf{\emph{The trivial presheaf.}} The simplest (non-empty) presheaf over $\VN$ is the \emph{trivial presheaf} $\POne$, which is given
\begin{itemize}
	\item [(i)] on objects: for all $V\in\VN$, $\POne_V:=\{*\}$, the one-element set,
	\item [(ii)] on arrows: for all inclusions $i_{\tilde{V}V}:\tilde{V}\hra V$, 
	\begin{align*}
				\POne(i_{\tilde{V}V}): \POne_V &\lra \POne_{\tilde{V}}\\
				* &\lmt *\; .
	\end{align*}
\end{itemize}
\end{example}

We will make use of the trivial presheaf when we consider Wigner's theorem in Sec.~\ref{Sec_Wigner}. For the treatment of the Kochen-Specker theorem in Sec.~\ref{Sec_KS}, we will use the so-called \emph{spectral presheaf} (already sketched above), for Gleason's theorem (in Sec.~\ref{Sec_Gleason}), we will use the so-called \emph{probabilistic presheaf}, and for Bell's theorem (in Sec.~\ref{Sec_Bell}), the so-called \emph{Bell presheaf}, a version of the probabilistic presheaf adapted to bipartite (or multipartite) systems. In fact, each of these presheaves is tailor-made for reformulating the respective theorem. The spectral presheaf is built from pure states in each context, and the probabilistic presheaf is built from mixed states.

\textbf{Contravariance and coarse-graining.} One may wonder why we are using \emph{contra}variant functors rather than covariant ones. If $V,\tilde{V}$ are contexts such that $\tilde{V}\subset V$, why not map the local data assigned to $\tilde{V}$ into the local data assigned to $V$? Covariant functors do have their place in the bigger scheme \cite{HLS09}, but, as it turns out, for our purposes we only need contravariant functors. Generically, the idea is that the data assigned to a larger context $V$ is richer, more informative, and can be coarse-grained or restricted to the data assigned to a smaller context $\tilde{V}$. Conversely, there is often no canonical way to `fine-grain' or extend data assigned to $\tilde{V}$ to data assigned to $V$. It is always possible to discard information, but it is often impossible to create information, at least not in a unique way. E.g. every pure state on $V$ gives a pure state on $\tilde{V}$ by restriction, but a pure state on $\tilde{V}$ can usually be extended in many different ways to a pure state of $V$. (The problem here is that there is no \emph{canonical} way of extending.)

\textbf{Mapping a presheaf into itself.} In order to relate Wigner's theorem to the trivial presheaf in Sec.~\ref{Sec_Wigner}, we have to consider automorphisms of the trivial presheaf, that is, reversible mappings of $\POne$ into itself. There are a number of possible definitions and conventions, but we will focus on a very simple notion of automorphism that suits our purposes.

Let $\PP$ be a presheaf over the context category $\VN$. Roughly speaking, we can map $\PP$ to itself by first shifting the components around and then mapping each (shifted) component into itself in a way that is compatible with the restriction maps (natural transformation). More precisely, the shifting around of components is achieved by a morphism of the base category, which is $\VN$ in our case, acting by pullback. This means that if $\tphi:\VN\ra\VN$ is a morphism of the base category, then it acts as follows: $\PP\circ\tphi$ is the presheaf over $\VN$ with component $(\PP\circ\tphi)_V=\PP_{\tphi(V)}$ at $V$. That is, the new component at $V$ is the old component at $\tphi(V)$, for every $V\in\VN$. The restriction maps of $\PP\circ\tphi$ are given in the obvious way by $(\PP\circ\tphi)(i_{\tilde{V}V})=\PP(i_{\tphi(\tilde{V})\tphi(V)})$.

An automorphism $\Theta:\PP\ra\PP$ of a presheaf $\PP$ over $\VN$ then consists of
\begin{itemize}
	\item [(a)] an automorphism $\tphi:\VN\ra\VN$ of the base category acting by pullback, thus mapping $\PP$ to $\PP\circ\tphi$,
	\item [(b)] followed by, for each $V\in\VN$, an isomorphism $\theta_V:(\PP\circ\tphi)_V\ra(\PP\circ\tphi)_V$ such that, whenever $\tilde{V}\subset V$, one has $\PP(i_{\tphi(\tilde{V})\tphi(V)})\circ\theta_V=\theta_{\tilde{V}}\circ\PP(i_{\tphi(\tilde{V})\tphi(V)})$.
\end{itemize}

\textbf{Local and global sections of a presheaf.} Presheaves over the context category $\VH$ (or $\VN$) are not just sets, but collections of sets (one for each context), which are interconnected by the restriction maps. Hence, the notion of an `element' of a presheaf must be defined suitably. Let $\PP$ be a presheaf over $\VN$ and let $D$ be a downward closed subset of $\VN$, i.e., if $V\in D$ and $\tilde{V}\subset V$, then $\tilde{V}\in D$. A \emph{local section $\gamma$ of $\PP$ over $D$} consists of a choice of one element from the component $\PP_V$ for each $V\in D$, denoted $\gamma_V$, such that, whenever $V,\tilde{V}\in D$ with $\tilde{V}\subset V$, then $\PP(i_{\tilde{V}V})(\gamma_V)=\gamma_{\tilde{V}}$. This condition means that the elements $\gamma_V$ that we pick from the sets $\PP_V$ (where $V\in D$) fit together under the restriction maps of the presheaf $\PP$. 

One should think of a local section $\gamma$ of $\PP$ over $D$ as a `partial' element of $\PP$. If one has a local section over $D=\VN$, then $\gamma$ is called a \emph{global section} (or \emph{global element}) of the presheaf $\PP$. This is the analogue of an element of a set or a point of a space.

For a given presheaf $\PP$, global sections may or may not exist (while local sections always exist, just make $D$ small enough). In fact, finding a global section amounts to fitting specified local data, one element from each component of $\PP$, together into a whole. We will see that the presheaf reformulations of the Kochen-Specker theorem (following Isham, Butterfield, and Hamilton), Gleason's theorem, and also Bell's theorem are statements about the existence or nonexistence of global sections of certain presheaves.

\section{Wigner's theorem, contextuality, and Jordan algebra structure}	\label{Sec_Wigner}

\subsection{Wigner's theorem, Dye's theorem, and Jordan $*$-automorphisms}
We first consider Wigner's theorem \cite{Wig31}:
\begin{theorem}
    (Wigner's theorem.) Let $\cH$ be a Hilbert space, $\mathrm{dim}(\cH) \geq 2$, and let $\PpH$ be the set of rank-$1$ projections on $\cH$ (equivalently, $\PpH$ is the projective Hilbert space). Every bijective map
    \begin{align*}
    			\varphi:\PpH &\lra \PpH\\
    			p &\lmt \varphi(p)
    \end{align*}
    such that $\mathrm{tr}(\varphi(p)\varphi(q)) = \mathrm{tr}(pq)$
    for all $p,q\in\PpH$ (i.e., transition probabilities are preserved) is implemented by conjugation with a unitary or anti-unitary operator $u$,
    \begin{align*}
    			\forall p\in\PpH: \varphi(p)=upu^*\; .
    \end{align*}
\end{theorem}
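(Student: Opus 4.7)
The plan is to reduce Wigner's theorem to Dye's theorem on Jordan $*$-isomorphisms of von Neumann algebras, and then to invoke the classical structure theory of Jordan $*$-automorphisms of a type-$I$ factor. First I would observe that for rank-$1$ projections $p,q$ the identity $\tr(pq) = 0$ is equivalent to $p \perp q$, so the hypothesis already forces $\varphi$ to preserve orthogonality on $\PpH$. The next step is to extend $\varphi$ from the atoms to an orthoautomorphism $\Phi:\PH\ra\PH$ of the full projection lattice. Every $P\in\PH$ can be written as a join of mutually orthogonal rank-$1$ projections $P = \bjoin_i p_i$, and one sets $\Phi(P) := \bjoin_i \varphi(p_i)$. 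Using that $\varphi$ is a bijection preserving orthogonality on atoms, one checks that $\Phi$ is well defined (independent of the chosen decomposition), bijective, preserves orthocomplements and respects arbitrary joins and meets.

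Once $\Phi$ is in hand, Dye's theorem applies, provided $\cH$ has no type $I_2$ summand, i.e.\ $\dim(\cH) \geq 3$. Dye's theorem states that any orthoisomorphism between the projection lattices of two von Neumann algebras without a type $I_2$ summand lifts uniquely to a Jordan $*$-isomorphism between the algebras. Applying this to $\Phi$ produces a Jordan $*$-automorphism $J : \BH \ra \BH$ with $J(p) = \varphi(p)$ for every $p \in \PpH$. It is classical (going back to Kadison) that every Jordan $*$-automorphism of a factor is either a $*$-automorphism or a $*$-anti-automorphism; since $\BH$ has trivial center, these are the only two options. The $*$-automorphisms of $\BH$ are inner, hence implemented by conjugation with a unitary $u$, while the $*$-anti-automorphisms are implemented by conjugation with an anti-unitary (for instance by precomposing an inner $*$-automorphism with the canonical transpose). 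Restricting back to $\PpH$ gives the statement of the theorem in $\dim(\cH) \geq 3$.

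The case $\dim(\cH) = 2$ must be treated separately, since Dye's theorem genuinely fails on $M_2(\bbC)$: there exist orthoautomorphisms of $\PH$ that do not extend to any Jordan $*$-automorphism (this is why the exclusion $\cN = M_2(\bbC)$ already appeared in the Harding--Navara result quoted above). Here I would argue directly on the Bloch sphere. A rank-$1$ projection $p = \tfrac{1}{2}(1 + \vec{n}\cdot\vec{\sigma})$ corresponds to a unit vector $\vec{n}\in S^2$, and one computes $\tr(pq) = \tfrac{1}{2}(1 + \vec{n}\cdot\vec{m})$. Preservation of transition probabilities is therefore \emph{precisely} preservation of the standard inner product on $S^2$, so $\varphi$ induces an element of $O(3)$. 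Orientation-preserving elements come from $SU(2)$ via the double cover $SU(2) \to SO(3)$ and correspond to conjugation by unitaries on $\bbC^2$, while orientation-reversing elements arise from composing with conjugation by any fixed anti-unitary on $\bbC^2$.

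The main technical obstacle I anticipate is the lattice-theoretic extension step, i.e.\ promoting the bijection $\varphi$ on rank-$1$ projections to an orthoautomorphism $\Phi$ on all of $\PH$. Verifying that $\bjoin_i \varphi(p_i)$ is independent of the chosen orthogonal rank-$1$ decomposition of $P$, and that $\Phi$ then respects arbitrary joins and meets and not merely orthogonality, is the heart of the argument and is also precisely what forces the low-dimensional exception. After that step the remaining ingredients (Dye's theorem, the factor structure of $\BH$, and the standard implementation of $*$- and $*$-anti-automorphisms of $\BH$ by unitaries and anti-unitaries, respectively) are all classical and close the argument without further work.
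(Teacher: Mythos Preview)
Your proposal is correct and, for $\dim(\cH)\geq 3$, follows essentially the same route the paper takes: the paper does not give a self-contained proof of Wigner's theorem (it cites Cassinelli et al.), but it does derive the result exactly as you do---extend $\varphi$ to an automorphism of the full projection lattice $\PH$, invoke Dye's theorem to obtain a Jordan $*$-automorphism of $\BH$, use the factor structure of $\BH$ (Prop.~\ref{Prop_JordDecomp}) to conclude this is either a $*$-automorphism or a $*$-anti-automorphism, and then use Prop.~\ref{Prop_BHJord} to obtain the implementing (anti-)unitary.

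The one place you go beyond the paper is the two-dimensional case. The paper's derivation via Dye explicitly requires $\dim(\cH)\geq 3$ (its ``Jordan formulation'' of Wigner is stated only in that range), and it simply does not address $\dim(\cH)=2$. Your Bloch-sphere argument is the standard and correct way to close that gap; note that this is exactly where the full hypothesis $\tr(\varphi(p)\varphi(q))=\tr(pq)$ is needed rather than mere orthogonality preservation, consistent with your remark that Dye genuinely fails on $M_2(\bbC)$. Your identification of the extension step $\varphi\mapsto\Phi$ as the main technical point is also accurate; the paper outsources precisely this step to \cite{CLL04}.
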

Various nice proofs can be found in the literature, for a modern perspective see e.g. Cassinelli et al. \cite{CLL04}. These authors also prove the following: let $\Aut{\PpH}$ denote the group of automorphisms of $\PpH$ (i.e., bijective maps $\varphi:\PpH\ra\PpH$ that preserve transition probabilities). If $\dim(\cH)\geq 3$, then $\Aut{\PpH}$ is isomorphic to the group $\Aut{\PH}$ of automorphisms of the projection lattice $\PH$, i.e., maps
\begin{align*}
			\phi:\PH &\lra \PH\\
			p &\lmt \phi(p)
\end{align*}
that are 
\begin{enumerate}
	\item bijective,
	\item preserve complements, $\forall p\in\PH:\phi(1-p)=1-\phi(p)$, and
	\item preserve and reflect order, $\forall p,q\in\PH: (p\leq q) \Lra (\phi(p)\leq\phi(q))$.
\end{enumerate}
Geometrically, $p\leq q$ means that the closed subspace that $p$ projects onto is contained in the closed subspace that $q$ projects onto. Algebraically, $(p\leq q)\Lra(pq=p)$. Since an automorphism $\phi\in\Aut{\PH}$ preserves the order, it also preserves all meets (greatest lower bounds) and all joins (least upper bounds) in $\PH$. Since $\phi$ also preserves complements, it is an automorphism of the complete orthomodular lattice $\PH$.

Under the group isomorphism $\Aut{\PpH}\ra\Aut{\PH}$, the automorphism $\phi\in\Aut{\PH}$ corresponding to a given $\varphi\in\Aut{\PpH}$ is an extension of $\varphi$ to all projections that preserves joins and complements (and hence also meets; for details, see \cite{CLL04}). Of course, we have $\phi|_{\PpH}=\varphi$.

Hence, if the Hilbert space is at least three-dimensional, Wigner's theorem is equivalent to the fact that every automorphism of the projection lattice $\PH$ is implemented by conjugation with a unitary or anti-unitary operator,
\begin{align*}
			\forall p\in\PH: \phi(p)=upu^*\; .
\end{align*}

There is a generalisation of Wigner's theorem to von Neumann algebras, which is closer to the formulation with automorphisms of $\PH$ than automorphisms of $\PpH$. This is Dye's theorem \cite{Dye55}. Before we formulate the theorem, we recall that given a von Neumann algebra $\cN$, we can form the associated Jordan algebra $(\cN,\cdot)$, which has the same elements and linear structure as $\cN$, and Jordan product given by
\begin{align*}
			\forall a,b\in\cN: a\cdot b:=\frac{1}{2}(ab+ba)=\frac{1}{2}\{a,b\}\; .
\end{align*}
Also recall that the Jordan product is commutative, but only associative if the von Neumann algebra $\cN$ is commutative. A Jordan $*$-automorphism of $(\cN,\cdot)$ is a bijective map $\Phi:(\cN,\cdot)\ra(\cN,\cdot)$ such that both $\Phi$ and $\Phi^{-1}$ preserve the Jordan product and the involution $(\_)^*$, that is,
\begin{align*}
			&\forall a,b\in(\cN,\cdot): \Phi(a\cdot b)=\Phi(a)\cdot\Phi(b),\\
			&\forall a\in(\cN,\cdot): \Phi(a^*)=\Phi(a)^*\; ,
\end{align*}
and analogously for $\Phi^{-1}$. We can now formulate Dye's theorem:
\begin{theorem}	\label{Thm_Dye}
Let $\cN$ be a von Neumann algebra with no direct summand of type $I_2$. For every automorphism $\phi:\PN\ra\PN$ of the projection lattice of $\cN$, there exists a unique Jordan $*$-automorphism $\Phi:(\cN,\cdot)\ra(\cN,\cdot)$ such that $\Phi(p)=\phi(p)$ for all projections $p\in\PN$.
\end{theorem}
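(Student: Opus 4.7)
The plan is to extend $\phi$ from $\PN$ to $\cN$ in four stages: first to self-adjoint operators with finite spectrum via the spectral theorem, then to all of $\cNsa$ by norm continuity, then to $\cN$ by complexification, and finally to verify that the resulting $\Phi$ respects the Jordan product and the involution. Uniqueness will be automatic because any Jordan $*$-automorphism extending $\phi$ must agree with our construction on finite linear combinations of orthogonal projections, which are norm-dense in $\cNsa$ by the spectral theorem.

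For Step 1, I would first observe that $\phi$ preserves orthogonality: $p\meet q=0$ iff $p\leq 1-q$, and this is preserved because $\phi$ is an order and complement isomorphism. Hence $\phi$ sends orthogonal resolutions of the identity to orthogonal resolutions of the identity. Every finite-spectrum $a\in\cNsa$ has a unique spectral decomposition $a=\sum_{i=1}^n A_i p_i$ with distinct $A_i\in\bbR$ and $p_1,\ldots,p_n\in\PN$ orthogonal, so we may unambiguously define $\Phi(a):=\sum_{i=1}^n A_i \phi(p_i)$. On each abelian von Neumann subalgebra $V\subseteq\cN$, the restriction $\phi|_{\mc P(V)}$ is a complete Boolean algebra isomorphism onto $\mc P(\tilde V)$ for some abelian $\tilde V\subseteq\cN$, and standard spectral theory (Stone duality / Loomis--Sikorski) promotes this to a $*$-isomorphism $V\to\tilde V$ agreeing with our $\Phi$. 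In particular $\Phi$ is linear and multiplicative on every abelian subalgebra, and it is norm-continuous there because $\|\Phi(a)\|=\max_i|A_i|=\|a\|$. This continuity lets me extend $\Phi$ to all of $\cNsa$ by taking norm limits of finite-spectrum approximants.

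The central obstacle, and where the hypothesis on type $I_2$ summands is indispensable, is Step 3: proving that $\Phi$ is additive on pairs $a,b\in\cNsa$ that do not commute. Here the spectral decompositions of $a$, $b$, and $a+b$ are unrelated, so commutative arguments are unavailable. The correct tool is a Gleason-type extension theorem for orthogonally additive maps: if $\mu:\PN\to\cNsa$ satisfies $\mu(p+q)=\mu(p)+\mu(q)$ whenever $pq=0$ (which our $\Phi$ does, since $\phi$ preserves joins of orthogonal projections), and if $\cN$ has no type $I_2$ direct summand, then $\mu$ extends uniquely to a bounded linear map $\cNsa\to\cNsa$. The exclusion of $I_2$ is sharp: on $M_2(\bbC)$ one can manufacture orthogonally additive functions on rank-$1$ projections that are not the restriction of any linear functional, essentially the same pathology that underlies the failure of Gleason's theorem in dimension two. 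Granted this extension theorem, the uniqueness clause forces our candidate $\Phi$ (defined via spectral resolutions) to coincide with the linear extension, yielding $\Phi(a+b)=\Phi(a)+\Phi(b)$ in general.

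Once linearity is in hand, the Jordan-multiplicativity $\Phi(a\cdot b)=\Phi(a)\cdot \Phi(b)$ follows via the polarisation identity $a\cdot b=\frac{1}{2}\bigl((a+b)^2-a^2-b^2\bigr)$: each square lies in the abelian subalgebra generated by the corresponding single operator, where $\Phi$ is already known to be multiplicative, and linearity does the rest. Complexifying by $\Phi(a_1+ia_2):=\Phi(a_1)+i\Phi(a_2)$ yields a $*$-preserving map on $\cN$, and applying the whole construction to the lattice automorphism $\phi^{-1}$ produces a two-sided inverse, so $\Phi$ is a Jordan $*$-automorphism. Uniqueness is immediate since any such $\Phi$ is determined by its action on projections via linearity, the spectral theorem, and norm continuity.
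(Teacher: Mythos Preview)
The paper does not give its own proof of this statement: Dye's theorem is stated with a citation to \cite{Dye55} and used thereafter as an external input (it is the second step in the proof of Thm.~\ref{Thm_DoeHar}, and it also underlies the passage from lattice automorphisms to Jordan $*$-automorphisms throughout Sec.~\ref{Sec_Wigner}). So there is nothing in the paper to compare against line by line.

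That said, your sketch is a correct and recognisable modern route, essentially the Bunce--Wright solution of the Mackey--Gleason problem applied to the projection-valued, orthogonally additive map $p\mapsto\phi(p)$. Two remarks. First, the ``Gleason-type extension theorem'' you invoke in Step~3 is a nontrivial result in its own right (for operator- or Banach-space-valued measures it is due to Bunce and Wright); it requires a boundedness hypothesis, which is satisfied here since $\|\phi(p)\|\leq 1$, but you should name it explicitly rather than leave it as an unnamed tool. Second, note that Dye's original 1955 proof predates even Gleason's scalar theorem, so historically the argument ran the other way: Dye used direct lattice-theoretic and algebraic methods, and the Gleason-style linear-extension viewpoint is a later reorganisation. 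Your approach is cleaner conceptually but relies on a deeper black box than the theorem you are proving.
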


It is easy to see that the Jordan $*$-automorphism $\Phi$ induced by an automorphism $\phi:\PN\ra\PN$ is \emph{ultraweakly continuous} (or normal), i.e., it preserves (countable) joins of projections \cite{Doe14}. Conversely, every ultraweakly continuous Jordan $*$-automorphism $\Phi:(\cN,\cdot)\ra(\cN,\cdot)$ induces an automorphism $\phi$ of the complete orthomodular lattice $\PN$ by $\phi:=\Phi|_{\PN}$.

The ultraweakly continuous Jordan $*$-automorphisms of $(\cN,\cdot)$ form a group denoted $\Aut{\cN,\cdot}$. Dye's theorem hence shows that, provided $\cN$ has no type $I_2$ summand, there is a group isomorphism
\begin{align*}
			\Aut{\PN}\lra\Aut{\cN,\cdot}
\end{align*}
between the group of automorphisms of the projection lattice and the group of ultraweakly continuous Jordan $*$-automorphisms of $\cN$.

One may wonder how Dye's theorem and Jordan $*$-automorphisms relate to unitary and anti-unitary operators as in Wigner's theorem. To see this, we first need the following well-known result (see e.g. \cite{AlfShu01}):
\begin{proposition}	\label{Prop_JordDecomp}
Every Jordan $*$-automorphism $\Phi:\cN\ra\cN$ of a von Neumann algebra $\cN$ can be decomposed as the sum of a $*$-isomorphism and a $*$-anti-isomorphism. 
\end{proposition}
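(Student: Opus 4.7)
The plan is to follow the classical Kadison-style argument, reducing the decomposition to the existence of a central projection $e\in Z(\cN)$ such that $\Phi(\,\cdot\,)e$ is a $*$-homomorphism and $\Phi(\,\cdot\,)(1-e)$ is a $*$-anti-homomorphism. First, I would extract the key algebraic consequences of $\Phi$ preserving the Jordan product: from $\Phi(a\cdot b)=\Phi(a)\cdot\Phi(b)$ and $\bbC$-linearity one obtains $\Phi(a^2)=\Phi(a)^2$, and the polarised identity $\Phi(ab+ba)=\Phi(a)\Phi(b)+\Phi(b)\Phi(a)$. Using the Jordan-algebraic identity $aba=2a\cdot(a\cdot b)-a^2\cdot b$, which expresses the triple product inside $(\cN,\cdot)$, I deduce $\Phi(aba)=\Phi(a)\Phi(b)\Phi(a)$. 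In particular, $\Phi$ preserves every product built from Jordan operations and squaring alone.

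Second, I would introduce the deviation $\Delta(a,b):=\Phi(ab)-\Phi(a)\Phi(b)$, which measures the failure of $\Phi$ to be a $*$-homomorphism. For self-adjoint $a,b$, the polarised identity gives $\Delta(a,b)+\Delta(b,a)=0$, and combined with $*$-preservation one finds $\Delta(a,b)^{*}=\Delta(b,a)=-\Delta(a,b)$, so $\Delta(a,b)$ is anti-self-adjoint. Expanding $\Phi((ab)^2)=\Phi(ab)^2$ (which follows from $\Phi$ preserving Jordan squares on the whole algebra) and comparing with $\Phi(abab)$ obtained through repeated use of the triple-product identity yields multiplicative constraints on the products $\Delta(a,b)\Delta(c,d)$. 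After rearrangement, these constraints force all deviations to be supported on a common two-sided ideal of $\cN$, from which I extract the central projection $e\in Z(\cN)$. Setting $\Phi_1(a):=\Phi(a)e$ and $\Phi_2(a):=\Phi(a)(1-e)$, the decomposition $\Phi=\Phi_1+\Phi_2$ follows by linearity, and the $*$-homomorphism and $*$-anti-homomorphism properties of the summands are routine verifications once $e$ is in hand.

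The main obstacle is this final extraction of a \emph{single} global central projection from pointwise algebraic constraints on the deviations. This step genuinely uses the weak closure of $\cN$---in order to take suprema of the projections governing the various $\Delta(a,b)$---and is cleanest via a central decomposition of $\cN$ with respect to $Z(\cN)$, reducing to the case of factors, where only trivial central projections exist and the dichotomy between $*$-isomorphism and $*$-anti-isomorphism becomes sharp. A direct integral (or type-decomposition) argument along the lines of Kadison's original proof then patches the factor-level decompositions back into the global $e$, completing the proof.
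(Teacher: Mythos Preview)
The paper does not prove this proposition at all: it is stated as a ``well-known result (see e.g.\ \cite{AlfShu01})'' and immediately followed by the more concrete restatement in terms of central projections $p,q$, with no argument supplied. So there is nothing in the paper to compare your sketch against.

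Your outline is the classical Kadison argument and is essentially correct in spirit. The one place where it remains genuinely vague is the passage from ``multiplicative constraints on the products $\Delta(a,b)\Delta(c,d)$'' to a single central projection. The computation you allude to---comparing $\Phi((ab)^2)=\Phi(ab)^2$ with $\Phi(abab)$ via the polarised triple-product identity $\Phi(abc+cba)=\Phi(a)\Phi(b)\Phi(c)+\Phi(c)\Phi(b)\Phi(a)$---produces the sharp identity
\[
\bigl(\Phi(ab)-\Phi(a)\Phi(b)\bigr)\bigl(\Phi(ab)-\Phi(b)\Phi(a)\bigr)=0,
\]
and it is this that does the real work. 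In a factor it forces one of the two factors to vanish for all $a,b$ (so $\Phi$ is either a $*$-homomorphism or a $*$-anti-homomorphism), and the general von Neumann case then follows by central decomposition, exactly as you say. Writing this identity explicitly would close the only real gap in your proposal; the rest is routine.
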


More concretely, there are projections $p,q$ in the center of $\cN$ such that $\cN$ is unitarily equivalent to both $\cN p\oplus\cN (1-p)$ and $\cN q\oplus\cN (1-q)$, and $\Phi|_{\cN p}:\cN p\ra\cN q$ is a $*$-isomorphism, while $\Phi|_{\cN (1-p)}:\cN (1-p)\ra\cN (1-q)$ is a $*$-anti-isomorphism. Moreover, we need (see e.g. \cite{AlfShu01}):
\begin{proposition} \label{Prop_BHJord}
Every $*$-automorphism $\Phi:\BH\ra\BH$ is implemented by conjugation with a unitary operator and every $*$-anti-automorphism is implemented by conjugation with an anti-unitary operator.
\end{proposition}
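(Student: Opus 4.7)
The plan is to construct the implementing operator explicitly through its action on the orbit of a single unit vector, exploiting two facts: first, $\BH$ acts irreducibly on $\cH$, so that every nonzero vector is cyclic for $\BH$; and second, both $*$-automorphisms and $*$-anti-automorphisms of $\BH$ preserve rank-$1$ projections, since those are exactly the minimal nonzero projections in $\BH$, and minimality is a property expressible purely in terms of the order on projections (which either kind of map preserves via $q\leq p\Leftrightarrow qp=q$).

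\textbf{Automorphism case.} I would fix a unit vector $\xi\in\cH$, set $p=|\xi\rangle\langle\xi|$, and write $\Phi(p)=|\eta\rangle\langle\eta|$ for some unit vector $\eta$. Applying $\Phi$ to the identity $pap=\langle\xi,a\xi\rangle\,p$ yields $\langle\xi,a\xi\rangle=\langle\eta,\Phi(a)\eta\rangle$ for every $a\in\BH$. Define $u:\cH\ra\cH$ by $u(a\xi):=\Phi(a)\eta$. Well-definedness follows because $a\xi=0$ implies $ap=0$, hence $\Phi(p)\Phi(a)=0$ and thus $\Phi(a)\eta=0$. The computation
\[
\langle u(a\xi),u(b\xi)\rangle=\langle\eta,\Phi(a^*b)\eta\rangle=\langle\xi,a^*b\,\xi\rangle=\langle a\xi,b\xi\rangle
\]
shows $u$ is isometric on the dense subspace $\BH\xi=\cH$, and its range $\Phi(\BH)\eta=\BH\eta$ is dense by cyclicity of $\eta$, so $u$ extends to a unitary operator on $\cH$. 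Finally, the identity $uau^*\bigl(\Phi(b)\eta\bigr)=u(ab\,\xi)=\Phi(ab)\eta=\Phi(a)\bigl(\Phi(b)\eta\bigr)$ holds on the dense set $\{\Phi(b)\eta\mid b\in\BH\}$, hence $\Phi(a)=uau^*$ on all of $\cH$.

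\textbf{Anti-automorphism case.} I would reduce to the previous case by composing with a fixed canonical anti-automorphism. Let $J$ denote an anti-unitary involution on $\cH$ (for instance, componentwise complex conjugation in a chosen orthonormal basis); then $\sigma(a):=Ja^*J$ is a $*$-anti-automorphism of $\BH$ manifestly implemented by $J$. For any $*$-anti-automorphism $\Psi$, the composition $\Psi\circ\sigma$ is the composition of two anti-automorphisms, hence a $*$-automorphism, and so by the first part is implemented by some unitary $v$: $\Psi(\sigma(a))=vav^*$ for all $a$. Substituting $a\mapsto\sigma(b)$ and using $\sigma^2=\mathrm{id}$ gives $\Psi(b)=v\sigma(b)v^*=(vJ)\,b^*\,(vJ)^{-1}$. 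One then checks that $u:=vJ$ is anti-unitary, exhibiting $\Psi$ as conjugation by an anti-unitary operator.

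\textbf{Expected obstacle.} The main subtlety I anticipate is conceptual rather than computational: getting the conventions for ``conjugation with an anti-unitary'' consistent. Because antilinear maps interact nontrivially with the involution, the formula on all of $\BH$ that yields a genuine $*$-anti-automorphism is $\Psi(b)=ub^*u^{-1}$, and this only reduces to the compact form $\varphi(p)=upu^*$ of Wigner's theorem after restricting to projections, for which $b^*=b$. Beyond this bookkeeping, the remaining steps---verifying well-definedness, density of the image, and that $vJ$ acts anti-unitarily on inner products---are purely algebraic and work uniformly in any dimension of $\cH$, without need for spectral theory or measure theory.
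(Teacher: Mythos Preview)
Your argument is correct and complete; it is the standard cyclic-vector construction for inner automorphisms of $\BH$, followed by the usual reduction of the anti-automorphism case via a fixed conjugation. There is one harmless slip in the well-definedness step: from $a\xi=0$ you get $ap=0$, and since $\Phi$ is multiplicative this gives $\Phi(a)\Phi(p)=0$ (not $\Phi(p)\Phi(a)=0$ as written); applying this to any vector still yields $\Phi(a)\eta=0$, so the conclusion is unaffected.

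As for comparison with the paper: there is nothing to compare. The paper does not prove this proposition at all---it simply records it as a known fact with a reference to Alfsen--Shultz. Your write-up therefore supplies strictly more than the paper does. If anything, your discussion in the ``expected obstacle'' paragraph about the convention $\Psi(b)=ub^*u^{-1}$ versus $\varphi(p)=upu^*$ is a useful clarification that the paper glosses over when it later invokes the result.
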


By Dye's theorem and Prop.~\ref{Prop_JordDecomp}, every (ultraweakly continuous) Jordan $*$\-/automorphism $\Phi:\BH\ra\BH$ decomposes into a $*$-isomorphism on $\BH p$ and a $*$-anti-isomorphism on $\BH (1-p)$, where $p$ is a \emph{central} projection in $\PH$. Since $\BH$ is a factor (a von Neumann algebra with trivial center), the only central projections are $0$ and $1$. Thus, a Jordan $*$-automorphism $\Phi:\BH\ra\BH$ is either a $*$-automorphism or a $*$-anti-automorphism. Hence, by Prop.~\ref{Prop_BHJord} it is of the form
\[
			\Phi(a) = uau^*
\]
for a unitary or anti-unitary operator $u$ acting on $\BH$. We see that Wigner's theorem is a special case of Dye's theorem (depending on some special features of $\BH$) and we can rephrase Wigner's theorem as follows:

\begin{theorem}
    (Wigner's theorem in `Jordan formulation'.) Let $\cH$ be a Hilbert space, $\dim\cH\geq 3$. Every automorphism $\phi:\PH\ra\PH$ is implemented by a unique ultraweakly continuous Jordan $*$-automorphism $\Phi:(\BH,\cdot)\ra(\BH,\cdot)$ such that $\phi(p)=\Phi(p)$ for all $p\in\PH$.
\end{theorem}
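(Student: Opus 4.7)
The plan is to derive the statement as a direct application of Dye's theorem (Theorem \ref{Thm_Dye}) with $\cN = \BH$, combined with the remark on ultraweak continuity made immediately after that theorem. First I would verify Dye's hypothesis: for $\dim \cH \geq 3$, the factor $\BH$ is of type $I_n$ with $n \geq 3$ (or of type $I_\infty$), so it contains no direct summand of type $I_2$. Dye's theorem then delivers a unique Jordan $*$-automorphism $\Phi : (\BH, \cdot) \ra (\BH, \cdot)$ with $\Phi(p) = \phi(p)$ for every $p \in \PH$. Existence and uniqueness of $\Phi$ among all Jordan $*$-automorphisms is settled, and this uniqueness a fortiori applies within the smaller class of ultraweakly continuous Jordan $*$-automorphisms.

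The second step is to verify that this $\Phi$ is ultraweakly continuous. Since $\phi$ is an automorphism of the complete orthomodular lattice $\PH$, it preserves arbitrary joins of projections, and in particular sends any orthogonal family $(p_i)_{i \in I}$ in $\PH$ to an orthogonal family whose ultraweak sum is $\phi(\bjoin_i p_i)$. Using $\Phi|_{\PH} = \phi$, Jordan-linearity, and the spectral theorem (every $a \in \BHsa$ is an ultraweak limit of real linear combinations of mutually orthogonal spectral projections), this forces $\Phi$ to preserve ultraweak limits of spectral sums, hence to be normal on $\BHsa$. Complex linearity via the decomposition $b = a_1 + i a_2$ with $a_1, a_2 \in \BHsa$ extends the conclusion to all of $\BH$.

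The main obstacle is really already packaged inside Dye's theorem; once that is in hand, the present reformulation is bookkeeping. Two aspects deserve emphasis: the hypothesis $\dim \cH \geq 3$ is used solely to rule out a type $I_2$ summand obstructing Dye's theorem, and the further passage to implementation by a unitary or anti-unitary operator (i.e.\ the classical form of Wigner's theorem, as presented at the start of the section) is not required for the Jordan reformulation but follows from Propositions \ref{Prop_JordDecomp} and \ref{Prop_BHJord} once one observes that $\BH$ is a factor and so its only central projections are $0$ and $1$, ruling out any nontrivial mixture of $*$-automorphism and $*$-anti-automorphism parts in $\Phi$.
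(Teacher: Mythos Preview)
Your proposal is correct and follows essentially the same route as the paper: the theorem is stated immediately after the paper observes that it is a special case of Dye's theorem (Thm.~\ref{Thm_Dye}) applied to $\cN=\BH$, together with the remark on ultraweak continuity made just after Thm.~\ref{Thm_Dye}. Your final paragraph on the unitary/anti-unitary implementation via Props.~\ref{Prop_JordDecomp} and~\ref{Prop_BHJord} also mirrors the paper's discussion preceding the theorem, though as you note this is not needed for the Jordan formulation itself.
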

This shows that contrary to the usual hand-waving arguments, there is a good mathematical reason to consider both unitary and anti-unitary operators in Wigner's theorem, since we actually have a statement about the structure of $\BH$ as a Jordan algebra. The Jordan structure is preserved by the action of both unitary and anti-unitary operators.

\subsection{The D\"oring-Harding result, contextuality and Wigner's theorem}
So far, all this does not relate to contexts in any obvious way. The result that connects Wigner's theorem (and Dye's theorem) with contextuality is the following:
\begin{theorem}	\label{Thm_DoeHar}
    (D\"oring, Harding \cite{DoeHar16}) Let $\cN$ be a von Neumann algebra not isomorphic to $\bbC\oplus\bbC$ or to $M_2(\bbC)$. For every order automorphism $\tphi:\VN\ra\VN$ of the context category of $\cN$, there is a unique (ultraweakly continuous) Jordan $*$-automorphism $\Phi:(\cN,\cdot)\ra(\cN,\cdot)$ such that $\tphi(V)=\Phi[V]$ for all $V\in\VN$.
\end{theorem}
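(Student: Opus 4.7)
The plan is to factor through the projection lattice: given an order automorphism $\tphi:\VN\ra\VN$, I first extract an orthomodular lattice automorphism $\phi:\PN\ra\PN$ by a functorial reading of the Harding--Navara theorem, then invoke Dye's theorem (Theorem~\ref{Thm_Dye}) to lift $\phi$ to a unique ultraweakly continuous Jordan $*$-automorphism $\Phi:(\cN,\cdot)\ra(\cN,\cdot)$, and finally verify that $\Phi[V]=\tphi(V)$ for every $V\in\VN$.

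For the first step, nontrivial projections $p\in\PN\setminus\{0,1\}$ are encoded in $\VN$ by the four-element Boolean contexts $V_p=\{0,p,1-p,1\}''$, which sit just above the trivial context $\bbC 1$ and must be permuted by $\tphi$. This yields an induced action on unordered complementary pairs $\{p,1-p\}$. To resolve the ambiguity between $p$ and $1-p$ and to equip the resulting map with its full orthomodular-lattice structure, I would invoke the Harding--Navara reconstruction: joins, meets, and orthocomplements of projections can all be read off from the poset-theoretic relations among the $V_p$ and the larger contexts lying above them. One then checks that this reconstruction is natural with respect to $\tphi$, producing a well-defined OML automorphism $\phi:\PN\ra\PN$ with the property that, for each $V\in\VN$, $\phi$ restricts to a bijection $\PV\ra\mc{P}(\tphi(V))$.

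Given $\phi$, Dye's theorem supplies a unique ultraweakly continuous Jordan $*$-automorphism $\Phi:(\cN,\cdot)\ra(\cN,\cdot)$ extending $\phi$; the exclusion of $\bbC\oplus\bbC$ and $M_2(\bbC)$ is precisely what is needed to avoid the type $I_2$ obstruction that would block Dye's theorem. Because $\Phi$ preserves the Jordan product and the involution, it sends commuting self-adjoint operators to commuting self-adjoint operators, hence maps contexts to contexts. Since every context $V$ is weakly generated by $\PV$ and $\Phi|_{\PN}=\phi$, the image $\Phi[V]$ is the weakly closed subalgebra generated by $\phi(\PV)=\mc{P}(\tphi(V))$, and therefore coincides with $\tphi(V)$.

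Uniqueness of $\Phi$ is automatic: any two ultraweakly continuous Jordan $*$-automorphisms that agree on every context must agree on every projection (each projection sits in the context it generates), and Dye's theorem then forces them to coincide on all of $\cN$. The real work, and the step I expect to be the main obstacle, is the functorial upgrade in the first paragraph: the Harding--Navara theorem as stated only says that $\VN$ determines $\PN$ up to isomorphism, so one must trace through their reconstruction carefully and verify that it is canonical enough to turn order automorphisms of $\VN$ into honest OML automorphisms of $\PN$ (rather than just bare bijections of underlying sets). Once that naturality is pinned down, the remainder is a clean combination of Dye's theorem with the defining properties of Jordan $*$-automorphisms.
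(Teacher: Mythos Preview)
Your proposal is correct and follows essentially the same two-step route sketched in the paper: first pass from the order automorphism $\tphi$ of $\VN$ to an orthomodular-lattice automorphism $\phi$ of $\PN$ via the Harding--Navara reconstruction, then apply Dye's theorem to obtain the Jordan $*$-automorphism $\Phi$. One small correction: the exclusion of $\bbC\oplus\bbC$ and $M_2(\bbC)$ is there to make Harding--Navara work (these are precisely the cases with a four-element maximal Boolean sublattice), not primarily to avoid the type $I_2$ obstruction in Dye's theorem.
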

This shows that the mere order structure of contexts determines the algebra of observables as a Jordan algebra up to isomorphism. The proof proceeds in two steps: first, using the result by Harding and Navara \cite{HarNav11} already mentioned in Sec.~\ref{Subsec_Contexts}, one shows that an order automorphism $\tphi:\VN\ra\VN$ induces a unique automorphism $\phi:\PN\ra\PN$ of the projection lattice, second, by Dye's theorem, this gives a Jordan $*$-automorphism $\Phi:(\cN,\cdot)\ra(\cN,\cdot)$. As a shorthand,
\begin{center}
			\textbf{Contextuality determines the Jordan algebra}\\
			\textbf{of physical quantities and vice versa.}
\end{center}
As remarked in \cite{Doe14}, it is easy to see that the Jordan $*$-automorphism $\Phi$ induced by an order isomorphism $\tphi:\VN\ra\VN$ is ultraweakly continuous. Hence, there is a group isomorphism
\begin{align*}
				\Aut{\VN} \lra \Aut{\cN,\cdot}\; .
\end{align*}

The D\"oring-Harding result can be regarded as a reformulation of Dye's theorem (Thm. \ref{Thm_Dye}), explicitly showing that it is contextuality which determines Jordan algebra structure of von Neumann algebras (and vice versa). Specialising to the algebra $\cN=\BH$ and using Prop.~\ref{Prop_BHJord}, we have:
\begin{theorem}	\label{Thm_WignerContextual}
(Wigner's theorem in contextual form.) Let $\cH$ be a Hilbert space, $\mathrm{dim}(\cH) \geq 3$. For every order automorphism $\tphi:\VH\ra\VH$, there is a unique unitary or anti-unitary operator $u$ such that
\begin{align*}
			\forall V\in\VH: \tphi(V) = uVu^*\; .
\end{align*}
Conversely, every unitary or anti-unitary operator $u$ induces an order automorphism of the context category $\VH$ by conjugation.
\end{theorem}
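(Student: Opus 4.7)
The strategy is to chain together Theorem \ref{Thm_DoeHar} with the two structural propositions already assembled in the preceding subsection, specialising everything to the factor $\cN=\BH$. Concretely, starting from an order automorphism $\tphi:\VH\ra\VH$, Theorem \ref{Thm_DoeHar} applies (since $\BH$ with $\dim\cH\geq 3$ is certainly not isomorphic to $\bbC\oplus\bbC$ or $M_2(\bbC)$) and produces a unique ultraweakly continuous Jordan $*$-automorphism $\Phi:(\BH,\cdot)\ra(\BH,\cdot)$ satisfying $\tphi(V)=\Phi[V]$ for every context $V$. The entire task then reduces to showing that any such $\Phi$ on $\BH$ is of the form $a\mt uau^*$ for a (anti-)unitary $u$.

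To carry this out, I would invoke Proposition \ref{Prop_JordDecomp} to split $\Phi$ into the direct sum of a $*$-isomorphism on $\BH p$ and a $*$-anti-isomorphism on $\BH(1-p)$, where $p$ is a central projection. The key observation is that $\BH$ is a factor, so its centre is $\bbC 1$ and the only central projections are $0$ and $1$. Consequently, exactly one of the two summands is the whole algebra, and $\Phi$ is either a $*$-automorphism or a $*$-anti-automorphism of $\BH$. Proposition \ref{Prop_BHJord} then gives a unitary operator $u$ (in the first case) or an anti-unitary operator $u$ (in the second case) such that $\Phi(a)=uau^*$ for all $a\in\BH$. Applying this elementwise to any context $V\in\VH$ yields $\tphi(V)=\Phi[V]=uVu^*$, which is the desired formula.

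For uniqueness of $u$, suppose both $u$ and $u'$ implement $\Phi$ by conjugation; then $u^*u'$ (or $u^*u'$ in the anti-unitary case, after appropriate composition) commutes with every element of $\BH$, hence lies in the centre $\bbC 1$, so $u'$ differs from $u$ only by a phase factor — the standard sense in which ``unique'' should be read here. The converse direction is immediate and does not require any of the heavy machinery: given a unitary or anti-unitary $u$, the map $a\mt uau^*$ is a $*$-automorphism or $*$-anti-automorphism of $\BH$, and in either case it preserves commutativity, weak closure, and inclusions, so $V\mt uVu^*$ is a well-defined order automorphism of $\VH$.

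The only nontrivial step is the passage from the Jordan-algebraic reformulation to a genuine (anti-)unitary implementation, and this is precisely where the factor property of $\BH$ does the work by eliminating the mixed case in Proposition \ref{Prop_JordDecomp}. Everything else is routine composition of already-stated results, so I expect no real obstacle beyond being careful about the ``unique up to phase'' reading of the uniqueness statement and verifying that conjugation by an anti-unitary (which is conjugate-linear) still sends a commutative von Neumann subalgebra to a commutative von Neumann subalgebra — both of which are standard.
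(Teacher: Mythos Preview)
Your proposal is correct and follows essentially the same route as the paper: specialise Theorem~\ref{Thm_DoeHar} to $\cN=\BH$, use Proposition~\ref{Prop_JordDecomp} together with the fact that $\BH$ is a factor to conclude that the Jordan $*$-automorphism is either a $*$-automorphism or a $*$-anti-automorphism, and then invoke Proposition~\ref{Prop_BHJord}. Your added remarks on uniqueness up to phase and on the (easy) converse are appropriate elaborations of points the paper leaves implicit.
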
 
This is our first reformulation of Wigner's theorem. Remarkably, any bijective map $\phi$ that preserves the order on the collection of contexts must be implemented by a unitary or anti-unitary operator. 

\subsection{Wigner's theorem and the trivial presheaf over the context category}
Obviously, the structure of the context category $\VH$ is sufficient to reformulate Wigner's theorem. Extra information, as may be provided by presheaves over $\VH$, is not necessary. In this sense, Wigner's theorem is the simplest of the theorems that we consider.

Yet, there is a reformulation of Wigner's theorem, more generally Dye's theorem, that does use a presheaf. Since we need exactly the information provided by the context category $\VN$, which is a partially ordered set, the presheaf must mirror this partial order (and nothing more). The trivial presheaf $\POne$ over $\VN$ does this: the component at $V\in\VN$ is the one-element set $\{*\}$, and for every inclusion $i_{\tilde{V}V}:\tilde{V}\hra V$, there is a restriction $\POne(i_{\tilde{V}V}):\POne_V\ra\POne_{\tilde{V}}$, sending $\{*\}$ to $\{*\}$. Hence, we both have the elements of the poset $\VN$ and the order relation encoded by $\POne$.\footnote{Strictly speaking, because $\POne$ is a contravariant functor, we encode the opposite order, hence the poset $\VN^{\op}$, but this gives exactly the same amount of information as $\VN$.}

As discussed in Sec.~\ref{Subsec_Contexts}, an automorphism $\Theta$ of a presheaf $\PP$ consists of two things (according to our convention): a shifting around of components, induced by an automorphism $\tphi$ of the base category acting by pullback, followed by an isomorphism $\theta_V:(\PP\circ\tphi)_V\ra(\PP\circ\tphi)_V$ for each $V\in\VN$ such that, whenever $\tilde{V}\subset V$, one has $\PP(i_{\tphi(\tilde{V})\tphi(V)})\circ\theta_V=\theta_{\tilde{V}}\circ\PP(i_{\tphi(\tilde{V})\tphi(V)})$.

In our case, an order automorphism $\tphi:\VN\ra\VN$ acts by pullback on the trivial presheaf $\POne$ over $\VN$ in the following way: for all $V\in\VN$, $(\POne\circ\tphi)_V=\POne_{\tphi(V)}=\{*\}$. 

Since each component $\POne_V$ of the trivial presheaf $\POne$ is just a one-element set, $\POne_V=\{*\}$, the only isomorphism $\theta_V:(\POne\circ\tphi)_V\ra(\POne\circ\tphi)_V$ is the identity map (for each $V\in\VN$). Hence, an automorphism of the trivial presheaf $\POne$ over $\VN$ is simply given by a shifting around of components, induced by an (order) automorphism of the base category $\VN$. We have shown:
\begin{lemma}	\label{Lem_AutomOfTrivPresh}
Let $\cN$ be a von Neumann algebra. There is a bijective correspondence between automorphisms $\Theta$ of the trivial presheaf $\POne$ over $\VN$ and order automorphisms $\tphi$ of the context category $\VN$.
\end{lemma}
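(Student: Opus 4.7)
The plan is to unpack the definition of a presheaf automorphism recalled in Sec.~\ref{Subsec_Contexts} and observe that, for the trivial presheaf, the component data of any automorphism collapses to identities, so that the only remaining content is the base automorphism of $\VN$. I would then verify that the two assignments are mutually inverse.

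First, I would define a map $\tphi\mt\Theta$ from order automorphisms of $\VN$ to automorphisms of $\POne$. Given an order automorphism $\tphi:\VN\ra\VN$, pullback yields the presheaf $\POne\circ\tphi$ with $(\POne\circ\tphi)_V=\POne_{\tphi(V)}=\{*\}$ for all $V\in\VN$. I then set $\theta_V:=\on{id}_{\{*\}}$ for each $V\in\VN$. To check that the pair $(\tphi,(\theta_V)_{V\in\VN})$ really is an automorphism of $\POne$, I only need to verify the naturality condition $\POne(i_{\tphi(\tilde V)\tphi(V)})\circ\theta_V=\theta_{\tilde V}\circ\POne(i_{\tphi(\tilde V)\tphi(V)})$ for every inclusion $\tilde V\subset V$. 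Both sides are maps between one-element sets, hence are automatically equal, so naturality is free.

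Second, in the opposite direction, given an automorphism $\Theta$ of $\POne$ I extract the underlying base automorphism $\tphi$, which is part of the data of $\Theta$ by definition. Since each component $(\POne\circ\tphi)_V=\{*\}$ has a unique self-bijection, the family $(\theta_V)_{V\in\VN}$ is forced to consist of identity maps and contributes no further information. Thus $\Theta$ is determined entirely by $\tphi$.

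Finally, I would check that the two constructions are mutually inverse: starting from $\tphi$, building $\Theta$ with identity components, and then extracting the base automorphism returns $\tphi$; conversely, starting from any $\Theta$, extracting $\tphi$ and rebuilding with identity components returns $\Theta$, because its component maps were already forced to be identities. I do not anticipate any real obstacle---the argument is essentially a bookkeeping exercise made possible by the fact that $\POne_V$ is a singleton for every $V$, so that all the nontrivial content of a presheaf automorphism of $\POne$ is concentrated in the base automorphism of $\VN$.
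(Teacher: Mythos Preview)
Your proposal is correct and follows essentially the same approach as the paper: both arguments observe that because each component $\POne_V=\{*\}$ is a singleton, the componentwise isomorphisms $\theta_V$ are forced to be identities, so an automorphism of $\POne$ reduces precisely to the underlying order automorphism $\tphi$ of the base category $\VN$. Your write-up is slightly more explicit about naturality and mutual inverseness, but the content is the same.
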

From this and Thm.~\ref{Thm_DoeHar}, it follows:
\begin{corollary}
(Dye's theorem in presheaf form.) Let $\cN$ be a von Neumann algebra not isomorphic to $\bbC\oplus\bbC$ or to $M_2(\bbC)$. For every automorphism $\Theta$ of the trivial presheaf $\POne$ over the context category $\VN$, there is a unique (ultraweakly continuous) Jordan $*$-automorphism $\Phi:(\cN,\cdot)\ra(\cN,\cdot)$ such that $\tphi(V)=\Phi[V]$ for all $V\in\VN$, where $\tphi:\VN\ra\VN$ is the automorphism of the context category corresponding to $\Theta$.
\end{corollary}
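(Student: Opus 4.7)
The proof plan is essentially a two-step composition of results that are already on the table, so the task is really to assemble them correctly and check that no information is lost along the way.

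First, I would invoke Lemma \ref{Lem_AutomOfTrivPresh} to convert the datum $\Theta$ into an order automorphism of the base. Because every component $\POne_V$ is the singleton $\{*\}$, every componentwise isomorphism $\theta_V$ is forced to be the identity, so the entire content of $\Theta$ is carried by the pullback along some $\tphi \in \Aut{\VN}$. This correspondence $\Theta \longleftrightarrow \tphi$ is bijective, so there is a canonical order automorphism $\tphi:\VN\ra\VN$ attached to $\Theta$, with nothing to choose.

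Second, I would apply the D\"oring-Harding theorem (Thm.~\ref{Thm_DoeHar}) to $\tphi$. Since $\cN$ is not isomorphic to $\bbC\oplus\bbC$ or $M_2(\bbC)$, that theorem supplies a unique ultraweakly continuous Jordan $*$\-/automorphism $\Phi:(\cN,\cdot)\ra(\cN,\cdot)$ with $\tphi(V)=\Phi[V]$ for every $V\in\VN$. Putting the two steps together gives the desired statement.

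Existence is now immediate. For uniqueness, suppose $\Phi_1,\Phi_2$ both implement the same $\tphi$ obtained from $\Theta$; then by the uniqueness clause of Thm.~\ref{Thm_DoeHar}, $\Phi_1=\Phi_2$. Conversely, if one tried to start from a different order automorphism $\tphi'$, Lemma \ref{Lem_AutomOfTrivPresh} would force $\tphi'=\tphi$, since the assignment $\Theta \mapsto \tphi$ is a bijection. Hence $\Phi$ is uniquely determined by $\Theta$.

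There is no genuine obstacle here; the only small point to be careful about is the bookkeeping around the trivial presheaf. One must verify that, in the definition of an automorphism of a presheaf reproduced in Sec.~\ref{Subsec_Contexts}, the compatibility condition $\POne(i_{\tphi(\tilde V)\tphi(V)}) \circ \theta_V = \theta_{\tilde V}\circ \POne(i_{\tphi(\tilde V)\tphi(V)})$ is vacuously satisfied once every $\theta_V$ is the identity on $\{*\}$, so no additional constraint on $\tphi$ is smuggled in beyond being an order automorphism of $\VN$. Granting this (which is the content of Lemma \ref{Lem_AutomOfTrivPresh}), the corollary is a clean two-line reduction to Thm.~\ref{Thm_DoeHar}.
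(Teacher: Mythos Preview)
Your proposal is correct and follows exactly the paper's own argument: the corollary is stated immediately after Lemma~\ref{Lem_AutomOfTrivPresh} with the justification ``From this and Thm.~\ref{Thm_DoeHar}, it follows,'' and your two-step reduction (first the lemma to pass from $\Theta$ to $\tphi$, then the D\"oring--Harding theorem to obtain $\Phi$) is precisely that. Your additional remarks on uniqueness and on the vacuity of the naturality condition for the trivial presheaf are accurate and simply spell out what the paper leaves implicit.
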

Finally, from Lm.~\ref{Lem_AutomOfTrivPresh} and Thm.~\ref{Thm_WignerContextual} we obtain:
\begin{corollary}
    (Wigner's theorem in presheaf form.) Let $\cH$ be a Hilbert space, $\mathrm{dim}(\cH) \geq 3$. For every automorphism $\Theta$ of the trivial presheaf $\POne$ over the context category $\VH$, there is a unique unitary or anti-unitary operator $u$ such that
\begin{align*}
			\forall V\in\VN: \tphi(V) = uVu^*
\end{align*}
is the order automorphism of $\VH$ inducing $\Theta:\POne\ra\POne$. Conversely, every unitary or anti-unitary operator $u$ induces an automorphism $\Theta$ of the trivial presheaf $\POne$ over $\VN$.
\end{corollary}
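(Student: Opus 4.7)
The plan is to obtain this corollary as an immediate consequence of chaining together Lemma~\ref{Lem_AutomOfTrivPresh} with Theorem~\ref{Thm_WignerContextual}; no new technical work is really needed, only a careful bookkeeping of the bijections involved.

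First, I would handle the forward direction. Given an automorphism $\Theta:\POne\ra\POne$ of the trivial presheaf over $\VH$, Lemma~\ref{Lem_AutomOfTrivPresh} produces a \emph{unique} order automorphism $\tphi:\VH\ra\VH$ of the context category such that $\Theta$ is realised purely by the pullback action of $\tphi$ on $\POne$ (each component map $\theta_V$ is forced to be the identity on $\{*\}$). Then I would feed this $\tphi$ into Theorem~\ref{Thm_WignerContextual}: since $\dim\cH \geq 3$, we obtain a unitary or anti-unitary operator $u$, unique up to a phase, such that $\tphi(V)=uVu^*$ for every $V\in\VH$. Stringing the two bijections together yields exactly the stated map $\Theta\mapsto u$.

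Next I would handle the converse direction, which is the easy direction. Given a unitary or anti-unitary $u$, conjugation $V\mapsto uVu^*$ preserves commutativity, weak closure, and inclusions, so it defines an order automorphism $\tphi\in\Aut{\VH}$ (this is the converse clause in Theorem~\ref{Thm_WignerContextual}). Applying the bijection of Lemma~\ref{Lem_AutomOfTrivPresh} in the opposite direction, $\tphi$ lifts to an automorphism $\Theta:\POne\ra\POne$ defined by the pullback $\POne\circ\tphi$ together with identity component maps. Uniqueness of $u$ (up to the usual phase ambiguity) is inherited from Theorem~\ref{Thm_WignerContextual}, and the bijectivity of the correspondence in Lemma~\ref{Lem_AutomOfTrivPresh} ensures that distinct operators up to phase give genuinely distinct presheaf automorphisms.

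The only potential obstacle I anticipate is a cosmetic one: one must verify that the composite correspondence $\Theta\leftrightarrow\tphi\leftrightarrow u$ is well-defined as a map on equivalence classes of operators (unitary/anti-unitary modulo a phase), but since both Lemma~\ref{Lem_AutomOfTrivPresh} and Theorem~\ref{Thm_WignerContextual} already bundle this into their respective uniqueness statements, there is nothing further to check. In effect, the corollary is a two-line diagram chase:
\begin{equation*}
\Aut{\POne}\;\stackrel{\text{Lm.}~\ref{Lem_AutomOfTrivPresh}}{\lra}\;\Aut{\VH}\;\stackrel{\text{Thm.}~\ref{Thm_WignerContextual}}{\lra}\;\{\text{unitary/anti-unitary}\;u\}/\mathrm{phase},
\end{equation*}
and the corollary simply records that this composite is a bijection with the explicit description $\tphi(V)=uVu^*$.
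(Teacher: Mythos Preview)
Your proposal is correct and follows exactly the same route as the paper, which simply states that the corollary is obtained from Lemma~\ref{Lem_AutomOfTrivPresh} and Theorem~\ref{Thm_WignerContextual} without giving any further argument. Your added remark about the phase ambiguity in the uniqueness of $u$ is a valid clarification that the paper leaves implicit.
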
 
Note that the trivial presheaf contains exactly the right amount of information: every automorphism of $\POne$ gives a unitary or anti-unitary $u$ and vice versa. More physically speaking, every `rearrangement' of contexts that preserves the order (i.e., preserves how contexts are contained within each other) determines a unitary or anti-unitary operator and vice versa.

\section{The Kochen-Specker theorem and the spectral presheaf}	\label{Sec_KS}

The Kochen-Specker theorem \cite{KocSpe67} is deeply connected to contextuality. The usual interpretation of the theorem amounts to a negative statement of the kind `there are no non-contextual assignments of values to physical quantities'.\footnote{It is less clear which positive statement can be given and whether `contextual value assignments' exist or would even make sense.}

Kochen and Specker's result excludes certain state space models for quantum theory: a hypothetical state space $\Sigma$ on which each physical quantity $a$ is represented by a real-valued function $f_a:\Sigma\ra\bbR$ would imply the existence of valuation functions (i.e., assignments of values to physical quantities such that the spectrum rule and the functional composition principle hold, see below). In fact, each point $s$ of $\Sigma$ would provide a valuation function $v_s$, given by evaluation at that point, that is, the value of a physical quantity $a$ would simply be $v_s(a)=f_a(s)$. By showing that no valuation functions exist, Kochen and Specker exclude the existence of such a state space $\Sigma$.

Precisely because state space models are excluded, it is difficult to interpret the Kochen-Specker theorem in geometric terms. Also, it is not straightforward to see the exact nature of the connection between contextuality in our sense (commutative subalgebras of compatible physical quantities, arranged into a poset) and the nonexistence of valuation functions.

Both aspects were clarified by Isham and Butterfield in a beautiful series of papers \cite{IshBut98,IshBut99,HIB00,IshBut02}, with J. Hamilton as a co-author of the third paper. In fact, the context category first shows up in these papers, and so does the spectral presheaf $\Sig$, which will be defined below. The latter plays a central role in the topos approach to quantum theory \cite{DoeIsh11} and serves as a generalised state space for a quantum system, notwithstanding the Kochen-Specker theorem. In fact, the Kochen-Specker theorem is equivalent to the fact that the quantum state space $\Sig$ has no points (technically, it has no global sections). This reformulation serves as the prototype for the reformulations of the other fundamental theorems of quantum theory in this article.

In Sec.~\ref{Subsec_KSOld}, we will give a quick overview of the Kochen-Specker theorem and its background. Then, in Sec.~\ref{Subsec_KSIB}, we make some connections with contextuality and present the presheaf reformulation of the Kochen-Specker theorem by Isham, Butterfield, and Hamilton. Finally, we extend their results to von Neumann algebras.

\subsection{Valuation functions, the Kochen-Specker theorem, and contextuality}	\label{Subsec_KSOld}
In their seminal paper \cite{KocSpe67}, Kochen and Specker considered the question whether assignments of values to the physical quantities of a quantum system exist. Let $\cH$ be a separable Hilbert space, let $\BH$ be the algebra of bounded linear operators on $\cH$, and let $\BHsa$ be the real vector space of self-adjoint operators. A \emph{valuation function} is a function
\begin{align*}
			v:\BHsa &\lra \bbR\\
			a &\lmt v(a)
\end{align*}
such that
\begin{itemize}
	\item [(i)] for all $a\in\BHsa$, it holds that $v(a)\in\spec{a}$ (\emph{spectrum rule}),
	\item [(ii)] for all continuous functions $f:\bbR\ra\bbR$, it holds that $v(f(a))=f(v(a))$ (\emph{functional composition principle}).
\end{itemize}
Kochen and Specker show \cite{KocSpe67}:
\begin{theorem}	\label{Thm_KS}
    (Kochen-Specker theorem.) Let $\cH$ be a Hilbert space, $\dim\cH\geq 3$. There exist no valuation functions $v:\BHsa\ra\bbR$.
\end{theorem}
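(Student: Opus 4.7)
\textbf{Proof proposal for Theorem \ref{Thm_KS}.} The plan is to reduce the existence of a valuation function on all of $\BHsa$ to a combinatorial constraint on projections, and then to show that this constraint cannot be met in dimension $3$ or higher. First I would argue that the functional composition principle forces $v$ to behave well on projections. Since any projection $p$ has spectrum $\spec{p}\subseteq\{0,1\}$, the spectrum rule gives $v(p)\in\{0,1\}$. Next, for any pair of orthogonal projections $p,q\in\BHsa$ (so in particular commuting), introduce the self-adjoint operator $a:=p+2q$, whose spectral decomposition has three distinct eigenvalues $0,1,2$ corresponding to the orthogonal projections $1-p-q$, $p$, and $q$. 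Choose continuous functions $f,g,h:\bbR\ra\bbR$ with $f(0)=0,\ f(1)=1,\ f(2)=0$, $g(0)=0,\ g(1)=0,\ g(2)=1$, and $h=f+g$, so that $p=f(a)$, $q=g(a)$, and $p+q=h(a)=f(a)+g(a)$. Applying (ii) yields $v(p+q)=h(v(a))=f(v(a))+g(v(a))=v(p)+v(q)$. Iterating, for any finite family of mutually orthogonal projections $p_1,\ldots,p_n\in\PH$ with $\sum_{i=1}^n p_i=1$, one gets $\sum_{i=1}^n v(p_i)=v(1)=1$; combined with $v(p_i)\in\{0,1\}$, this forces exactly one $p_i$ in any such resolution of identity to receive the value $1$, the rest being $0$.

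Second, I would reduce the problem to dimension $3$. Pick a three-dimensional subspace $\cK\subseteq\cH$; then every one-dimensional projection onto a ray in $\cK$ is itself a projection in $\BH$, and any orthonormal triple of rays spanning $\cK$ is a resolution of identity inside $\cK$. Adjoining the projection $1-p_{\cK}$ onto $\cK^{\perp}$ (whose value under $v$ is fixed for all such triples) turns this into a full resolution of identity in $\PH$. Hence $v$ restricts to a $\{0,1\}$-valued coloring $c$ of the rays in $\cK$ with the property that in every orthonormal triple exactly one ray is colored $1$.

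Third, and this is the main obstacle, I would exhibit a finite set $\mc R$ of rays in (a real subspace of) $\cK\cong\bbC^3$, together with a finite family of orthonormal triples from $\mc R$, such that no $\{0,1\}$-coloring $c:\mc R\ra\{0,1\}$ satisfies ``exactly one $1$ per triple.'' This is the genuinely geometric core of the theorem. The original argument of Kochen and Specker \cite{KocSpe67} supplies $117$ directions in $\bbR^3$ and a clever graph-theoretic chain of implications: starting from $c(v_0)=1$ for some ray $v_0$, the ``exactly one $1$ per triple'' rule propagates colors along connected triples until a contradiction is forced at some ray $v_1$ that has to be both orthogonal to $v_0$ and assigned $1$. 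I would either reproduce this construction or invoke a later, more economical version (Peres's $33$ vectors, or the Conway--Kochen $31$-vector set) to keep the combinatorial bookkeeping manageable. Any such configuration yields the desired contradiction, so no $v$ exists in dimension $3$, and hence none exists in any higher dimension by the restriction argument above. The purely algebraic Steps 1--2 are routine consequences of (i) and (ii); the whole weight of the theorem sits in this geometric non-colorability statement.
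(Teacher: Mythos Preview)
Your proposal follows essentially the same approach as the paper's proof sketch: derive the $\{0,1\}$-valued constraint on projections from the spectrum rule and functional composition, pass to a three-dimensional subspace, and then invoke a finite uncolorable configuration of rays (the original $117$, or one of the later smaller sets). Your derivation of finite additivity via the auxiliary operator $a=p+2q$ and continuous functional calculus is in fact more explicit than anything the paper provides; the paper merely describes the strategy and cites \cite{KocSpe67} for the geometric core.

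One genuine gap in your reduction step is worth naming. You assert that in every orthonormal triple spanning $\cK$ \emph{exactly one} ray is colored $1$, but this follows only when $v(1-p_{\cK})=0$. If instead $v(1-p_{\cK})=1$, then $v(p_1)+v(p_2)+v(p_3)=0$ for every such triple, so all rays in $\cK$ receive the value $0$; that is a perfectly consistent coloring of $\cK$ and yields no contradiction with the Kochen--Specker configuration. To make the reduction work you must first choose $\cK$ so that $v(p_{\cK})=1$. In finite dimensions this is immediate: for any orthonormal basis $\{e_i\}$ of rank-$1$ projections one has $\sum_i v(|e_i\rangle\langle e_i|)=v(1)=1$, so some $|e_j\rangle\langle e_j|$ receives the value $1$, and any three-dimensional $\cK$ containing $e_j$ then satisfies $v(p_{\cK})=1$ by your additivity argument. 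In infinite dimensions this step is more delicate (it is not obvious from finite additivity alone that some rank-$1$ projection gets value $1$) and is typically handled by a separate argument, for instance via Gleason's theorem as in Bell's proof \cite{Bel66}. The paper's own sketch is equally silent on this point, so this is not a divergence from the paper so much as a place where both accounts would benefit from an extra sentence.
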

In the proof, a certain family of rays, i.e., rank-$1$ projections is considered. Each of these must be assigned either $0$ or $1$ according to the spectrum rule and in every orthogonal triple $p_1,p_2,p_3$ of rank-$1$ projections, exactly one projection is assigned $1$ and the others are assigned $0$. By carefully choosing the family of projections, Kochen and Specker construct an explicit counterexample: they show that no consistent assignment of values $0$, $1$ to the projections in their family is possible. The original proof used a configuration of $117$ rays in $\cH=\bbR^3$ (the real Euclidean space), which could later be reduced to $31$, and even fewer in $\bbC^4$. The proof of the result in real, three-dimensional Hilbert space implies the result in higher-dimensional, real and complex Hilbert spaces.

It is noteworthy that the proof of the Kochen-Specker theorem relies only on some basic projective geometry in $\bbR^3$. As soon as one accepts that there is a physical quantity, usually called `spin-$1$', which can be measured in all spatial directions and which takes three different values,\footnote{The rank-$1$ projections in the proof are projections onto eigenspaces of the spin-$1$ observable measured in various spatial directions.} the Kochen-Specker theorem holds unless one is willing to give up either the spectrum rule or the functional composition principle. Much less than the full mathematical apparatus of quantum mechanics is required to prove the theorem and very little physical interpretation, or metaphysical baggage, underlies the proof. 

Bell provided a proof of the same result, i.e., there are no non-contextual assignments of values to physical quantities, in \cite{Bel66}. His proof uses a continuity argument and Gleason's theorem and hence is not `discrete' as Kochen and Specker's proof.

\subsection{The spectral presheaf and reformulation of the Kochen-Specker theorem}	\label{Subsec_KSIB}
Kochen and Specker emphasise that an earlier proof of nonexistence of certain value assignments by von Neumann \cite{vNe32} was flawed, since von Neumann posed conditions on noncompatible physical quantities (represented by noncommuting self-adjoint operators), which Kochen and Specker regarded as unjustified. In contrast, the functional composition principle employed by Kochen and Specker seemingly is just a condition on commuting operators, since $a$ and $f(a)$ commute.

Yet, there are triples $\{a,b,c\}$ of self-adjoint operators such that
\begin{align*}
			c=f(a)=g(b)\; ,
\end{align*}
i.e., $c$ is both a function of $a$ and a function of $b$. Such a triple is called a \emph{Kochen-Specker triple.} Crucially, $c=f(a)$ commutes with $a$ and $c=g(b)$ commutes with $b$, but $a$ need not commute with $b$. In this way, the functional composition principle does pose conditions on noncommuting operators, too: since $f(v(a))=v(f(a))=v(g(b))=g(v(b))$, the (hypothetical) values $v(a)$ and $v(b)$ assigned to $a$ respectively $b$ are not independent.

The relations induced by Kochen-Specker triples led Isham and Butterfield to introduce the \emph{spectral presheaf}. In \cite{IshBut98} the self-adjoint operators themselves served as `stages', and in \cite{HIB00}, the step to commutative subalgebras of $\BH$ and the context category was taken. We will focus on the latter.

First, consider a single commutative subalgebra $V$ of $\BH$. We assume for the moment that $V$ is closed in the so-called norm topology, hence $V$ is a \emph{$C^*$-algebra}.\footnote{For details on the norm topology and $C^*$-algebras, see e.g. \cite{KadRin83}. In finite dimensions, any subalgebra of $\BH$ is a $C^*$-algebra.} Not surprisingly, there are valuation functions on $\Vsa$, the self-adjoint operators in $V$: every \emph{character}, that is, every multiplicative linear functional of norm $1$,
\begin{align*}
			\ld: \Vsa &\lra \bbR\\
			a &\lmt \ld(a)\; ,
\end{align*}
fulfils both the spectrum rule and the functional composition principle. Conversely, every valuation function is a character of $V$. Recall from Sec.~\ref{Subsec_Algebra} that multiplicative linear functionals of norm $1$ on a commutative von Neumann algebra $V$ are exactly the pure states of $V$. Hence, we consider the set
\begin{align*}
			\Sigma(V):=\{\ld:\Vsa\ra\bbR \mid \ld\text{ is a multiplicative linear functional of norm }1\}
\end{align*} 
of characters (or pure states) of $V$, traditionally called the \emph{Gelfand spectrum} of $V$.\footnote{The Gelfand spectrum $\Sigma(V)$ is equipped with the \emph{Gelfand topology}, which is the topology of pointwise convergence.} In physical terms, $\Sigma(V)$ is the (pure) state space of the physical system described by the physical quantities in $V$. As expected, the points of the state space $\Sigma(V)$ correspond exactly with valuation functions on $\Vsa$.

If $\tilde{V}\subset V$ is a unital $C^*$-subalgebra, then every character of $\tilde{V}$ arises as the restriction of some character of $V$, that is, there is a surjective map
\begin{align*}
			\Sigma(V) &\lra \Sigma(\tilde{V})\\
			\ld &\lmt \ld|_{\tilde{V}}\; .
\end{align*}
Hence, there is a canonical map from the state space of the bigger algebra $V$ to the state space of the smaller algebra $\tilde{V}$. Every valuation function on $V$ can be restricted to a valuation function on $\tilde{V}$.

In infinite dimensions, it is useful to work with commutative von Neumann subalgebras instead of the more general $C^*$-subalgebras, and we will do so from now on. This guarantees the existence of sufficiently many projection operators within our algebras and simplifies some arguments. In finite dimensions, there is no difference.

Isham, Butterfield, and Hamilton's key idea \cite{IshBut98,HIB00} was to combine all the state spaces for commutative subalgebras of a quantum system into one global object. This is the spectral presheaf:

\begin{definition}	\label{Def_SpecPresh}
Let $\cH$ be a Hilbert space. The \emph{spectral presheaf $\Sig$} of the algebra $\BH$ is the presheaf over the context category $\VH:=\mc{V}(\BH)$ given
\begin{itemize}
	\item [(i)] on objects: for all commutative von Neumann subalgebras $V\in\VH$, let
	\begin{equation*}
	    \Sig_V = \Sigma(V),\quad \text{the Gelfand spectrum of $V$,}
	\end{equation*}
	\item [(ii)] on arrows: for all inclusions $i_{\tilde{V}V}:\tilde{V}\hra V$, let
	\begin{align*}
				\Sig(i_{\tilde{V}V}):\Sig_V &\lra \Sig_{\tilde{V}}\\
				\ld &\lmt \ld|_{\tilde{V}}\; .
	\end{align*}
\end{itemize}
\end{definition}
It is clear by construction that the spectral presheaf $\Sig$ is a kind of state space for the quantum system, built from all the state spaces $\Sigma(V)$ of the commuting, compatible parts $V\in\VH$ of the noncommutative algebra $\BH$ of physical quantities.

As we saw in Sec.~\ref{Subsec_Contexts}, for a presheaf, the analogue of a point is a global section. What would a global section of the spectral presheaf $\Sig$ be? For every context $V\in\VH$, we have to pick one element $\ld_V\in\Sig_V$, the Gelfand spectrum of $V$. $\ld_V$ is a valuation function for the physical quantities in $V$, i.e., it assigns a value $\ld_V(a)$ to all $a\in\Vsa$ such that the spectrum rule and functional composition hold.

Moreover, if $\dot{V}$ is another commutative subalgebra that contains $a$, then $a$ is also contained in $\tilde{V}:=V\cap\dot{V}$. The value we assign to $a$ in $V$ is $\ld_V(a)$ and the value we assign to $a$ in $\dot{V}$ is $\ld_{\dot{V}}(a)$. Moreover, the value we assign to $a$ in $\tilde{V}=V\cap\dot{V}$ is
\begin{align*}
			\ld_{\tilde{V}}(a)=\ld_V|_{\tilde{V}}(a)=\ld_V(a)
\end{align*}
and also
\begin{align*}
			\ld_{\tilde{V}}(a)=\ld_{\dot{V}}|_{\tilde{V}}(a)=\ld_{\dot{V}}(a)\; ,
\end{align*}
hence
\begin{align*}
			\ld_{\tilde{V}}(a)=\ld_V(a)=\ld_{\dot{V}}(a)\; .
\end{align*}
The structure of a global section therefore guarantees that the value assigned to a physical quantity, represented by the self-adjoint operator $a$, is the same, independent of the context in which it lies. Since also the spectrum rule and the functional composition principle hold, every global section of $\Sig$ would provide a valuation function on all of $\BH$. Conversely, a valuation function would give a global section of $\Sig$.

Since the Kochen-Specker theorem shows that there are no valuation functions, i.e., no non-contextual value assignments, Isham, Butterfield, and Hamilton could give the following reformulation:
\begin{theorem}
The Kochen-Specker theorem is equivalent to the fact that the spectral presheaf $\Sig(\mc{H})$ has no global sections whenever $\mathrm{dim}(\cH) \geq 3$.
\end{theorem}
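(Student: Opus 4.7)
The plan is to establish a bijective correspondence between valuation functions $v:\BHsa\ra\bbR$ and global sections of the spectral presheaf $\Sig$. Since the Kochen-Specker theorem (Thm.~\ref{Thm_KS}) asserts that no valuation functions exist whenever $\dim\cH\geq 3$, this bijection immediately yields the claimed equivalence: $\Sig$ has a global section if and only if a valuation function exists.

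For the forward direction, suppose $\gamma$ is a global section of $\Sig$, so $\gamma_V\in\Sigma(V)$ is a character of $V$ for each $V\in\VH$. Define $v(a):=\gamma_{V_a}(a)$, where $V_a$ denotes the commutative von Neumann algebra generated by $a$ and $1$. To verify well-definedness, observe that for any context $V$ containing $a$, one has $V_a\subseteq V$, and the global section condition gives $\gamma_V|_{V_a}=\gamma_{V_a}$, so $\gamma_V(a)=\gamma_{V_a}(a)$; hence $v(a)$ is independent of the choice of context. The spectrum rule holds automatically, since characters of commutative $C^*$-algebras take values in the Gelfand spectrum, which coincides with $\spec{a}$. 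The functional composition principle follows from multiplicativity of $\gamma_{V_a}$ together with the continuous functional calculus: for continuous $f:\bbR\ra\bbR$, $f(a)\in V_a$, and $v(f(a))=\gamma_{V_a}(f(a))=f(\gamma_{V_a}(a))=f(v(a))$.

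For the reverse direction, suppose $v:\BHsa\ra\bbR$ is a valuation function. For each $V\in\VH$, define $\gamma_V:=v|_{\Vsa}$, extended complex-linearly to $V=\Vsa+i\Vsa$. The restriction compatibility $\gamma_V|_{\tilde{V}}=\gamma_{\tilde{V}}$ whenever $\tilde V\subset V$ is immediate from the definition. The nontrivial step is verifying that each $\gamma_V$ is a multiplicative linear functional (i.e., a character) of $V$. Given commuting self-adjoint $a,b\in\Vsa$, the spectral theorem for the commutative von Neumann algebra they generate yields a self-adjoint $c$ and continuous $f,g:\bbR\ra\bbR$ with $a=f(c)$ and $b=g(c)$. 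The functional composition principle then gives $v(a+b)=v((f+g)(c))=(f+g)(v(c))=v(a)+v(b)$ and $v(ab)=v((fg)(c))=(fg)(v(c))=v(a)v(b)$. Combined with $v(1)=1$ (from the spectrum rule applied to the identity), this shows each $\gamma_V$ is a unital multiplicative linear functional, hence a character of $V$.

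The main obstacle is the joint-generator construction in the reverse direction, which is a non-trivial but standard consequence of the Gelfand-Naimark representation: any separably generated commutative von Neumann algebra is isomorphic to $L^\infty(X,\mu)$ and admits a single generator, from which the common $c$ for any two commuting self-adjoint elements can be extracted. Everything else amounts to bookkeeping between characters and valuations. Once the bijection $v\leftrightarrow\gamma$ is established, the Kochen-Specker theorem and the nonexistence of global sections of $\Sig$ become logically equivalent statements.
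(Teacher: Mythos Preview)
Your approach is the same as the paper's: establish a bijection between valuation functions and global sections of $\Sig$, from which the equivalence with Kochen--Specker is immediate. The paper in fact gives only a sketch (the converse direction is the single sentence ``a valuation function would give a global section''), so your argument is considerably more detailed on that side.

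There is one technical wrinkle in your reverse direction. You assert that for commuting self-adjoint $a,b$ the spectral theorem yields a self-adjoint $c$ and \emph{continuous} $f,g$ with $a=f(c)$, $b=g(c)$. The single-generator result you invoke for commutative von Neumann algebras (via $L^\infty(X,\mu)$) produces $c$ as a von Neumann generator, so $a,b$ are \emph{Borel} functions of $c$ in general, not continuous ones; the $C^*$-algebra $C^*(a,b,1)\cong C(\sigma(a,b))$ need not be singly generated when the joint spectrum $\sigma(a,b)\subset\bbR^2$ fails to embed in $\bbR$. Since the functional composition principle in the paper is stated only for continuous $f$, this step does not go through as written. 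The fix is standard: either note that the Kochen--Specker obstruction already arises from finite families of projections (so finite-dimensional contexts suffice, and there polynomials in a single generator work), or first extend the functional composition principle from continuous to bounded Borel functions via monotone/bounded-pointwise limits before running your argument.
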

In more physical terms, the Kochen-Specker theorem is equivalent to the fact that the quantum state space $\Sig$ has no points. This does not mean, however, that $\Sig$ is `empty', it still has plenty of subobjects (which are the presheaf analogue of subsets). One can just not `focus down' to points, which would be (nonexistent) microstates.

We note that the nonexistence of points/global sections is not just a consequence of Kochen-Specker, but is exactly equivalent. This shows that the context category and the spectral presheaf contain just the right amount of information and that the Kochen-Specker theorem indeed is encoded by our notion of contextuality.

The Kochen-Specker theorem was generalised to von Neumann algebras in \cite{Doering2004}:
\begin{theorem}
(Generalised Kochen-Specker theorem.) Let $\cN$ be a von Neumann algebra with no direct summand of type $I_2$. There are no valuation functions $v:\cNsa\ra\bbR$.
\end{theorem}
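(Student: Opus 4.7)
The plan is to reduce the generalised theorem to the classical Kochen-Specker theorem (Thm.~\ref{Thm_KS}) by exhibiting an $M_3(\bbC)$ subalgebra of $\cN$ on which any hypothetical valuation would induce a forbidden colouring of the rays in $\bbC^3$. First I would show that a valuation $v:\cNsa\to\bbR$ restricts to a $\{0,1\}$-valued map on $\PN$: for any projection $p$ the spectrum rule forces $v(p)\in\spec{p}\subseteq\{0,1\}$, and $v(1)=1$. For commuting orthogonal projections $p,q$, both $p$ and $p+q$ are continuous functions of $a := p+2q$ (whose spectrum lies in $\{0,1,2\}$), so the functional composition principle applied to continuous functions agreeing on $\spec{a}$ with $\chi_{\{1\}}$, $\chi_{\{2\}}$ and $\chi_{\{1,2\}}$ yields $v(p+q)=v(p)+v(q)$. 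Iterating, for any finite orthogonal family of projections summing to a central projection $z$, the sum of their $v$-values equals $v(z)\in\{0,1\}$; in particular, if $v(z)=1$, then exactly one projection in such a family is coloured $1$.

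Next I would use the type decomposition of $\cN$ to locate a unital $*$-subalgebra $\cM\subseteq\cN$ isomorphic to $M_3(\bbC)$, whose unit is a central projection $z\in\cN$ with $v(z)=1$. Writing $1=\sum_\alpha z_\alpha$ according to the type summands of $\cN$, exactly one $z_\alpha$ satisfies $v(z_\alpha)=1$, and one restricts attention to $z_\alpha\cN z_\alpha$. Within any non-abelian summand (type $I_n$ with $n\geq 3$, $I_\infty$, II$_1$, II$_\infty$, or III), the halving lemma together with the comparison theorem for projections produces three mutually orthogonal Murray-von Neumann equivalent projections $e_1,e_2,e_3$ summing to $z_\alpha$, together with partial isometries $w_{ij}$ implementing the equivalences $e_i\sim e_j$; these elements span $\cM\cong M_3(\bbC)$ with $1_\cM=z_\alpha$. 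The restriction of $v$ to $\cM_{\sa}$ inherits the spectrum rule and functional composition principle (which refer only to commutative contexts), and satisfies $v(1_\cM)=1$, so it is a valuation function on $\cM_{\sa}\cong M_3(\bbC)_{\sa}$. This yields a $\{0,1\}$-colouring of rank-$1$ projections in $\bbC^3$ with exactly one ray coloured $1$ in every orthonormal triple, contradicting the classical Kochen-Specker theorem applied to $\cH=\bbC^3$.

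The main obstacle is the construction of $\cM$ in the second step: for the properly infinite summands (types $I_\infty$, II$_\infty$, III) and for type II$_1$, producing three mutually equivalent orthogonal projections summing to $z_\alpha$ requires a careful application of the halving lemma combined with enough control over partial isometries to lift a system of matrix units into $\cN$. The no-$I_2$-summand hypothesis is precisely what guarantees the divisibility needed to split the relevant central projection into three rather than merely two equivalent pieces: type $I_2$ admits consistent two-valued colourings of its projections via Gelfand characters of maximal commutative subalgebras and hence falls outside any Kochen-Specker-type obstruction. A secondary technicality is ensuring that the summand $z_\alpha$ on which $v$ is concentrated is non-abelian; this is implicit in reading the theorem, since on an abelian summand Gelfand characters do furnish valuation functions.
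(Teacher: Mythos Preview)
The paper does not give its own proof of this theorem; it merely quotes the result and cites \cite{Doering2004}. In that reference the argument runs through the generalised Gleason theorem rather than through an embedded copy of $M_3(\bbC)$: one observes (as you do) that a valuation restricts to a finitely additive $\{0,1\}$-valued probability measure on $\PN$, then invokes Thm.~\ref{Thm_GenGleason} to obtain a genuine state on $\cN$ extending it, and finally argues that such a state would be dispersion-free (hence multiplicative), which is impossible on a von Neumann algebra without abelian summand. That route is uniform across types and uses Gleason as a black box; your route is more elementary and closer in spirit to the original Kochen--Specker argument, trading the deep analytic input for a type-by-type construction of matrix units.

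Two technical points in your outline need adjustment. First, you ask for three equivalent orthogonal projections \emph{summing to} $z_\alpha$; this already fails in a type $I_n$ factor with $3\nmid n$. What you actually need is a system of $3\times 3$ matrix units in $z_\alpha\cN z_\alpha$ whose unit $e=e_1+e_2+e_3$ satisfies $v(e)=1$, and this can always be arranged: in type $I_n$ with $n\geq 3$ take the standard abelian matrix units $e_{11},\dots,e_{nn}$, note that exactly one carries $v=1$, and use that one together with any two others and the corresponding off-diagonal partial isometries. Second, the decomposition $1=\sum_\alpha z_\alpha$ over all type summands may be infinite while you only have finite additivity of $v$; the clean fix is to split $1$ into finitely many central pieces (abelian, finite type $I_{\geq 3}$, type $I_\infty$, type $II_1$, type $II_\infty$, type $III$) before locating the one on which $v$ concentrates.

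Your closing observation about the abelian summand is correct and is a genuine lacuna in the statement as printed here: if $\cN$ has a nonzero type $I_1$ (abelian) direct summand $A$, any Gelfand character $\lambda$ on $A$ yields a valuation on all of $\cNsa$ via $v(a\oplus b):=\lambda(a)$, since $\spec{a\oplus b}=\spec{a}\cup\spec{b}$ and $f(a\oplus b)=f(a)\oplus f(b)$. The intended hypothesis, stated explicitly in \cite{Doering2004}, also excludes type $I_1$ summands.
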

The condition that $\cN$ has no summand of type $I_2$ generalises the condition that $\dim\cH\geq 3$ in the original proof of the theorem.

Since we can easily define the spectral presheaf of a von Neumann algebra $\cN$ (simply replace $\BH$ by $\cN$ and $\VH$ by $\VN$ in Def.~\ref{Def_SpecPresh}), we also have the following reformulation of the generalised Kochen-Specker theorem:
\begin{corollary}
Let $\cN$ be a von Neumann algebra with no direct summand of type $I_2$, let $\VN$ be the context category of $\cN$, and let $\Sig$ be its spectral presheaf. The generalised Kochen-Specker theorem is equivalent to the fact that $\Sig$ has no global sections.
\end{corollary}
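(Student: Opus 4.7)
The plan is to establish a natural bijection between global sections of $\Sig$ and valuation functions on $\cNsa$, in strict analogy with the argument already sketched in the $\BH$ case. Once this bijection is set up, the generalised Kochen-Specker theorem---the non-existence of valuation functions $v:\cNsa\ra\bbR$---translates directly into the non-existence of global sections of $\Sig$, and conversely.

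For the forward direction, given a valuation function $v:\cNsa\ra\bbR$, I would define $\gamma_V:=v|_{\Vsa}$ for each $V\in\VN$. Because $v$ satisfies the spectrum rule and the functional composition principle, its restriction to any commutative von Neumann subalgebra $V$ is a multiplicative linear functional of norm $1$, that is, an element of the Gelfand spectrum $\Sig_V$. Compatibility under an inclusion $\tilde V\subset V$ is tautological: $\Sig(i_{\tilde V V})(\gamma_V)=\gamma_V|_{\tilde V}=v|_{\tilde V}=\gamma_{\tilde V}$. Hence $\gamma=(\gamma_V)_{V\in\VN}$ is a global section of $\Sig$.

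For the reverse direction, given a global section $\gamma$ of $\Sig$, I would define $v:\cNsa\ra\bbR$ by $v(a):=\gamma_{V_a}(a)$, where $V_a:=\{a\}''$ is the commutative von Neumann subalgebra of $\cN$ generated by $a$. Well-definedness follows from the compatibility of $\gamma$ under restriction: if $V$ is any other context containing $a$, then $V_a\subseteq V$, so $\gamma_V|_{V_a}=\gamma_{V_a}$ and hence $\gamma_V(a)=\gamma_{V_a}(a)$. The spectrum rule holds because the spectrum of $a$ computed inside the commutative algebra $V_a$ coincides with $\spec{a}$, and every character of $V_a$ takes values in this spectrum. The functional composition principle holds because, for any continuous $f:\bbR\ra\bbR$, the operator $f(a)$ lies in $V_a$ and characters on commutative $C^*$-algebras respect continuous functional calculus, yielding $v(f(a))=\gamma_{V_a}(f(a))=f(\gamma_{V_a}(a))=f(v(a))$.

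Finally, I would verify that these two constructions are mutually inverse, so that global sections of $\Sig$ are in bijection with valuation functions on $\cNsa$. The corollary then follows by invoking the generalised Kochen-Specker theorem under the assumption that $\cN$ has no direct summand of type $I_2$. The only point requiring mild care is well-definedness of $v$ and the verification of the two axioms for a valuation function; everything else is routine book-keeping with the restriction maps of $\Sig$ and the Gelfand representation of commutative von Neumann algebras.
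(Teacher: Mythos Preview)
Your proposal is correct and follows essentially the same approach as the paper: the paper does not give a separate proof for this corollary, but simply notes that the argument already given for $\BH$ carries over verbatim when one replaces $\BH$ by $\cN$ and $\VH$ by $\VN$. Your write-up is in fact more explicit than the paper's, spelling out the bijection between valuation functions and global sections (via $V_a=\{a\}''$) and checking the spectrum rule and functional composition principle carefully, whereas the paper leaves these as understood from the earlier discussion.
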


\section{Gleason's theorem and the probabilistic presheaf}	\label{Sec_Gleason}
\subsection{Gleason's theorem}
Gleason's theorem, proven in 1957 \cite{Gle57}, showed that the Born rule follows from very modest assumptions. Let $\cH$ be a Hilbert space of dimension $3$ or greater. Assume that there is a function assigning probabilities to projection operators,
\begin{align*}
			\mu:\PH &\lra [0,1]\; ,
\end{align*}
such that
\begin{itemize}
	\item [(a)] $\mu(1)=1$,
	\item [(b)]\label{Cond_FinAdd} for all $p,q\in\PH$, if $pq=0$, then $\mu(p+q)=\mu(p)+\mu(q)$.
\end{itemize}
Condition (a) is the obvious \emph{normalisation} condition and (b) is \emph{finite additivity} on mutually orthogonal projections. Clearly, if one aims to have any probabilistic formalism relating to projections (representing propositions about a quantum system), having such a function $\mu$ that assigns probabilities to projections is the minimal and natural requirement. There is a built-in non-contextuality condition: every projection $p$ lies in many different contexts, but $\mu$ assigns just one probability to $p$, independently of contexts.

There is an obvious strengthening of finite additivity to infinite families of mutually orthogonal projections, called \emph{complete additivity}:
\begin{itemize}
	\item [(b')]\label{Cond_ComplAdd} for any family $(p_i)_{i\in I}$ of mutually orthogonal projections (i.e., $p_ip_j=\delta_{ij}p_i$ for all $i,j\in I$), it holds that $\mu(\bjoin_{i\in I}p_i)=\sum_{i\in I}\mu(p_i)$.
\end{itemize}
Note that if the underlying Hilbert space $\cH$ is separable, then the index set $I$ is at most countable.\footnote{There is an obvious notion of \emph{countable additivity}, complete additivity reduces to countable additivity for a separable Hilbert space and to finite additivity for a finite-dimensional Hilbert space.}

Gleason showed the following, partly answering an earlier problem posed by Mackey:

\begin{theorem}	\label{Thm_Gleason}
(Gleason's theorem.) Let $\dim\cH\geq 3$. Given a completely additive probability measure $\mu$ on projections, there always exists a unique positive trace-class operator with trace $1$, written $\rho_\mu$, such that
\begin{align*}
			\forall p\in\PH: \mu(p)=\tr(\rho_\mu p)\; .
\end{align*}
\end{theorem}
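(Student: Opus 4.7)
The plan is to follow Gleason's original strategy, which reduces the statement about the measure $\mu$ on all projections to a statement about the function it induces on rank-1 projections, and then to solve that reduced problem geometrically. First I would recast $\mu$ as a \emph{frame function}: define $f$ on the unit sphere of $\cH$ by $f(\psi) := \mu(P_\psi)$, where $P_\psi$ is the rank-1 projection onto the span of $\psi$. Complete additivity, applied to an orthonormal basis $(e_i)_{i\in I}$, forces $\sum_i f(e_i) = \mu(\bjoin_i P_{e_i}) = \mu(1) = 1$ for every orthonormal basis, so $f$ is a frame function of weight $1$. Conversely, since every projection $p \in \PH$ decomposes as $p = \bjoin_i P_{e_i}$ for some orthonormal basis of its range, complete additivity shows $\mu$ is fully recovered from $f$. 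Hence it suffices to prove that every frame function of weight $1$ is given by $f(\psi) = \langle \psi, \rho_\mu \psi\rangle$ for a unique positive trace-class operator $\rho_\mu$ of trace $1$.

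Next I would establish that $f$ is continuous on the unit sphere. The frame sum condition together with $0 \le f \le 1$ yields uniform control, and a standard argument using orthonormal frames containing vectors close to a given $\psi$ promotes this to actual continuity.

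The main obstacle, and the deep geometric core of the proof, is to show that every continuous frame function on the unit sphere of a real three-dimensional Hilbert space is a quadratic form $f(\psi) = \langle \psi, A\psi\rangle$ for a self-adjoint $A$. This is Gleason's key lemma, proved via a delicate rotational/antipodal argument (or, equivalently, via spherical-harmonic analysis) which uses the condition that $f(e_1)+f(e_2)+f(e_3)$ is constant over all orthonormal triples to rule out all non-quadratic continuous contributions. Once this is in hand, the complex three-dimensional case follows by restricting $f$ to real three-dimensional subspaces through $\psi$ and showing the assembled real-quadratic form is in fact Hermitian sesquilinear (using phase invariance $f(e^{i\theta}\psi)=f(\psi)$ coming from $P_{e^{i\theta}\psi} = P_\psi$).

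Finally I would globalize to arbitrary $\dim\cH \geq 3$. For each three-dimensional subspace $W \subseteq \cH$ the previous step produces an operator $A_W$, and on any overlap $W \cap W'$ these must coincide by the uniqueness part of the three-dimensional result. Consistency across all three-dimensional subspaces yields a single positive bounded sesquilinear form on $\cH$, hence a positive operator $\rho_\mu$ with $f(\psi) = \langle\psi, \rho_\mu \psi\rangle$. Positivity is immediate from $f \geq 0$; applying complete additivity to any orthonormal basis $(e_i)$ gives $\sum_i \langle e_i, \rho_\mu e_i\rangle = 1$, so $\rho_\mu$ is trace-class with $\tr(\rho_\mu) = 1$; and the formula $\mu(p) = \tr(\rho_\mu p)$ extends from rank-$1$ projections to arbitrary $p\in\PH$ by complete additivity and spectral decomposition. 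Uniqueness is then automatic, since any two such operators agree as sesquilinear forms on all of $\cH$.
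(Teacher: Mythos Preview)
The paper does not give its own proof of this statement: Theorem~\ref{Thm_Gleason} is stated as Gleason's classical 1957 result, with a citation to \cite{Gle57}, and the paper proceeds directly to discussion and to the generalisation for von Neumann algebras (Theorem~\ref{Thm_GenGleason}). So there is nothing in the paper to compare your argument against; the paper simply imports the theorem as background.

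Your outline is a faithful summary of Gleason's original strategy and is correct in its architecture: reduce $\mu$ to a frame function on the unit sphere, establish regularity/continuity, prove the hard three-dimensional real lemma that continuous frame functions are quadratic, pass to the complex case, then patch over all three-dimensional subspaces to obtain a global sesquilinear form and read off $\rho_\mu$. One caution: the step ``a standard argument using orthonormal frames containing vectors close to a given $\psi$ promotes this to actual continuity'' understates the difficulty. In Gleason's proof the passage from a merely bounded, nonnegative frame function to a continuous one is itself a substantial piece of work (his regularity argument), not an immediate consequence of the frame condition; the geometric/harmonic-analysis lemma is then applied only after continuity is secured. If you intend this as a proof rather than a sketch, that step needs to be filled in with care. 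Apart from that, your plan matches the classical route and would succeed.
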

In finite dimensions $\rho_\mu$ is nothing but a density matrix. As mentioned in Sec.~\ref{Subsec_Algebra}, this means that every completely additive probability measure on projections determines a unique normal state of $\BH$. Conversely, every normal state, equivalently every positive trace-class operator of trace $1$ (or, in finite dimensions, every density matrix) $\rho$ determines a unique completely additive probability measure by
\begin{align*}
			\mu_\rho:\PH &\lra [0,1]\\
			p &\lmt \tr(\rho p)\; . 
\end{align*}
Obviously, $\mu_{\rho_{\mu}}=\mu$ and $\rho_{\mu_{\rho}}=\rho$. Hence, Gleason's theorem justifies the use of density matrices and the Born rule in quantum mechanics. The condition that the Hilbert space is at least three-dimensional is essential and we will assume $\dim\cH\geq 3$ from now on.

In order to understand the power of Gleason's theorem, note that the definition of a completely additive probability measure $\mu:\PH\ra [0,1]$ only poses conditions on mutually orthogonal, hence, commuting projections.\footnote{In other words, for any family $(p_i)_{i\in I}$ of mutually orthogonal projections condition (b') above is a condition within a context $V$ that contains all the $p_i$, $i\in I$. If there are several contexts that contain all the $p_i$, it does not matter which context we consider, since probabilities are assigned directly to projections, independently of the contexts they lie in.} 

For simplicity, let us assume the Hilbert space $\cH$ is finite-dimensional, $\cH=\bbC^n$. Let $V$ be a context of $\BH=M_n(\bbC)$, the complex $n\times n$-matrix algebra. Let $\{p_1,...,p_m\}$ denote the unique set of minimal projections in $V$. Then the $p_i$ are mutually orthogonal and $V$ is generated by them, $V=\{p_1,...,p_m\}''$. Every self-adjoint operator $a\in \Vsa$ is a unique real linear combination of the $p_i$, that is, $a=\sum_{i=1}^m A_i p_i$. We extend the finitely additive probability measure $\mu$ to a function $\mu:\Vsa\ra\bbR$ by
\begin{align*}
			\forall a\in \Vsa: \mu(a):=\sum_{i=1}^m A_i \mu(p_i)\; .
\end{align*}
This implies directly that if $r\in\bbR$, then $\mu(ra)=r\mu(a)$ and if $a,b\in \Vsa$, then $\mu(a+b)=\mu(a)+\mu(b)$. Hence, $\mu:\Vsa\ra\bbR$ is a real-linear function. Importantly, if $a$ and $b$ do not commute, it is not obvious at all if $\mu(a+b)=\mu(a)+\mu(b)$ holds or not. A finitely additive probability measure $\mu:\PH\ra [0,1]$ gives a function $\mu:\Vsa\ra\bbR$ that is \emph{linear in every context} $V$ in a straightforward way, but it is not clear initially why this function should also be linear across contexts, i.e., on noncommuting operators. Traditionally, a function that is linear on commuting operators is called \emph{quasi-linear}.

Gleason's result shows that there always exists a density matrix $\rho_\mu$ such that $\mu(p)=\tr(\rho_\mu p)$ and clearly, we also have
\begin{align*}
			\forall a\in\Vsa: \mu(a)=\tr(\rho_\mu a)
\end{align*}
due to linearity of the trace. Crucially, the map
\begin{align*}
			\tr(\rho_\mu \_):\BHsa &\lra \bbR\\
			a &\lmt \tr(\rho_\mu a)
\end{align*}
is linear on \emph{all} (self-adjoint) operators,
\begin{align*}
			\forall a,b\in\BHsa : \tr(\rho_\mu(a+b))=\tr(\rho_\mu a)+\tr(\rho_\mu b)\; ,
\end{align*}
which implies that $\mu$ is also linear on all operators. Hence, the quasi-linear function $\mu$ is in fact linear. In this way, Gleason's theorem solves a local-to-global problem, where `local' here means `on commuting operators' (or `within contexts') and global means `on all operators'.

By the efforts of many people, Gleason's theorem has been generalised to von Neumann algebras (see \cite{Mae89} and references therein):

\begin{theorem}	\label{Thm_GenGleason}
(Generalised Gleason's theorem.) Let $\cN$ be a von Neumann algebra with no direct summand of type $I_2$, and let $\mu:\PN\ra [0,1]$ be a finitely additive probability measure on the projections of $\cN$. There exists a unique state $\rho_\mu$ of $\cN$ such that
\begin{align*}
			\forall p\in\PN:\mu(p)=\rho_\mu(p)\; .
\end{align*}
\end{theorem}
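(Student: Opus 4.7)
The plan is to reduce Thm.~\ref{Thm_GenGleason} to the original Gleason theorem (Thm.~\ref{Thm_Gleason}) via the type decomposition of $\cN$, after first promoting $\mu$ to a candidate state, i.e., a positive linear functional on $\cNsa$ of norm $1$. Once such a linear extension $\rho_\mu$ is constructed and shown to agree with $\mu$ on $\PN$, the state axioms $\rho_\mu(1)=1$ and positivity follow immediately from $\mu(1)=1$ and $\mu(\PN)\subseteq[0,1]$, and uniqueness is automatic because $\cNsa$ is the norm closure of the real span of $\PN$ by the spectral theorem (as recalled in Sec.~\ref{Subsec_Algebra}), so any two states that agree on projections agree everywhere.

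The first step is a within-context extension. Fix a context $V\in\VN$. Every $a\in\Vsa$ is a norm limit of simple operators $\sum_{i=1}^m A_i p_i$ with $(p_i)$ a finite family of mutually orthogonal projections in $V$, and finite additivity forces the definition $\tilde\mu_V(a):=\sum_i A_i\mu(p_i)$ on simple operators. A common refinement argument shows this is independent of the chosen simple representation of $a$, and the bound
\[
\left|\sum_i A_i\mu(p_i)\right|\leq \max_i|A_i|\cdot\sum_i\mu(p_i)\leq \|a\|\cdot\mu\left(\bjoin_{i}p_i\right)\leq \|a\|
\]
permits a unique continuous extension of $\tilde\mu_V$ to all of $\Vsa$. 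Real-linearity and positivity of $\tilde\mu_V$ are direct consequences of finite additivity and $\mu(\PN)\subseteq [0,1]$. Since the extensions on different contexts agree on intersections, gluing produces a single \emph{quasi-linear} functional $\tilde\mu:\cNsa\to\bbR$ which is real-linear and positive on each $\Vsa$.

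The heart of the argument is to promote quasi-linearity to genuine linearity, i.e., to show $\tilde\mu(a+b)=\tilde\mu(a)+\tilde\mu(b)$ for all $a,b\in\cNsa$, even when $[a,b]\neq 0$. Decompose $\cN$ along its type as $\cN=\cN_{I_1}\oplus\bigoplus_{n\geq 3}\cN_{I_n}\oplus\cN_{I_\infty}\oplus\cN_{II_1}\oplus\cN_{II_\infty}\oplus\cN_{III}$; the hypothesis rules out precisely a $I_2$ summand. On the abelian $I_1$ summand, linearity is automatic. On each $I_n$ summand with $n\geq 3$, a direct-integral reduction over the centre exhibits $\cN_{I_n}$ as $M_n(\bbC)$-valued functions, and the fibrewise restriction of $\tilde\mu$ is a quasi-linear functional on $M_n(\bbC)$, hence linear by the original Gleason theorem applied in dimension $n\geq 3$; reassembling gives linearity on the whole summand. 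For the $I_\infty$, $II_1$, $II_\infty$ and $III$ summands one invokes the successive extensions of Gleason's theorem due to Christensen, Yeadon, and others, compiled in Maeda's survey~\cite{Mae89}.

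I expect the main obstacle to lie in the type $III$ case, where there are no nonzero finite projections and density matrices are unavailable, so the geometric intuition of the original Gleason argument breaks down. Yeadon's treatment handles this via a delicate analysis of projection equivalences and centralisers, effectively recovering enough type-$I_3$ substructure within the factor to drive a Gleason-style argument. The exclusion of $I_2$ summands is genuinely necessary: the projection lattice of $M_2(\bbC)$ is the Bloch sphere together with $\{0,1\}$, and one can construct finitely additive probability measures on it that do not arise from any state, so this exceptional case must be excluded from the outset rather than absorbed into the argument. Combining the linear extensions over all summands yields the desired state $\rho_\mu$, with uniqueness as noted above.
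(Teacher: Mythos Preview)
The paper does not give a proof of Thm.~\ref{Thm_GenGleason}; it is quoted as a known result with a pointer to Maeda's survey~\cite{Mae89} (``By the efforts of many people, Gleason's theorem has been generalised to von Neumann algebras''). There is therefore no paper-internal argument to compare your proposal against.

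That said, your outline is broadly the strategy one finds in the literature the paper cites: extend $\mu$ quasi-linearly on each commutative subalgebra via the spectral theorem, glue to a quasi-state on $\cNsa$, and then upgrade quasi-linearity to linearity summand by summand along the type decomposition---using the original Gleason theorem (or rather its quasi-linear-implies-linear consequence) on $I_n$ for $n\geq 3$, and the results of Christensen, Yeadon and others collected in~\cite{Mae89} for the infinite and continuous types. Your remarks on uniqueness and on the necessity of excluding $I_2$ are correct. The one place where your sketch is thinner than it looks is the ``direct-integral reduction over the centre'' for $I_n$: getting a well-defined fibrewise measure from $\tilde\mu$ requires some measurability and almost-everywhere arguments that are not automatic, but this is exactly the sort of detail the cited survey handles.
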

Note that here $\rho_\mu:\cN\ra\bbC$ denotes the state itself (i.e., a positive linear functional of norm $1$), while before the state was denoted $\tr(\rho_\mu \_):\BH\ra\bbC$ and $\rho_\mu$ was just the positive trace-class operator (or density matrix). The reason is that the state $\rho_\mu$ need not be normal and hence there may be no density matrix. In fact, $\rho_\mu$ is normal if and only if the probability measure $\mu$ is completely additive.

\subsection{The probabilistic presheaf and reformulation of Gleason's theorem}
In order to relate Gleason's theorem (in its generalised form), Thm. \ref{Thm_GenGleason}, more explicitly to contextuality, we consider a certain presheaf that encodes probability assignments to projections. The obvious definition is:
\begin{definition}\label{def: probabilistic presheaf}
Let $\cN$ be a von Neumann algebra with context category $\VN$. The \emph{probabilistic presheaf $\PPi$} of $\cN$ over $\VN$ is the presheaf given
\begin{itemize}
	\item [(i)] on objects: for all $V\in\VN$, let
	\begin{align*}
				\PPi_V:=\{\mu_V:\PV\ra [0,1] \mid \mu_V\text{ is a finitely additive probability measure}\}\; ,
	\end{align*}
	\item [(ii)] on arrows: for all inclusions $i_{\tilde{V}V}:\tilde{V}\hra V$, let
	\begin{align*}
				\PPi(i_{\tilde{V}V}): \PPi_V &\lra \PPi_{\tilde{V}}\\
				\mu_V &\lmt \mu_V|_{\tilde{V}}\; .
	\end{align*}
	Here, the restriction $\mu_V|_{\tilde{V}}$ of the function $\mu_V:\PV\ra [0,1]$ to $\mc{P}(\tilde{V})\subset\PV$ is simply marginalisation.
\end{itemize}
\end{definition}

Note that this is the simplest possible definition of a presheaf built from finitely additive probability measures (FAPMs) on contexts.
An element $\mu_V\in\PPi_V$ is a FAPM for the projections in $V$, so it only assigns probabilities to projections in $V$, not to all projections (unlike the FAPM $\mu:\PN\ra [0,1]$ in the generalised Gleason's theorem, Thm. \ref{Thm_GenGleason}, which assigns probabilities to all projections in $\cN$).

The probabilistic presheaf $\PPi$ can be seen as a generalisation of the spectral presheaf $\Sig$ in the following way: at each context $V\in\VN$, the component $\Sig_V$ of $\Sig$ is the set of pure states $\ld:V\ra\bbC$, see Def.~\ref{Def_SpecPresh}. In the probabilistic presheaf $\PPi$, on the other hand, the component $\PPi_V$ is given by finitely additive probability measures $\mu_V:\PV\ra [0,1]$. The latter are positive linear functionals on $V$ of norm $1$, i.e., convex linear combinations of elements in $\Sig_V$.
In other words, the elements of $\PPi_V$ correspond to mixed states of $V$, while the elements of $\Sig_V$ correspond to pure states of $V$, equivalently, extreme points of $\PPi_V$.

What about global sections of the probabilistic presheaf? Prima facie, we do not know whether global sections exist or not, but we now show that every quantum state $\rho:\cN\ra\bbC$ gives a global section $\ga_\rho$ of $\PPi$. Define
\begin{align*}
			\forall V\in\VN:\ \ga_\rho(V):=(\rho|_{\PV}:\PV\ra [0,1])\; .
\end{align*}
Here, $\rho|_{\PV}$ is the restriction of the quantum state to the projections in the context $V$. Since $\rho$ is linear, $\rho|_{\PV}$ is a finitely additive probability measure on the projections in $V$. If a projection $p$ is contained in a context $V$ and a subcontext $\tilde{V}\subset V$, then
\begin{align*}
			\ga_\rho(\tilde{V})(p)=\rho|_{\mc{P}(\tilde{V})}(p)=\rho|_{\PV}(p)=\ga_\rho(V)(p)\; ,
\end{align*}
so $\ga_\rho$ is indeed a global section.

Conversely, let $\ga$ be a global section of the probabilistic presheaf $\PPi$. In every context $V\in\VN$ we have a FAPM $\ga(V):\PV\ra [0,1]$ on the projections of $V$. The restriction maps $\PPi(i_{\tilde{V}V})$ guarantee that, whenever a context $\tilde{V}$ is contained in another context $V$, a projection $p$ is assigned the same probability, no matter whether we regard $p$ as a projection in $V$ or in $\tilde{V}$. Hence, a global section $\ga$ of the probabilistic presheaf $\PPi$ gives a finitely additive probability measure $\mu$ on \emph{all} projections in $\cN$. By Gleason's theorem for von Neumann algebras, this determines a unique state $\rho_\gamma$ of the algebra $\cN$, provided $\cN$ has no type $I_2$ summand.\footnote{The latter condition is akin to the condition that the Hilbert space must be at least three-dimensional.}

Before we state Gleason's theorem in its contextual reformulation, we discuss the following slight variation of Def.~\ref{def: probabilistic presheaf}. Note that we may interpret the probability measures $\mu_V \in \PPi_V$ as positive operator-valued measures. In fact, by Gelfand duality every commutative von Neumann algebra $V \in \VN$ corresponds with an (extremely disconnected) compact Hausdorff space, whose $\sigma$-algebra of open (and closed) sets corresponds with the projection lattice $\mc{P}(V)$. Identifying $\mathbb{R}$ with (real-valued) $(1\times 1)$-matrices, $\mu_V$ (trivially) becomes a positive operator-valued measure, and by Naimark's theorem \cite{Naimark1943}, we can find a dilation of the form $\mu_V = v_V^* \varphi_V v_V$, where $v_V: \mathbb{C} \rightarrow \mc{K}$ is a bounded linear map into some Hilbert space $\mc{K}$, (equivalently, $v_V \in \mc{K}$ under scalar multiplication), and $\varphi_V: \mc{P}(V) \rightarrow \mc{P}(\mc{K})$ is an embedding (or spectral measure).

In this reading, we obtain a finitely additive probability measure $\mu_V$ by setting $v_V|_{\tilde{V}} = v_{\tilde{V}}$ and $\varphi_V|_{\tilde{V}} = \varphi_{\tilde{V}}$ whenever $\tilde{V} \subset V$: first, by Dye's theorem \cite{Dye55,BunceWright1993}, the latter defines an orthomorphism $\varphi: \PN \rightarrow \mc{P}(\mc{K})$, which lifts to a unique Jordan $*$-homomorphism $\Phi: \PN \rightarrow \mc{B}(\mc{K})$; second, by an extended version of Gleason's theorem in \cite{BunceWright1992}, $(v_V)_{V \in \VN}$ corresponds with a unique vector $v \in \mc{K}$. Note that in contrast to the marginalisation constraints in the probabilistic presheaf, it is crucial to restrict the dilations $\mu_V = v_V^* \varphi_V v_V$ with respect to both $v_V$ and $\varphi_V$, since there is (at least) a freedom in choosing complex phases $v_V \rightarrow e^{i\alpha}v_V$, $\alpha \in \mathbb{R}$ in every context, which leaves the measures $\mu_V$ invariant, but obscures linearity of $v \in \mc{K}$. Alternatively,
we may choose $v \in \mc{K}$ fixed and only consider global sections arising under restrictions with respect to $\varphi_V$ along context inclusion. This is the approach taken in Def.~\ref{def: dilated probabilistic presheaf} below.

Importantly, collections of dilations over contexts $(\mu_V = v^* \varphi_V v)_{V \in \VN}$ still correspond with quantum states $\rho = v^* \Phi v$. In finite dimensions, the latter is easily recognised as a purification of $\rho$. In particular, restricting to pure states,
we may choose $\mc{K} = \mc{H}$ and $\varphi: \PH \rightarrow \PH$ the identity map such that $|v\rangle \in \mc{H}$ is the pure state corresponding to $\rho(p) = \langle v| p |v\rangle = \tr(|v \rangle \langle v| p)$ for all $p \in \PN$. As always, mixed states correspond to convex combinations of pure states; in this sense, applying Naimark's theorem in contexts amounts to a type of intrinsic convexity condition with respect to the set of pure states. Taking the latter into account, we refine Def.~\ref{def: probabilistic presheaf} as follows.

\begin{definition}\label{def: dilated probabilistic presheaf}
    Let $\cN$
    be a von Neumann algebra with context category $\VN$ and $\mc{K}$ a Hilbert space.
    The \emph{dilated probabilistic presheaf $\PPi: \VN^\mathrm{op} \rightarrow \mathbf{Set}$} of $\cN$ over $\VN$ is the presheaf given
    \item [(i)] on objects: for all $V\in\VN$, let
	\begin{align*}
        \PPi_V &:= \{\mu_V: \mc{P}(V) \rightarrow [0,1] \mid \mu_V = v^* \varphi_V v \text{ for } v \in \mc{K}, \varphi_V: \mc{P}(V) \hookrightarrow \mc{P}(\mc{K}), \text{ and } \mu_V(1) = 1\}\; ,
	\end{align*}
	\item [(ii)] on arrows: for all inclusions $i_{\tilde{V}V}:\tilde{V}\hra V$, let
	\begin{align*}
		\PPi(i_{\tilde{V}V}): \PPi_V &\lra \PPi_{\tilde{V}},\\
		v^* \varphi_V v = \mu_V &\lmt \mu_{\tilde{V}} = v^* \varphi_V|_{\tilde{V}} v\; .
	\end{align*}
\end{definition}

Note that in a slight abuse of notation we denote the dilated probabilistic presheaf $\PPi$ of $\cN$ over $\VN$ by the same symbol as the probabilistic presheaf, $\PPi$ of $\cN$ over $\VN$. The reason is that, by the preceding discussion, both share the same set of global sections, in fact, we have shown the following.

\begin{theorem}	\label{Thm_GenGleasonContextual}
    (Generalised Gleason's theorem in contextual form.) Let $\cN$ be a von Neumann algebra with no direct summand of type $I_2$. There is a bijective correspondence between quantum states, that is, states on $\cN$, and global sections of the (dilated) probabilistic presheaf $\PPi$ over $\VN$.
\end{theorem}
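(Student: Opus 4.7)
The plan is to construct inverse maps between quantum states on $\cN$ and global sections of the dilated probabilistic presheaf. \emph{State to section.} Given a state $\rho:\cN\ra\bbC$, invoke Naimark's dilation theorem (already sketched in the discussion motivating Def.~\ref{def: dilated probabilistic presheaf}) to realise $\rho$ as $\rho(\_) = v^* \Phi(\_) v$ for some Hilbert space $\mc{K}$, a vector $v\in\mc{K}$, and a Jordan $*$-homomorphism $\Phi:\cN\to\mc{B}(\mc{K})$; in finite dimensions this is just a purification. Restricting $\Phi$ to the projections in each context yields embeddings $\varphi_V := \Phi|_{\mc{P}(V)}:\mc{P}(V)\hookrightarrow\mc{P}(\mc{K})$ that are automatically compatible with the inclusions $\tilde V \subset V$. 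The assignment $V \mapsto v^*\varphi_V v$ is then a global section $\ga_\rho$ of $\PPi$, and I would check directly that $\ga_\rho(V) = \rho|_{\mc{P}(V)}$ for each $V\in\VN$.

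\emph{Section to state.} Conversely, let $\ga = (\mu_V)_{V\in\VN}$ be a global section of the dilated probabilistic presheaf, so $\mu_V = v^* \varphi_V v$ with $v \in \mc{K}$ \emph{fixed} and $\varphi_V:\mc{P}(V)\hookrightarrow\mc{P}(\mc{K})$ an embedding. The compatibility condition $\varphi_V|_{\mc{P}(\tilde V)} = \varphi_{\tilde V}$ encoded in the restriction maps forces the collection $(\varphi_V)_{V\in\VN}$ to glue into a single orthomodular map $\varphi:\PN\to\mc{P}(\mc{K})$. Under the hypothesis that $\cN$ has no direct summand of type $I_2$, Dye's theorem in the form used by Bunce and Wright \cite{BunceWright1993} promotes $\varphi$ to a unique normal Jordan $*$-homomorphism $\Phi:\cN\to\mc{B}(\mc{K})$. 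Setting $\rho_\ga(a):=v^*\Phi(a)v$ for $a\in\cN$ produces a positive linear functional with $\rho_\ga(1)=v^*v=1$, i.e., a state on $\cN$. (Equivalently, one could argue via the plain probabilistic presheaf: the section yields a finitely additive probability measure $\mu$ on all of $\PN$, and the generalised Gleason theorem, Thm.~\ref{Thm_GenGleason}, supplies a unique corresponding state.)

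Finally, I would verify that these constructions are mutually inverse. Starting from $\rho$, forming $\ga_\rho$, and reapplying the reverse construction recovers $\rho$ because the uniqueness clauses in Dye's theorem and in Thm.~\ref{Thm_GenGleason} ensure that the reconstructed state agrees with $\rho$ on every projection, hence on all of $\cN$. Starting from $\ga$, the computation $\rho_\ga(p) = v^*\varphi_V(p)v = \mu_V(p)$ for $p\in\mc{P}(V)$ shows that the induced section coincides with $\ga$. The main obstacle, in my view, is the reverse direction: promoting the pointwise orthomodular data $(\varphi_V)_{V\in\VN}$ to a bona fide Jordan $*$-homomorphism on $\cN$ requires Dye's theorem and is precisely where the exclusion of type $I_2$ summands enters, while ensuring that a single vector $v\in\mc{K}$ (rather than a family $(v_V)$ defined only up to contextual phases) serves for every $V$ requires the phase-fixing built into Def.~\ref{def: dilated probabilistic presheaf} together with the extended Gleason-type result of \cite{BunceWright1992}.
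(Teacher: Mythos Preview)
Your proposal is correct and uses the same ingredients as the paper---restriction of states to contexts, the generalised Gleason theorem (Thm.~\ref{Thm_GenGleason}), and the Dye/Bunce--Wright machinery for the dilated version---so the approaches are essentially the same. The only organisational difference is one of emphasis: the paper first establishes the bijection for the \emph{plain} probabilistic presheaf (state $\mapsto$ $\rho|_{\mc{P}(V)}$ in one direction, and a global section glues to a single FAPM on $\PN$ to which Thm.~\ref{Thm_GenGleason} applies in the other), and only then argues via Naimark, Dye, and \cite{BunceWright1992} that the dilated presheaf has the same global sections; you instead work directly in the dilated presheaf and mention the plain-presheaf route as an alternative. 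Both routes are valid and rely on the same theorems, so there is no substantive divergence.
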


This is our reformulation of Gleason's theorem, which connects it explicitly with contextuality. In contrast to the spectral presheaf, the (dilated) probabilistic presheaf does have global sections and they correspond exactly with quantum states. It is remarkable that the very simple definition of the probabilistic presheaf $\PPi$, with FAPMs in every context, connected by the obvious restriction maps in the form of marginalisation, suffices to guarantee this (for single systems).\footnote{In the next section we will see that FAPMs have to be refined to dilations in contexts, as defined in Def.~\ref{def: dilated probabilistic presheaf}, in order to guarantee a similar correspondence also in the multipartite case.}

As usual, the power of the construction lies in the restriction maps (and, of course, Gleason's theorem). In particular, no further local or global data is needed. In physical terms, there is no need for hidden variables. More importantly, there is no room for hidden variables: as soon as a theory assigns probabilities to all projections in dimension $3$ or greater in the obvious way, i.e., finitely additively on orthogonal projections, there exists a quantum state that provides this assignment of probabilities.

\begin{center}
			\textbf{There are no other (finitely additive) assignments of probabilities to projections apart from those given by quantum states.}
\end{center}

\noindent Any hidden variables or other extra data could at best give further restrictions.\footnote{Note that it is key that probabilities are assigned to \emph{all} projections in a finitely additive way. If one considers only a subset of projections, then other assignments of probabilities than those coming from quantum states are possible, e.g. PR-boxes \cite{PRbox}.} It is worthwhile mentioning the case $\cN=\BH$ explicitly. 
\begin{corollary}
    (Gleason's theorem in contextual form.) Let $\BH$ be the algebra of all bounded operators on a Hilbert space $\cH$. If $\dim\cH\geq 3$, then there is a bijective correspondence between quantum states, that is, states on $\BH$, and global sections of the (dilated) probabilistic presheaf $\PPi$ over $\VH$.
\end{corollary}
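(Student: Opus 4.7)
The plan is to derive the corollary as a direct specialisation of Theorem \ref{Thm_GenGleasonContextual} to the algebra $\cN=\BH$. The only substantive thing to check is that $\BH$ satisfies the hypothesis of that theorem, namely that it has no direct summand of type $I_2$. Since $\BH$ is a factor (its centre consists only of scalar multiples of the identity), any direct sum decomposition $\BH\cong A\oplus B$ would force one summand to vanish; thus the whole of $\BH$ would have to be of type $I_2$ for it to contain a type $I_2$ summand. But if $\dim\cH\geq 3$, the algebra $\BH$ is the type $I_n$ factor $M_n(\bbC)$ for $3\leq n<\infty$, or a type $I_\infty$ factor in the infinite-dimensional case, so this possibility is excluded.

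With the hypothesis verified, Theorem \ref{Thm_GenGleasonContextual} supplies the bijection directly: states on $\BH$ are in bijective correspondence with global sections of the (dilated) probabilistic presheaf $\PPi$ over $\mc{V}(\BH)=\VH$. Concretely, given a state $\rho$ on $\BH$ one defines $\gamma_\rho(V):=\rho|_{\mc{P}(V)}$ for each $V\in\VH$; the restriction maps in $\PPi$ are marginalisation, so the compatibility condition $\PPi(i_{\tilde V V})(\gamma_\rho(V))=\gamma_\rho(\tilde V)$ is immediate. Conversely, a global section $\gamma$ assembles, by the compatibility constraints, into a single finitely additive probability measure $\mu_\gamma$ on all of $\PH$, and the generalised Gleason theorem (Thm.~\ref{Thm_GenGleason}) then produces a unique state $\rho_{\gamma}$ on $\BH$ with $\rho_\gamma(p)=\mu_\gamma(p)$ for every $p\in\PH$. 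The two assignments are mutually inverse by construction.

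There is no real obstacle here; the corollary is purely a matter of invoking the already-established theorem on a particular algebra. The only step that requires any argument at all is the verification that the type $I_2$ hypothesis is vacuous for $\BH$ when $\dim\cH\geq 3$, and this follows immediately from the fact that $\BH$ is a factor. The alternative phrasing via the dilated presheaf $\PPi$ of Def.~\ref{def: dilated probabilistic presheaf} carries over without change, since Thm.~\ref{Thm_GenGleasonContextual} already states that the two presheaves share the same set of global sections.
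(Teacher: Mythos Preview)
Your proposal is correct and matches the paper's approach exactly: the paper states this corollary immediately after Thm.~\ref{Thm_GenGleasonContextual} with the remark that ``it is worthwhile mentioning the case $\cN=\BH$ explicitly,'' and gives no separate argument. Your explicit verification that $\BH$ has no type $I_2$ summand when $\dim\cH\geq 3$ (since $\BH$ is a factor of type $I_n$ with $n\geq 3$ or $I_\infty$) is the only nontrivial step, and you handle it correctly.
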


Moreover, one can consider a presheaf $\tPPi$ that is closely related to the (dilated) probabilistic presheaf $\PPi$, but has as component at $V\in\VN$ only completely additive probability measures. It is easy to check that if a von Neumann algebra $\cN$ has no type $I_2$ summand, there is a bijective correspondence between normal states on $\cN$ and global sections of the normal (dilated) probabilistic presheaf $\tPPi$ over $\VN$. As a special case, if $\dim\cH\geq 3$, there is a bijective correspondence between normal states on $\BH$ and global sections of $\tPPi$ over $\VH$.

The fact that Gleason's theorem is closely linked with contextuality in this manner was first observed in \cite{Doering2004}, made more explicit by de Groote \cite{deG07}, and in a form very similar to the one above in \cite{Doe12b}.


\section{Bell's theorem and contextuality}\label{Sec_Bell}

Bell's seminal paper \cite{Bell1964} responds to a long-standing conjecture by Einstein, Podolsky and Rosen (EPR) \cite{EPR}, who claim quantum theory is only a statistical version of a more fundamental theory, similar to the relation between thermodynamics and statistical mechanics. Besides the probabilistic nature of quantum theory, this idea is motivated by certain nonlocal features present in the quantum formalism, believed to be resolved within the more fundamental theory. As a response to EPR's famous thought experiment, Bell formalises EPR's assumption of an underlying space of hidden variables and derives a constraint for the maximal amount of correlations possible in such theories under the additional assumption of locality \cite{Bell1964, BellSpeakable}. However, some quantum mechanically predicted and experimentally verified correlations \cite{AspectEtAl1981, ZeilingerEtAl2015, ShalmEtAl2015} do not obey these constraints and thus cannot be reproduced by any local hidden variable model. We show that, as with the other theorems discussed in this article, the essence of Bell's theorem is naturally encoded in a partial order of contexts and we discuss the relation between contextuality and locality in this setting. The connection between these concepts has been highlighted before \cite{AbramskyBrandenburger2011}, here we extend these results in several ways, in particular, we stress the importance of composition.

We first recall the derivation of Bell's theorem in Sec.~\ref{sec: Correlation in Classical Theories} emphasising the assumption of an underlying single-context state space \cite{Edwards1979}, i.e., with trivial physical contextuality. In Sec.~\ref{sec: Contextuality, Composition and Locality} we show that factorisability is closely linked with composition of (single-context) state spaces. We then consider two further notions of composition, one via contexts and the standard composition in quantum theory, via tensor products. Although standard composition results in a much richer context structure than our notion of composition via contexts, we show that it suffices to consider the dilated probabilistic presheaf over the smaller set of product contexts (to be defined as the \emph{Bell presheaf} below), in order to uniquely single out quantum correlations. This is interpreted as a contextual form of Bell's theorem.



\subsection{Correlations in classical theories}\label{sec: Correlation in Classical Theories}


\subsubsection{Classical state spaces}\label{sec: Classical State Spaces}

We first give an account of what we mean by a \emph{classical} theory. For our purposes it will be enough to consider the kinematics and so we start with a set (soon to be upgraded to an algebra) of observables $\mathcal{O}$. We take as a defining property of a classical theory that all its observables are \emph{simultaneously measurable}, from the perspective of physical contextuality we are thus considering the trivial case of a single context \cite{Edwards1979}.\footnote{We are not considering additional non-commuting operators such as in the Koopman-von Neumann formalism for classical mechanics (see \cite{Morgan2020}, for instance).} Observables $a \in \mathcal{O}$ in classical theories are mathematically represented by measurable\footnote{Here, `measurable' is used in the mathematical sense.} functions $f_a: \Sigma \rightarrow \bbR$ from some measure space $(\Sigma,\sigma,ds)$ to the real numbers. $\Sigma$ is called the \emph{(single-context) state space} of the theory and every microstate $s \in \Sigma$ assigns truth values to propositions of the form $'a \in \Delta'$ (read \emph{`the physical quantity $a$ has a value within the Borel subset $\Delta \in \mathbb{R}$'}):
\begin{equation}\label{eq: truth values}
    \Theta('a \in \Delta',s) := \begin{cases}
        1 \ \mathrm{if} \ s \in f_a^{-1}(\Delta) \\
        0 \ \mathrm{otherwise}
    \end{cases}
\end{equation}
We can therefore speak of the \emph{value of an observable $v_s(a)$ given the state $s \in \Sigma$} in the intuitive sense, i.e., through evaluation of the corresponding measurable function,
\begin{equation}\label{eq: non-contextual value assignment}
    v_s(a) := f_a(s)\; .
\end{equation}
The \emph{valution functions} $v_s: \mathcal{O} \rightarrow \mathbb{R}$ in Eq.~(\ref{eq: non-contextual value assignment}), which we already discussed in connection with the Kochen-Specker theorem in Sec.~\ref{Sec_KS}, are defined on all observables, in other words, every observable has an intrinsic (sharp) value in every state.\footnote{Note that the spectrum rule, $v_s(a) \in \mathrm{sp}(a) = \mathrm{Im}(f_a)$, is trivially satisfied.} The observation that all observables \emph{simultaneously} take deterministic values justifies to model \emph{physical states} by points in some space $\Sigma$ and observables by (measurable) functions $f_a: \mc{O} \rightarrow \mathbb{R}$ in the first place. Of course, this inductive reasoning has to be revisited for non-classical theories, that is, theories with non-trivial physical contextuality. In so doing, we attribute a fundamental role to observables, whereas states appear as a secondary concept. This perspective will become important in Sec.~\ref{sec: General Locality Constraints and Composition}, when we go from single to multiple-context state spaces.

In (classical) physics it is natural to equip the set of observables $\mathcal{O}$ with the structure of an algebra. In fact, by modelling observables as functions we are automatically given a vector space structure as well as a product by pointwise multiplication of functions.\footnote{We include the `trivial' observable $e \in \mc{O}$ represented by the constant function $f_e = 1$. This observable simply asks the question `Is the system there?' and the answer is always `yes'.} It is straightforward to extend the definition of valuation functions in Eq.~(\ref{eq: non-contextual value assignment}) to this algebraic structure, namely, for all $a,b \in \mc{O}$, $r \in \mathbb{R}$ and $s \in \Sigma$ we set
\begin{equation}\label{eq: algebra homomorphism}
    v_s(a \cdot b) := f_a(s) \cdot f_b(s), \quad
    v_s(a + b) := f_a(s) + f_b(s), \quad
    v_s(r a) := r f_a(s)\; .
\end{equation}
In other words, classical states $s \in \Sigma$ correspond to \emph{algebra homomorphisms} $v_s: \mc{O} \rightarrow \mathbb{R}$.

Note that in the presence of physical contextuality this suggests to consider generalised classical states to be valuation functions, that is, \emph{partial} algebra homomorphisms for which Eq.~(\ref{eq: algebra homomorphism}) only holds within (sub)algebras of simultaneously measurable observables, or contexts. In the setting of von Neumann algebras, which we will adopt again in later sections, Eq.~(\ref{eq: algebra homomorphism}) holds as a consequence of the functional composition principle,
\begin{equation}\label{eq: KS FUNC principle}
    v_s(f(a)) = f(v_s(a))\; ,\footnote{Recall from Sec.~\ref{Subsec_KSOld}, that $f: \mathbb{R} \rightarrow \mathbb{R}$ is a continuous function and $a, f(a) \in \mc{N}_\mathrm{sa}$ are self-adjoint operators.}
\end{equation}
whenever $\mc{O}$ is a commutative von Neumann algebra. Yet, as the Kochen-Specker theorem shows, such valuation functions cannot exist under mild and natural conditions. We will see that Bell's theorem assumes a similar reformulation as a no-go-result for such (generalised) classical states, based on the additional assumption of composition.

In the remainder of this section we give a derivation of factorisability and thus Bell's theorem for classical, single-context theories with composition described by the canonical product of state spaces. Given two subsystems with measure spaces $(\Sigma_1,\sigma_1,ds_1)$ and $(\Sigma_2,\sigma_2,ds_2)$, the composite state space is defined as the cartesian product $\Sigma_{1\& 2} := \Sigma_1 \times \Sigma_2$ with product $\sigma$-algebra $\sigma_{1\& 2}$ generated by elements $B_1 \times B_2$, $B_1 \in \sigma_1$, $B_2 \in \sigma_2$, and product measure $ds_{1\& 2} := ds_1 \times ds_2$ satisfying the condition
\begin{equation}\label{eq: product state space}
    (ds_1 \times ds_2)(B_1 \times B_2) = ds_1(B_1) \cdot ds_2(B_2)\; .\footnote{A product measure always exists, it is also unique if the individual measures are $\sigma$-finite. Note also that the cartesian product extends to spaces with more structure, e.g. symplectic manifolds.}
\end{equation}
In a similar way we obtain composite state spaces with multiple subsystems.
Correspondingly, composite observables $a \in \mc{O}$ are represented by measurable functions $f_a: \Sigma \rightarrow \mathbb{R}^n$ on the composite state space $\Sigma = \times_{i=1}^n \Sigma_i$. The algebra $\mc{O}$ of observables of the composite system is generated by the algebras of its subsystems by taking real linear combinations and products (and suitable limits if we consider topologically closed algebras). Clearly, evaluation on elements $s \in \Sigma$ still yields algebra homomorphisms similarly to Eq.~(\ref{eq: algebra homomorphism}), hence, we obtain composite valuation functions $v_s: \mc{O} \rightarrow \mathbb{R}^n$ from the obvious generalisation of Eq.~(\ref{eq: non-contextual value assignment}) to composite observables.

In order to obtain a generalisation of the truth values in Eq.~(\ref{eq: truth values}) it is thus enough to consider tuples $\mathbf{a} = (a_1,\cdots,a_n) \in \mathcal{O}$ with $a_i \in \mathcal{O}_i$ for $i \in \{1,\cdots,n\}$ as well as measurable functions $f_\mathbf{a}: \Sigma \rightarrow \mathbb{R}^n$, $f_\mathbf{a}(s) := (f_{a_1}(s_1),...,f_{a_n}(s_n))$ with $s \in \Sigma = \times_{i=1}^n \Sigma_i$. Namely, we define the truth value of the proposition $'\mathbf{a} \in \Delta'$ with Borel set $\Delta := \times_{i=1}^n \Delta_i$ as follows:
\begin{equation}\label{eq: composite truth values}
    \Theta('\mathbf{a} \in \Delta',s) := \begin{cases}
        1 \ \mathrm{if} \ s \in f_\mathbf{a}^{-1}(\Delta) \\
        0 \ \mathrm{otherwise}
    \end{cases}
    = \begin{cases}
        1 \ \mathrm{if} \ s_i \in f_{a_i}^{-1}(\Delta_i) \ \forall i \\
        0 \ \mathrm{otherwise}
    \end{cases}
    = \prod_{i=1}^n \Theta('a_i \in \Delta_i',s_i)
\end{equation}


\subsubsection{Statistical mixtures and joint probability distributions}\label{sec: Statistical mixtures and joint probability distributions}

Spectrum rule, Eq.~(\ref{eq: non-contextual value assignment}), and functional composition, Eq.~(\ref{eq: KS FUNC principle}), are specific to pure states, mixed states on the other hand are modelled as statistical averages by means of probability distributions over the state space $p: \Sigma \rightarrow \mathbb{R}$,
\begin{equation*}
\int_\Sigma ds \ p(s) = 1, \quad p(s) \geq 0 \ \forall s \in \Sigma\; .
\end{equation*}
The probability for the event corresponding to the Borel set $\Delta \subset \mathbb{R}$ when measuring the observable $a \in \mathcal{O}$ of a system in the mixed state $p$ is given by
\begin{equation}\label{eq: statistical mixtures}
    P(\Delta \mid a) = \int_{\{s \in \Sigma \mid v_s(a) \in \Delta\}} ds \ p(s) = \int_{f_a^{-1}(\Delta)} ds \ p(s) = \int_\Sigma ds \ p(s) \ \Theta('a \in \Delta',s)\; .
\end{equation}
Note that in the last step we have used the indicator function $\Theta('a \in \Delta',s)$ in Eq.~(\ref{eq: truth values}). For instance, the probability for obtaining a particular outcome $A$ corresponds to the Borel set $\Delta_A := \{A\}$.
Analogously, for joint probability distributions on a bipartite system we have with Eq.~(\ref{eq: composite truth values}):
\begin{align}
    P(A,B \mid a,b) &= \int_\Sigma ds \ p(s) \ \Theta('\mathbf{a} \in (\Delta_A,\Delta_B)', s) \nonumber \\
    &= \int_\Sigma ds \ p(s) \ \Theta('{a \in \Delta_A}',s_1) \cdot \Theta('{b \in \Delta_B}',s_2) \label{eq: general factorisability}
\end{align}
Crucially, for classical systems the support of the joint probability distribution splits according to Eq.~(\ref{eq: composite truth values}).
Since furthermore $p(s_1,s_2 \mid \lambda) = p(s_1 \mid \lambda)p(s_2 \mid \lambda)$ over `patches' $S_1 \times S_2 \subseteq \Sigma$ such that $s_1 \in S_1 \subseteq \Sigma_1$, $s_2 \in S_2 \subseteq \Sigma_2$,\footnote{Note that since $P(s_1,s_2) = P(s_1)P(s_2 \mid s_1)$, this is trivially the case for $(s_1,s_2) = \lambda \in \Lambda = \Sigma_1 \times \Sigma_2$.} we may define the effective parameter space $\Lambda$\footnote{The existence of such a set $\Lambda$
is famously postulated by Reichenbach's principle of the common cause.
Conversely, having defined composition of systems in terms of their (classical) state spaces, probability distributions naturally admit such a set $\Lambda$ and moreover decompose according to Eq.~(\ref{eq: Bell locality}).}
as a set of such patches covering $\Sigma$, yielding (continuing from above)
\begin{align}
    &= \int_\Lambda d\lambda \ p(\lambda) \int_{(S_1 \times S_2)(\lambda)} ds \ p(s_1,s_2 \mid \lambda) \ \Theta('{a \in \Delta_A}',s_1) \cdot \Theta('{b \in \Delta_B}',s_2) \nonumber \\
    &= \int_\Lambda d\lambda \ p(\lambda) \int_{S_1(\lambda)} ds_1 \ p(s_1 \mid \lambda) \ \Theta('{a \in \Delta_A}',s_1) \cdot \int_{S_2(\lambda)} ds_2 \ p(s_2 \mid \lambda) \ \Theta('{b \in \Delta_B}',s_2) \nonumber \\
    &= \int_\Lambda d\lambda \ p(\lambda) \ P(A \mid a,\lambda) \cdot P(B \mid b,\lambda)\; , \label{eq: Bell locality}
\end{align}
which is the standard form of factorisability.

In this reading, the splitting of classical joint probability distributions according to factorisability fundamentally stems from the splitting of supports of indicator functions $\Theta('a \in \Delta',s)$, which follows by the existence of local (single-context) state spaces with composition defined by the cartesian product.
We have thus derived Eq.~(\ref{eq: Bell locality}) from essentially two assumptions:
\begin{enumerate}
	\item [(a)] trivial physical contextuality,  i.e., just a single context (in each subsystem) and
	\item [(b)] the cartesian product of state spaces as the state space of the composite system.
\end{enumerate}
Since condition (b) is entirely natural for single-context state spaces, Eq.~(\ref{eq: Bell locality}) can also be read as a consequence of \emph{just} trivial physical contextuality.\\

The locality constraint in factorisable distributions is simply the condition that the joint probability distribution is a statistical average over products of local distributions, which depend on local data (observables and outcomes) only. By modelling the composite system via the cartesian product of state spaces this is automatic---neither choice nor outcome of an observable affect the other factor in the product. Factorisability thus corresponds to composition given by the cartesian product and by the above argument to (trivial) physical contextuality. This argument then suggests an intimate relationship between the following concepts:
\begin{equation*}
    \mathbf{contextuality\  \longrightarrow \ composition\  \longrightarrow \ locality\ }
\end{equation*}
Of course, at this point a relation only exists in the very special case of trivial physical contextuality in classical systems. Nevertheless, in Sec.~\ref{sec: General Locality Constraints and Composition} we will see how these concepts are closely related also in the multiple-context setting.\\

\subsection{Contextuality, composition and locality}\label{sec: Contextuality, Composition and Locality}

Note that the derivation in Sec.~\ref{sec: Statistical mixtures and joint probability distributions} crucially depends on the assumption of underlying classical state spaces with composition defined in terms of the cartesian product. In such systems all observables are simultaneously measurable, which means they are trivial from the perspective of physical contextuality. Clearly, this is not the situation we are facing in quantum theory, where the Kochen-Specker theorem, Thm.~\ref{Thm_KS}, rules out a classical state space picture.

Therefore, shifting perspective from states to observables, in Sec.~\ref{sec: General Locality Constraints and Composition} we will discuss alternative ways to define composition, in particular, we motivate composition of systems based on observables and the order of contexts. We study the implications of this kind of composition based on context structure for the \emph{Bell presheaf}, i.e., the dilated probabilistic presheaf over product contexts in Sec.~\ref{sec:Bell Presheaf and Composition of Contexts}.


\subsubsection{Locality constraints and composition}\label{sec: General Locality Constraints and Composition}

\textbf{Composition via cartesian products of state spaces.} We consider the notion of composition in more detail. Recall that we defined composition of classical systems in terms of their state spaces, namely via the product of the corresponding measure spaces. On the other hand, observables in classical theories are represented by measurable functions and every measurable function on the composite state space can be approximated by suitable limits of linear combinations of indicator functions (cf. Eq.~(\ref{eq: composite truth values})). In this sense, it does not matter whether we define composition in terms of states or observables for classical systems. In fact, if we take classical systems to be given by commutative von Neumann algebras $\cN_i$, $i=1,2$ with corresponding state spaces given by Gelfand spectra $\Sigma_i = \Sigma(\cN_i) \simeq \Gamma(\Sig(\mc{V}(\cN_i)))$,\footnote{Note that the category of \emph{commutative} von Neumann algebras is equivalent to the category of localizable measurable spaces, that is, measurable spaces for which the Boolean algebra of equivalence classes modulo sets of measure zero is complete \cite{Segal_EquivalencesOfMeasureSpaces}.} this equivalence reads,
\begin{equation}\label{eq: equivalence between composition of states and contexts}
    \Sigma_{1\& 2} = \Sigma_1 \times \Sigma_2 \simeq \Gamma(\Sig(\mc{V}(\cN_1))) \times \Gamma(\Sig(\mc{V}(\cN_2))) = \Gamma(\Sig(\mc{V}(\cN_1) \times \mc{V}(\cN_2)))\; .
\end{equation}
Here, the final equality refers to the context product in Eq.~(\ref{eq: product context category}}) below.

By the Kochen-Specker theorem in contextual form, $\Gamma(\Sig(\VN))$ is empty whenever $\cN$ is a (noncommutative) von Neumann algebra (not of type $I_2$). The equivalence in Eq.~(\ref{eq: equivalence between composition of states and contexts}) thus breaks down for such algebras. Nevertheless, composition in terms of state spaces can be carried over to quantum systems if we define the state space of the composite system in terms of convex combinations of elements in the cartesian product of global sections of the probabilistic presheaves of subsystems instead:\footnote{Recall from Thm.~\ref{Thm_GenGleasonContextual} that global sections of the probabilistic presheaf bijectively correspond with quantum states.}
\begin{equation}\label{eq: product on global sections}
    \Gamma_{1\& 2} := \mathrm{Conv}(\Sigma_{1\& 2}), \quad \Sigma_{1\& 2} := \Gamma(\PPi(\mc{N}_1)) \times \Gamma(\PPi(\mc{N}_2))
\end{equation}
Note that $\Gamma_{1\& 2}$ is the set of separable states of the corresponding tensor product quantum system. It is thus easy to show that factorisability holds for \emph{any} systems---with or without local physical contextuality---as long as composition is defined in this way.

\begin{proposition}\label{prop: local physical contextuality implies Bell inequalities}
    Let $\cN_1$, $\cN_2$ be possibly noncommutative von Neumann algebras and let the set of states on the composite system, $\Gamma_{1\& 2}$, be defined according to Eq.~(\ref{eq: product on global sections}). Then all states in $\Gamma_{1\& 2}$ are factorisable and satisfy the Bell inequalities.
\end{proposition}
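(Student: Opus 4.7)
The plan is to unfold the definition of $\Gamma_{1\&2}$ and show that factorisability drops out essentially for free, after which the Bell inequalities follow by the standard argument.

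First, I would use Thm.~\ref{Thm_GenGleasonContextual} (Gleason's theorem in contextual form) to identify each global section in $\Gamma(\PPi(\cN_i))$ with a state $\rho_i$ on $\cN_i$. Hence an element of $\Sigma_{1\&2} = \Gamma(\PPi(\cN_1)) \times \Gamma(\PPi(\cN_2))$ is a pair $(\rho_1,\rho_2)$, and a general element of $\Gamma_{1\&2} = \on{Conv}(\Sigma_{1\&2})$ may be written as a (possibly continuous) convex combination indexed by some parameter space $\Lambda$,
\[
    \rho_{1\&2} \;=\; \int_\Lambda d\lambda\; p(\lambda)\, (\rho_1^\lambda,\rho_2^\lambda)\; ,
\]
corresponding to the separable state $\int_\Lambda d\lambda\; p(\lambda)\, \rho_1^\lambda \otimes \rho_2^\lambda$ on the composite system.

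Next, I would compute joint outcome probabilities on this composite state. Given local observables $a\in\cNsa_1$, $b\in\cNsa_2$, let $p_a^{\Delta_A}\in\mc{P}(\cN_1)$ and $p_b^{\Delta_B}\in\mc{P}(\cN_2)$ be the spectral projections corresponding to Borel sets $\Delta_A,\Delta_B\subset\bbR$. For a product state $\rho_1^\lambda \otimes \rho_2^\lambda$, the joint probability of the events $'a\in\Delta_A'$ and $'b\in\Delta_B'$ is evaluated on the product projection $p_a^{\Delta_A}\otimes p_b^{\Delta_B}$; by definition of the product state (and the standard fact that on product projections a product state acts multiplicatively), this equals $\rho_1^\lambda(p_a^{\Delta_A}) \cdot \rho_2^\lambda(p_b^{\Delta_B}) = P(A\mid a,\lambda)\cdot P(B\mid b,\lambda)$. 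Integrating against $p(\lambda)\, d\lambda$ yields
\[
    P(A,B\mid a,b) \;=\; \int_\Lambda d\lambda\; p(\lambda)\, P(A\mid a,\lambda)\cdot P(B\mid b,\lambda)\; ,
\]
which is precisely the factorisability condition Eq.~(\ref{eq: Bell locality}), with $\Lambda$ playing the role of the hidden variable space.

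Having established factorisability, the Bell inequalities follow by the standard derivation: for any family of local dichotomic observables one can bound the CHSH (or more general Bell) combination of correlators by the corresponding classical value, because each term in the $\lambda$-integrand is a convex combination of deterministic outcomes factorising across the two wings. I would simply cite this as a standard consequence of Eq.~(\ref{eq: Bell locality}) (as done in Sec.~\ref{sec: Statistical mixtures and joint probability distributions}), since the factorised form is exactly what Bell's original argument requires.

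The only genuinely subtle step is the second one: one must justify that evaluating a product state $\rho_1\otimes\rho_2$ on a product projection $p\otimes q$ yields $\rho_1(p)\rho_2(q)$ at the level of the composite von Neumann algebra (and that this is really the joint probability the proposition intends). This is immediate for the spatial tensor product $\cN_1 \overline{\otimes} \cN_2$ but deserves a brief remark when $\cN_1$, $\cN_2$ are abstract, since the definition of $\Gamma_{1\&2}$ in Eq.~(\ref{eq: product on global sections}) uses only the product of global sections and does not a priori specify how a joint measurement is modelled; so I would start by fixing the convention that joint probabilities on separable states are computed by this multiplicative rule, which is forced by the physical meaning of the product state and by the requirement that marginals of $\rho_{1\&2}$ reduce to $\int p(\lambda) \rho_i^\lambda\, d\lambda$.
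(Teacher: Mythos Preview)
Your proposal is correct and follows essentially the same route as the paper: unfold the definition of $\Gamma_{1\&2}$ as convex combinations of pairs of global sections, invoke Gleason's theorem in contextual form to identify each factor with a quantum state, evaluate on product projections to obtain the factorised form, and then cite the standard derivation of the Bell inequalities. The only cosmetic difference is that the paper takes $\Sigma_{1\&2}$ itself as the hidden-variable space (with the convex-combination measure $\mu_\rho$ playing the role of $p(\lambda)\,d\lambda$), whereas you introduce an auxiliary index set $\Lambda$; your final paragraph on the meaning of $\rho(p,q)$ is a point the paper glosses over.
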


\begin{proof}
    Let $\rho \in \Gamma_{1\& 2}$. By definition, $\rho$ is a convex combination of product states $\gamma = (\gamma_1, \gamma_2) \in \Sigma_{1\& 2}$ with $\gamma_1 \in \Gamma(\PPi(\mc{N}_1))$ and $\gamma_2 \in \Gamma(\PPi(\mc{N}_2))$. Hence, there exists a measure $\mu_\rho: \Sigma_{1\& 2} \rightarrow \mathbb{R}^+_0$ such that for all $p \in \mc{P}(\mc{N}_1)$, $q \in \mc{P}(\mc{N}_2)$,
    \begin{equation}\label{eq: cartesian product on states implies factorisability}
        \rho(p,q)
        = \int_{\Sigma_{1\& 2}} d\mu_\rho(\gamma) \ \gamma_1(p) \cdot \gamma_2(q)
        = \int_{\Sigma_{1\& 2}} d\mu_\rho(\gamma) \ \tr(\rho_{\gamma_1} p) \cdot \tr(\rho_{\gamma_2} q)\; .
    \end{equation}
    The last equality follows by Gleason's theorem in contextual form, Thm.~\ref{Thm_GenGleasonContextual}. Eq.~(\ref{eq: cartesian product on states implies factorisability}) is just factorisability and the Bell inequalities thus necessarily hold.
\end{proof}

Clearly, this argument is not restricted to states on von Neumann algebras (density matrices in finite dimensions), but holds for arbitrary locally stochastic models with composition defined by the cartesian product similar to Eq.~(\ref{eq: product on global sections}). Every stochastic, factorisable model thus satisfies the Bell inequalities (cf. \cite{ClauserHorne1974}). Conversely, it is interesting to note that by \cite{Fine1982} the latter is equivalent to the existence of a deterministic local hidden variable model for the composite system. In this sense stochastic, factorisable models such as those with local physical contextuality, yet composition defined via Eq.~(\ref{eq: product on global sections}), still correspond to single-context state spaces.

Succinctly, factorisability is a direct consequence of composition defined in terms of the cartesian product of state spaces. 

\textbf{Composition via contexts.} It is clear from Prop.~\ref{prop: local physical contextuality implies Bell inequalities} and the above argument on stochastic, factorisable models (cf. \cite{Fine1982}) that we cannot use Eq.~(\ref{eq: product on global sections}) to define a suitable notion of composition for quantum systems. Shifting our focus from states to observables and their physical contextuality as encoded in the partial order of contexts, we can distill a second, different notion of composition from Eq.~(\ref{eq: equivalence between composition of states and contexts}) as follows:
\begin{equation}\label{eq: product context category}
    \mc{V}_{1\&2} := (\mc{V}_1 \times \mc{V}_2, \subset_{1\& 2}), \quad (\tilde{V}_1,\tilde{V}_2) \subset_{1\& 2} (V_1,V_2) \ :\Lra \ (\tilde{V}_1 \subset_1 V_1, \tilde{V}_2 \subset_2 V_2)\; .
\end{equation}
This is simply the product in the category of partial orders with composite contexts given by the cartesian product of local contexts and their natural product order $\subset_{1\& 2}$. 

With $\mc{V}_{1\&2}$ as base category we can build the probabilistic presheaf $\PPi(\mc{V}_{1\&2})$. Each component $\PPi_{(V_1,V_2)}$ consists of the finitely additive probability measures $\mu:\mc{P}(V_1 \otimes V_2)\ra[0,1]$ together with the obvious restriction maps. More importantly, we define the \emph{Bell presheaf} as the dilated probabilistic presheaf $\PPi(\mc{V}_{1\&2})$ according to Def.~\ref{def: dilated probabilistic presheaf}.

\begin{definition}\label{def: Bell presheaf}
    Let $\cN_1,\cN_2$ be a von Neumann algebras with context category $\mc{V}(\cN_1)$, $\mc{V}(\cN_2)$, respectively. Then we call the dilated probabilistic presheaf $\PPi(\mc{V}_{1\& 2})$ over the product context category $\mc{V}_{1\& 2} := \mc{V}(\cN_1) \times \mc{V}(\cN_2)$ the
    \emph{Bell presheaf of $\cN_1$ and $\cN_2$}.
\end{definition}

Clearly, there are versions of the Bell presheaf also for multipartite systems. We remark that composition via contexts is the most basic construction when considering contexts of the component systems (and a straighforward generalisation of Eq.~(\ref{eq: equivalence between composition of states and contexts})). We will have to show that this construction
has a meaningful physical interpretation.

\textbf{Composition via tensor products.} Before we explore the consequences of composition defined via contexts for the Bell presheaf, we end this section by mentioning a possible third way of defining composition, which in fact is the standard composition in quantum theory. 
There, the pure state space $\mc{S}(\cH)$ is the projective space corresponding to the Hilbert space $\cH$. Given component systems with Hilbert spaces $\cH_1,\cH_2$, the Hilbert space of the composite system is $\cH_1\otimes\cH_2$, hence,
\begin{equation*}
    \mc{S}_{1\& 2} := \mc{S}(\cH_1 \otimes \cH_2)\; .
\end{equation*}
Note that there are many more contexts for this kind of composition than for composition via contexts described above: the poset $\mc{V}(\cH_1\otimes\cH_2)$ contains many contexts that are not of the form $V_1\otimes V_2$, which are the only contexts available in the poset $\mc{V}_{1\&2}$. There is a functor
\begin{equation*}
			\mc{V}_{1\&2} \lra \mc{V}(\cH_1\otimes\cH_2), \quad \quad (V_1,V_2) \lmt V_1\otimes V_2
\end{equation*}
that is fully faithful, but not surjective on objects. We say that $\mc{V}_{1\&2}$ contains only \emph{product} (or \emph{twisted product}) contexts (cf. \cite{FreDoe19b}).

\subsubsection{The Bell presheaf and reformulation of Bell's theorem}\label{sec:Bell Presheaf and Composition of Contexts}

As we saw in the previous section, composition via contexts gives a much smaller poset of contexts, $\mc{V}_{1\&2}$, than the usual composition via tensor products, which leads to $\mc{V}(\cH_1\otimes\cH_2)$. From Gleason's theorem in contextual form, Thm. \ref{Thm_GenGleasonContextual}, we know that quantum states of the composite system described by $\cH_1\otimes\cH_2$ correspond bijectively with global sections of the probabilistic presheaf $\PPi(\mc{V}(\cH_1\otimes\cH_2))$ if $\dim\cH_1,\dim\cH_2\geq 3$.

Since $\mc{V}(\cH_1\otimes\cH_2)$ is a much richer poset than $\mc{V}_{1\&2}$, there are many more restriction maps in the (dilated) probabilistic presheaf $\PPi(\mc{V}(\cH_1\otimes\cH_2))$ than in the Bell presheaf $\PPi(\mc{V}_{1\&2})$. Each global section of $\PPi(\mc{V}(\cH_1\otimes\cH_2))$, that is, each quantum state, induces a global section of $\PPi(\mc{V}_{1\&2})$, but it is not clear \emph{a priori} whether the converse holds. The Bell presheaf $\PPi(\mc{V}_{1\&2})$ could potentially have many more global sections than those corresponding with quantum states. Remarkably, this is not the case. In order to see this, the following lemma is crucial, for details we refer to \cite{FreDoe19b}. For simplicity, we restrict the presentation to the case $\cN = \BH$ for finite-dimensional Hilbert spaces $\cH$;
we leave the general case (of arbitrary von Neumann algebras) for future work.

\begin{lemma}\label{lm: global sections to Jordan homos}
    Let $\cH_i$, $i=1,2$ be Hilbert spaces $\mathrm{dim}(\cH_i) \geq 3$ finite, $\mc{B}(\mc{H}_i)$ the algebra of physical quantities, and $\widetilde{\mc{V}(\cH_i)}$ the corresponding context categories. Then for every global section $\gamma \in \Gamma(\PPi(\mc{V}_{1\& 2}))$ of the Bell presheaf in Def.~\ref{def: Bell presheaf} there exists a unique linear map $\phi^\gamma: \mc{B}(\cH_1) \rightarrow \mc{B}(\cH_2)$. Moreover, there exists a Hilbert space $\mc{K}$, a linear map $v: \cH_1 \rightarrow \mc{K}$, and a Jordan $*$-homomorphism $\Phi^\gamma: \cJ(\mc{B}(\cH_1)) \rightarrow \cJ(\mc{B}(\mc{K}))$ such that
    \begin{equation*}
        \phi^\gamma = v^* \Phi^\gamma v\; .
    \end{equation*}
\end{lemma}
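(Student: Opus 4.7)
The plan is to obtain $\phi^\gamma$ from $\gamma$ by applying Gleason's theorem in contextual form (Thm.~\ref{Thm_GenGleasonContextual}) on both tensor factors in turn, and to obtain the Jordan $*$-homomorphism $\Phi^\gamma$ from the embedding data built into the \emph{dilated} probabilistic presheaf, using the extended Dye theorem cited in the discussion of Def.~\ref{def: dilated probabilistic presheaf}.

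For the first part, fix $V_1 \in \mc{V}(\cH_1)$ and $p \in \mc{P}(V_1)$. The assignment $q \mapsto \gamma_{(V_1,V_2)}(p\otimes q)$ for $V_2 \in \mc{V}(\cH_2)$ and $q \in \mc{P}(V_2)$ is a positive, finitely additive measure on each $\mc{P}(V_2)$, and the Bell-presheaf restrictions in the \emph{right} argument make these measures fit together into a global section of a subnormalised version of the probabilistic presheaf over $\mc{V}(\cH_2)$. By Thm.~\ref{Thm_GenGleasonContextual} (applied after rescaling) it is represented by a positive operator $\phi^\gamma(p) \in \mc{B}(\cH_2)$ characterised by
\[
    \tr(\phi^\gamma(p)\,q) \; = \; \gamma_{(V_1,V_2)}(p\otimes q) \quad \text{for all } q \in \mc{P}(V_2),\ V_2 \in \mc{V}(\cH_2).
\]
Now fix any linear functional $\sigma$ on $\mc{B}(\cH_2)$ and vary $V_1$ and $p \in \mc{P}(V_1)$. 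The restriction maps of $\gamma$ in the \emph{left} argument imply that $p \mapsto \sigma(\phi^\gamma(p))$ is a bounded, finitely additive measure on each $\mc{P}(V_1)$ fitting together across contexts into a global section of the probabilistic presheaf over $\mc{V}(\cH_1)$; Thm.~\ref{Thm_GenGleasonContextual} then exhibits it as the restriction to projections of a unique linear functional on $\mc{B}(\cH_1)$. Since $\sigma$ was arbitrary, $\phi^\gamma$ extends by duality to a linear map $\mc{B}(\cH_1) \to \mc{B}(\cH_2)$, uniquely determined by the displayed identity.

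For the Jordan $*$-homomorphism, I would exploit that $\gamma$ is a global section of the \emph{dilated} probabilistic presheaf. By Def.~\ref{def: dilated probabilistic presheaf} there is a Hilbert space $\mc{K}$, a vector $v$ (interpreted in the bipartite setup as a linear map with codomain $\cH_2$), and a family of embeddings $\varphi_{(V_1,V_2)}: \mc{P}(V_1\otimes V_2) \hookrightarrow \mc{P}(\mc{K})$ with $\gamma_{(V_1,V_2)} = v^* \varphi_{(V_1,V_2)} v$, all compatible under the product-context restrictions. Setting the right argument to the trivial context $\mathbb{C}\cdot 1$ yields a compatible family of orthomorphisms $\varphi_{V_1}: \mc{P}(V_1) \hookrightarrow \mc{P}(\mc{K})$ indexed by $V_1 \in \mc{V}(\cH_1)$. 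By the extended Dye theorem cited after Def.~\ref{def: dilated probabilistic presheaf} (cf.~\cite{Dye55,BunceWright1993}), such a compatible family lifts uniquely to a Jordan $*$-homomorphism $\Phi^\gamma: \cJ(\mc{B}(\cH_1)) \to \cJ(\mc{B}(\mc{K}))$ extending the $\varphi_{V_1}$. The identity $\phi^\gamma = v^* \Phi^\gamma v$ then holds on projections by construction, and on all of $\mc{B}(\cH_1)$ by linearity of both sides in $a$.

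The main obstacle is the passage from data defined only on product contexts to genuine linearity of $\phi^\gamma$ across non-commuting operators of $\mc{B}(\cH_1)$ (and symmetrically of $\mc{B}(\cH_2)$). Within any single context everything reduces to finite additivity, but crossing contexts is exactly where Gleason's theorem does real work, converting quasi-linearity into linearity; this is why the hypothesis $\dim \cH_i \geq 3$ (excluding the type $I_2$ obstruction) is essential. Beyond this Gleason step and the cited Dye-type lifting, the remaining verifications are bookkeeping: compatibility of restrictions, agreement on projections, and extension by linearity.
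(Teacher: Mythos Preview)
The paper does not actually prove this lemma; it explicitly defers the details to \cite{FreDoe19b}. So there is no in-paper argument to compare against. Your first paragraph---building $\phi^\gamma$ by applying Gleason's theorem in contextual form in each tensor factor---is the natural route and matches the surrounding discussion (the Wallach-type result the paper cites). Modulo the routine decomposition of an arbitrary $\sigma$ into positive pieces before invoking Thm.~\ref{Thm_GenGleasonContextual}, that part is fine.

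The second paragraph has a genuine gap, namely a type mismatch. By restricting to the trivial right context $V_2=\bbC\cdot 1$ you retain only the \emph{scalar} dilation data: the $v$ from Def.~\ref{def: dilated probabilistic presheaf} is a vector in $\mc{K}$, so for $p\in\mc{P}(V_1)$ the expression $v^*\Phi^\gamma(p)v$ is a number, not an element of $\mc{B}(\cH_2)$. What your construction recovers is $\gamma_{(V_1,\bbC\cdot 1)}(p\otimes 1)=\tr(\phi^\gamma(p))$, not $\phi^\gamma(p)$ itself; the claimed identity ``$\phi^\gamma=v^*\Phi^\gamma v$ on projections by construction'' therefore cannot hold. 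The Stinespring-type form in the lemma requires a bounded linear map $v:\cH_2\to\mc{K}$ (compare the paper's own discussion immediately after the lemma, where Stinespring's $v$ has domain $\cH_2$), and producing that operator-valued $v$ together with the intertwining relation $\phi^\gamma=v^*\Phi^\gamma v$ as maps into $\mc{B}(\cH_2)$ needs the full family of bipartite embeddings $\varphi_{(V_1,V_2)}$ with $V_2$ ranging over all of $\mc{V}(\cH_2)$. One must run a Dye/Gleason-type argument in the second variable on the dilations themselves---not merely on the resulting scalar probabilities---to extract $v:\cH_2\to\mc{K}$ and to check that the contextwise embeddings assemble into a single $\Phi^\gamma$ compatible with it. Your sketch discards precisely the $\cH_2$-structure needed for this step.
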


A related result by Wallach \cite{Wallach2000} shows that for finite-dimensional systems, frame functions over `unentangled'\footnote{`Unentangled' here refers to observables, which group into product contexts, and not to states.} bases uniquely correspond with self-adjoint operators and thus almost correspond with quantum states (in the form of density operators). In particular, there also exists a unique positive linear map $\phi^\gamma: \mc{B}(\mc{H}_1) \rightarrow \mc{B}(\mc{H}_2)$, yet generally not of the form in Lm.~\ref{lm: global sections to Jordan homos}. The difference is that \cite{Wallach2000} considers global sections of the \emph{undilated} probabilistic presheaf $\PPi(\mc{V}_{1\& 2})$ in Def.~\ref{def: probabilistic presheaf} (for details, see \cite{FreDoe19b}). In contrast, it is crucial to restrict to the Bell presheaf in Lm.~\ref{lm: global sections to Jordan homos}. In fact, by means of the latter the connection with states can be made precise as follows. In finite dimensions, every state corresponds with a density matrix. By Choi's theorem \cite{Choi1975}, every density matrix on the composite system $\cH_1 \otimes \cH_2$ corresponds with a completely positive, trace-preserving map $\phi: \mc{B}(\cH_1) \rightarrow \mc{B}(\cH_2)$. By Stinespring's theorem \cite{Stinespring1955}, every such completely positive map $\phi$ is of the form $\phi = v^*\Phi v$ with $v: \cH_2 \rightarrow \mc{K}$ a linear map and $\Phi: \mc{B}(\cH_1) \rightarrow \mc{B}(\mc{K})$ a *-homomorphism. In other words, a global section corresponds with a quantum state if and only if the Jordan *-homomorphism in Lm.~\ref{lm: global sections to Jordan homos} lifts to a *-homomorphism.

Not every Jordan *-homomorphism is also *-homomorphism, but it almost is. It turns out that for the special case of $\cN = \BH$ there are exactly two ways to lift a Jordan algebra to a von Neumann algebra: by augmenting the symmetric product (anticommutator) to an associative product $a \circ b = \frac{1}{2}\{a,b\} \pm \frac{1}{2}[a,b]$. Moreover, for every Jordan *-homomorphism $\Phi: \mc{J}(\mc{B}(\cH_1)) \rightarrow \mc{J}(\mc{B}(\mc{K}))$ it holds $\Phi([a,b]) = \pm [\Phi(a),\Phi(b)]$, however, $\Phi$ is only a *-homomorphism if it also preserves the commutator, $\Phi([a,b]) = [\Phi(a),\Phi(b)]$ for all $a,b \in \mc{B}(\cH_1)$ (cf. \cite{Kadison1951}). The sign in front of the commutator of the augmented Jordan algebra can be interpreted as picking out a forward time direction on the corresponding physical system. This can be made precise in the form of time orientations on the context category. (For more details we refer to \cite{FreDoe19b} and specifically \cite{AlfShu01,Doe14}). For our purposes the following notion will be sufficient.

\begin{definition}\label{def: time orientations}
	Let $\cH$ be a Hilbert space, $\BH$ the algebra of physical quantities, and $\mc{V}(\cH)$ the corresponding context category. The canonical time orientation on the context category is the map into the automorphisms on $\mc{V}(\cH)$, denoted by $\mathrm{Aut}(\mc{V}(\cH))$,
	\begin{align*}
	    \psi: \mathbb{R} \times \BHsa &\longrightarrow \mathrm{Aut}(\mc{V}(\cH)) \\
	    (t,a) &\longmapsto e^{ita}Ve^{-ita}\; .
	\end{align*}
	The context category $\VH$ together with the canonical time orientation $\psi$ on it is called the \emph{time-oriented context category} and denoted $\widetilde{\VH} = (\mc{V}(\cH),\psi)$.
\end{definition}

Note that by Thm.~\ref{Thm_WignerContextual}, every order automorphism on $\mc{V}(\cH)$ corresponds to conjugation by a unitary or anti-unitary operator. Since every anti-unitary is composed of the time-reversal operator and a unique unitary operator $e^{ita}$ (as in Def.~\ref{def: time orientations}), the former effectively causes a sign change in the parameter $t \in \mathbb{R}$, which is therefore naturally interpreted as the time parameter. Furthermore, it is straightforward to see that (infinitesimally) this corresponds to a sign change in the commutator (of the associative algebra). By the previous discussion, this corresponds exactly with the two different ways of extending a Jordan algebra of the form $\mc{J}(\BH)$ to a von Neumann algebra. We remark that for general Jordan algebras there are more ways to lift them to von Neumann algebras and thus also more possible time orientations on $\cN$.\footnote{There are corresponding notions of time-oriented presheaves over the context category as well (cf. \cite{Doe14}).} Succinctly,

\begin{center}
    \textbf{Time orientations encode the forward time direction in a quantum system.}
\end{center}

We also need the following definition of orientation-preserving global sections.

\begin{definition}
	Let $\cH_i$, $i=1,2$ be Hilbert spaces, $\mc{B}(\mc{H}_i)$ the algebras of physical quantities on either subsystem, and $\widetilde{\mc{V}(\cH_i)}$ the corresponding context categories with respective canonical time orientations $\psi_i$. A global section of the probabilistic presheaf $\gamma \in \Gamma(\PPi(\mc{V}_{1\& 2}))$ is called \emph{orientation-preserving with respect to $\psi = (\psi_1,\psi_2)$} if
	\begin{equation*}
	    \forall t \in \mathbb{R}, a \in \mc{B}(\mc{H}_1): \quad \Phi^\gamma \circ \psi_1(t,a) = \psi_2(t,\Phi^\gamma(a)) \circ \Phi^\gamma\; ,
	\end{equation*}
	where $\Phi^\gamma$ is the Jordan *-homomorphism in Lm~\ref{lm: global sections to Jordan homos}. The set of orientation-preserving global sections with respect to $\psi = (\psi_1,\psi_2)$ is denoted,
    \begin{equation*}
    	\Gamma(\PPi(\widetilde{\mc{V}_{1\& 2}})) := \{\gamma \in \Gamma(\PPi(\mc{V}_{1\& 2})) \mid \gamma \mathrm{\ is\ orientation-preserving\ with\ respect\ to\ } \psi\}\; ,
    \end{equation*}
    where $\widetilde{\mc{V}_{1\& 2}} = (\mc{V}_{1\& 2},\psi) = (\mc{V}(\cH_1),\psi_1) \times (\mc{V}(\cH_2),\psi_2) = \widetilde{\mc{V}(\cH_1)} \times \widetilde{\mc{V}(\cH_2)}$.
\end{definition}

Finally, we give our contextual reformulation of Bell's theorem.

\begin{theorem}\label{thm: Bell theorem in contextual form}
    (Bell's theorem in contextual form.)
    Let $\cH_i$, $i=1,2$ be Hilbert spaces $\mathrm{dim}(\cH_i) \geq 3$ finite, $\mc{B}(\mc{H}_i)$ the algebra of physical quantities, $\widetilde{\mc{V}(\cH_i)}$ the corresponding context categories with respective canonical time orientations $\psi_i$, and $\mc{S}(\mc{H}_i) \simeq \Gamma(\PPi(\mc{V}(\mc{B}(\mc{H}_i))))$ the corresponding state spaces. Then the state space of the composite system is given by
    \begin{equation*}
        \mc{S}(\cH_1 \otimes \cH_2) \simeq 
        \Gamma(\PPi(\widetilde{\mc{V}(\cH_1)} \times \widetilde{\mc{V}(\cH_2)}))\; .
    \end{equation*}
    Moreover, for $\mc{B}(\mc{H}_i)$ commutative with pure state spaces $\Sigma_i \simeq \Gamma(\Sig(\mc{V}(\mc{B}(\mc{H}_i))))$ one has,
    \begin{equation*}
        \Sigma_1 \times \Sigma_2 \simeq \Gamma(\Sig(\mc{V}(\cH_1) \times \mc{V}(\cH_2)))
    \end{equation*}
\end{theorem}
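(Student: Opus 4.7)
My plan is to establish the first (noncommutative) bijection via two explicit constructions, leveraging Gleason's theorem in contextual form (Thm.~\ref{Thm_GenGleasonContextual}) together with standard purification in one direction and Lm.~\ref{lm: global sections to Jordan homos} together with Stinespring/Choi in the other, and then to obtain the commutative version as a near-immediate Gelfand-spectrum consequence.

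\emph{From quantum states to orientation-preserving global sections.} Given a state $\rho$ on $\cH_1\otimes\cH_2$, I would first apply Gleason's theorem in contextual form to $\mc{B}(\cH_1\otimes\cH_2)$ to obtain a global section of the dilated probabilistic presheaf over $\mc{V}(\cH_1\otimes\cH_2)$, and then pull it back along the fully faithful embedding $\mc{V}_{1\&2}\hookrightarrow\mc{V}(\cH_1\otimes\cH_2)$, $(V_1,V_2)\mapsto V_1\otimes V_2$, to obtain a global section $\gamma_\rho$ of the Bell presheaf $\PPi(\mc{V}_{1\&2})$. The dilations $\mu_{V_1\otimes V_2}=v^*\varphi_{V_1\otimes V_2}v$ come from a purification of $\rho$ in an auxiliary Hilbert space $\mc{K}$; in finite dimensions this is standard. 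Orientation preservation of $\gamma_\rho$ under $(\psi_1,\psi_2)$ should then be automatic, because the Schr\"odinger evolutions $e^{ita}\otimes 1$ and $1\otimes e^{itb}$ act as commuting inner automorphisms of $\mc{B}(\cH_1\otimes\cH_2)$, and the linear map $\phi^{\gamma_\rho}$ produced by Lm.~\ref{lm: global sections to Jordan homos} inherits their compatibility across the two factors.

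\emph{From orientation-preserving global sections to quantum states.} Conversely, for $\gamma\in\Gamma(\PPi(\widetilde{\mc{V}_{1\&2}}))$, Lm.~\ref{lm: global sections to Jordan homos} supplies $\phi^\gamma=v^*\Phi^\gamma v:\mc{B}(\cH_1)\to\mc{B}(\cH_2)$ with $\Phi^\gamma$ a Jordan $*$\-/homomorphism from $\cJ(\mc{B}(\cH_1))$ into $\cJ(\mc{B}(\mc{K}))$. By Prop.~\ref{Prop_JordDecomp} one decomposes $\Phi^\gamma$ as a direct sum of a $*$-homomorphism and a $*$-anti-homomorphism. The orientation-preservation condition imposes $\Phi^\gamma\circ\psi_1(t,a)=\psi_2(t,\Phi^\gamma(a))\circ\Phi^\gamma$ for all $t,a$; differentiating at $t=0$ I would extract $\Phi^\gamma([a,b])=[\Phi^\gamma(a),\Phi^\gamma(b)]$, which combined with preservation of the Jordan product promotes $\Phi^\gamma$ to a genuine $*$-homomorphism. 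Stinespring's theorem then identifies $\phi^\gamma$ as a completely positive map, while normalisation $\mu_{V_1\otimes V_2}(1)=1$ on top product contexts yields trace-preservation. Choi's theorem returns a unique density operator $\rho^\gamma$ on $\cH_1\otimes\cH_2$, and the assignments $\rho\mapsto\gamma_\rho$ and $\gamma\mapsto\rho^\gamma$ should be mutually inverse by uniqueness in Gleason, Stinespring, and Choi.

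\emph{Main obstacle and the commutative case.} I expect the substantive difficulty to be exactly the promotion of $\Phi^\gamma$ from a Jordan $*$-homomorphism to a genuine $*$-homomorphism via the time-orientation data: this is where the Bell presheaf rules out the separable-state assignments of Prop.~\ref{prop: local physical contextuality implies Bell inequalities}, since the discarded $*$-anti-homomorphism summands correspond exactly to the anti-unitary branches which, per Thm.~\ref{Thm_WignerContextual}, Dye's theorem leaves open on each factor; making this precise will require care with the decomposition of Prop.~\ref{Prop_JordDecomp} on non-factorial Jordan images, and with extracting the commutator condition in finite dimensions. The commutative claim is by contrast almost immediate: when $\mc{B}(\cH_i)$ is commutative, every $V\in\mc{V}(\mc{B}(\cH_i))$ sits inside a commutative algebra, so $\Sig$ has plenty of global sections (no Kochen--Specker obstruction), and a global section over the product poset $\mc{V}(\cH_1)\times\mc{V}(\cH_2)$ restricts by compatibility along the projection functors $(V_1,V_2)\mapsto V_i$ to a pair of single-factor global sections, corresponding via Def.~\ref{Def_SpecPresh} to a pair $(\ld_1,\ld_2)\in\Sigma_1\times\Sigma_2$; no time orientation is needed, because commutativity trivialises the Jordan-versus-anti-Jordan distinction.
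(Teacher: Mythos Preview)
Your proposal is correct and follows essentially the same route the paper sketches: the paper does not give a self-contained proof of Thm.~\ref{thm: Bell theorem in contextual form} but explicitly defers it to the companion article \cite{FreDoe19b}, while outlining precisely the strategy you describe---Lm.~\ref{lm: global sections to Jordan homos} to extract $\phi^\gamma=v^*\Phi^\gamma v$ with $\Phi^\gamma$ a Jordan $*$-homomorphism, the observation that $\Phi^\gamma([a,b])=\pm[\Phi^\gamma(a),\Phi^\gamma(b)]$ and that the time-orientation condition fixes the sign so $\Phi^\gamma$ becomes a genuine $*$-homomorphism, and then Stinespring/Choi to recover the density operator. Your identification of the Jordan-to-associative promotion as the crux, and your treatment of the commutative clause via restriction along the projection functors, likewise match the paper's surrounding discussion.
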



Thm.~\ref{thm: Bell theorem in contextual form} classifies state spaces of physical theories in terms of contextuality and context composition. First, it reproduces the known bound on classical correlations in terms of factorisability. As remarked in Sec.~\ref{sec: Correlation in Classical Theories}, the latter is a consequence of composition defined on state spaces, which for single-context theories is equivalent to context composition. What is more, Thm.~\ref{thm: Bell theorem in contextual form} rules out stronger, non-physical correlations such as PR-boxes, which are only (seemingly) allowed if few contexts are considered. In turn, this is a consequence of the order relations underlying composition defined via contexts in Eq.~(\ref{eq: product context category}), which implies  \emph{no-signalling}. Conversely, the latter relates contexts (and probability measures over them) as follows:
\begin{equation}\label{eq: no-signalling context order}
    (\tilde{V}_1,\tilde{V}_2) \subset_\mathrm{ns} (V_1,V_2) \ :\Lra \ (\tilde{V}_1 = V_1, \tilde{V}_2 \subset V_2) \ \mathrm{or} \ (\tilde{V}_1 \subset V_1, \tilde{V}_2 = V_2)
\end{equation}
It is straightforward to see that these conditions coincide with those in Eq.~(\ref{eq: product context category}) if we also demand transitivity. No-signalling together with the fact that each local subsystem possesses a quantum state space by Gleason's theorem (cf. `local quantumness' in \cite{Wehner2010}) therefore suffices to fix the correlations on the composite system, given by global sections of the Bell presheaf $\PPi(\mc{V}_{1\&2})$, to be quantum realisable. In fact, global sections $\Gamma(\PPi(\mc{V}_{1\&2}))$ \emph{exactly} correspond with quantum states in algebras with different time orientations. Since the surrounding arguments are of independent interest, we refer to our article \cite{FreDoe19b} for the proof of Thm. \ref{thm: Bell theorem in contextual form} and many more details.

Thm.~\ref{thm: Bell theorem in contextual form} is our reformulation of Bell's theorem in contextual form. Instead of only providing an upper bound on the amount of correlations that can exist in local hidden variable theories as, e.g. in the formulation in \cite{ClauserHorne1974}, it also shows that quantum correlations are singled out in a natural way by composition of systems via contexts rather than states. Note that this clearly implies that quantum correlations are bounded by the Tsirelson's bound \cite{Tsirelson1980}.


Finally, we note that Thm.~\ref{thm: Bell theorem in contextual form} provides a (first step towards a) well-defined notion of composition in the topos approach to quantum theory. All states of the composite system arise as global sections of the Bell presheaf (over the oriented composite context category). Hence, in order to describe the state space of the composite system, we do not need \emph{all} contexts of the composite system described in terms of the tensor product algebra, but merely product contexts, which arguably are the only ones operationally accessible. In turn, one may wonder whether knowing the state space of the composite system is sufficient to determine the poset of all contexts in the tensor product algebra. We will pursue this line of research elsewhere.

\section{Conclusion and Outlook}	\label{Sec_Conclusion}

In this article, we have shown that important structural theorems of quantum theory---Wigner's theorem, Gleason's theorem, the Kochen-Specker theorem, and Bell's theorem---fundamentally relate to contextuality. This might come as little surprise in the case of the Kochen-Specker theorem, yet other theorems had not been explicitly connected with contextuality before (to the best of our knowledge).

Wigner's theorem can be rephrased in terms of automorphisms on the partially ordered set of contexts, that is, maps preserving the context order, which are implemented by conjugation with unitary or anti-unitary operators. Hence, instead of demanding transition probabilities between pure states to be preserved, one can equivalently demand the order on contexts to be preserved instead.

The Kochen-Specker theorem is equivalent to the fact that the spectral presheaf has no global sections: given a `local' pure state in every context (each such state assigns sharp values to all observables in its context), there is no way of fitting these together in a consistent way. In other words, there are no dispersion-free quantum states. The `fitting together' here refers to the non-contextuality condition asserting that if an observable is contained in different contexts, then the value assigned to it by the different pure states must be the same.

Gleason’s theorem answers a similar local-to-global problem, yet instead of valuation functions, it considers measures on the physical quantities in a quantum system. From the perspective of presheaves, this is easily achieved by replacing pure states with mixed states \emph{locally}, that is, in every context, and by extending the restriction maps (from the spectral preheaf) to all probability measures, which thus become marginalisation constraints. Again, one asks for global sections, that is, probability assignments that are consistent \emph{globally}, or across contexts. In contrast to the spectral presheaf, global sections do exist in the case of the probabilistic presheaf, and by Gleason’s theorem bijectively correspond with quantum states (density matrices in finite dimensions). Gleason’s theorem therefore lifts quasi-linearity of probability measures in contexts to linearity on states.

Finally, Bell’s theorem attains a reformulation over contexts. Here, the crucial insight is the strong connection with composition of subsystems. In classical theories, composition is defined on the level of state spaces and thus in terms of the cartesian product. Bell’s original theorem can be read as a constraint on correlations between theories with states spaces composed in this way. However, from the perspective of physical contextuality, this is only justified for single-context systems. There, composition of states and composition of observables coincide. However, for multiple-context systems this is no longer the case. Shifting perspective from states to observables, one defines composition on the level of the context order instead. The corresponding Bell presheaf contains by far fewer contexts and thus also by far fewer constraints between local probability measures, which prima facie might allow for global sections that do not correspond to quantum states. Nevertheless, by combining several deep results including a generalised version of Gleason’s theorem, and adding the crucial notion of time orientation in local subsystems, this turns out not to be the case: all global sections of the Bell presheaf over the oriented product context category bijectively correspond with quantum states already. In other words, no-signalling and our consistency condition on time orientations in subsystems can be seen as the physical principles that rule out more general non-signalling correlations, thus replacing factorisability (stemming from the cartesian product construction on state spaces) in single-context systems.

Our contextual reformulation of Bell’s theorem—with composition defined on the level of contexts—thus unifies the classical and quantum case, it derives constraints for correlations of both: for the former we obtain the famous Bell inequalities, for the latter we obtain exactly the correlations realised in quantum theory, which rule out more general non-signalling correlations beyond the Tsirelson's bound.

In recent years, contextuality has been recognised more and more as a central feature of quantum theory \cite{IshBut98,Spekkens2005,Liangetal2011,AbramskyBrandenburger2011}. It has been argued that contextuality is a resource for quantum computation. In particular, for the quantum computing architecture known as measurement-based quantum computation (MBQC) it has been shown that contextuality allows to outperform certain non-contextual settings \cite{AndersBrowne2009,Raussendorf2013,deSilva2017,FrembsRobertsBartlett2018}. Moreover, in \cite{Howard2014} it is proven that contextuality is essential for magic state injection in the stabiliser formalism. Many connections to other resources exist, such as entanglement and negativity of the Wigner function \cite{Veitch2012,Delfosse2016}. For MBQC, a classification of contextuality in terms of group cohomology has been given in \cite{Raussendorf2016,BartlettRaussendorf2016}. Similar connections between contextuality and cohomology have also been obtained in \cite{Roumen2016} and within the sheaf theoretic formalism in \cite{AbramskyMansfieldBarbosa2011,BeerOsborne2018,Caru2016}. The latter framework provides a precise threshold for contextual computation by means of the contextual fraction \cite{AbramskyEtAl2015,AbramskyBarbosaMansfield2017,OkayTyhurstRaussendorf2018}.
These and similar results uncover the importance of contextuality for quantum computation, hinting at a universal classification and quantification of contextuality, which could pave the way for developments in future quantum computers.

Our work contributes to this research by substantially extending the scope of contextuality through the unified perspective it attains in the form of presheaves over the partial order of contexts. As we show, contextuality in this form underlies multiple and seemingly unrelated aspects in quantum theory.

\section*{Acknowledgments}

This work is supported through a studentship in the Centre for Doctoral Training on Controlled Quantum Dynamics at Imperial College funded by the EPSRC.

\noindent Note on research data policy: data sharing not applicable – no new data generated.



\section*{References}
\vspace{-1cm}
\bibliographystyle{acm}
\bibliography{bibliography}

\end{document}